\newcommand{\caU}{\mathcal U}
\newcommand{\caG}{\mathcal G}
\newcommand{\caH}{\mathcal H}
\newcommand{\caP}{\mathcal P}
\newcommand{\caE}{\mathcal{E}}
\newcommand{\caX}{\mathcal{X}}
\newcommand{\wc}{{w_c}}
\newcommand{\wq}{{w_q}}
\newcommand{\supp}{\mathrm{supp}}
\newcommand{\bfs}{\mathbf{S}}
\newcommand{\ext}{\mathrm{ext}}
\newcommand{\mqs}{\mathcal{M}}
\newcommand{\dstar}{{d_*}}
\newcommand{\rbleq}{\leq_{\text{rb}}}
\newtheorem{theorem}{Theorem}[section]
\newtheorem{proposition}{Proposition}[section]
\newtheorem{lemma}{Lemma}[section]
\newtheorem{definition}{Definition}[section]
\numberwithin{equation}{section}
\title{
LDPC stabilizer codes as gapped quantum phases:\\
stability under graph-local perturbations
}
\author{Wojciech De Roeck${}^{1}$, Vedika Khemani${}^{2}$, Yaodong Li${}^{2}$, Nicholas O'Dea${}^{2}$, Tibor Rakovszky${}^{2,3,4}$ }
\date{\small $^1$ Instituut voor Theoretische Fysica, KU Leuven, Belgium\\%
    $^2$Department of Physics, Stanford University, Stanford, CA 94305 \\
    $^3$ Department of Theoretical Physics, Institute of Physics,
Budapest University of Technology and Economics,
M\H{u}egyetem rkp. 3., H-1111 Budapest, Hungary\\
$^4$ 
HUN-REN-BME Quantum Error Correcting Codes and Non-equilibrium Phases Research Group,
Budapest University of Technology and Economics,
M\H{u}egyetem rkp. 3., H-1111 Budapest, Hungary\\}
\begin{document}
\captionsetup{width=0.8\textwidth}
\captionsetup{font=footnotesize}

\maketitle

\abstract{
We generalize the proof of stability of topological order, due to Bravyi, Hastings and Michalakis, to stabilizer Hamiltonians corresponding to low-density parity check (LDPC) codes without the restriction of geometric locality in Euclidean space. 
We consider Hamiltonians $H_0$ defined by $[[N,K,d]]$ LDPC codes which obey certain topological quantum order conditions: (i) code distance $d \geq c \log(N)$, implying local indistinguishability of ground states, and (ii) a mild condition on local and global compatibility of ground states; these include good quantum LDPC codes, and the toric code on a hyperbolic lattice, among others. 
We  consider stability under weak perturbations that are quasi-local on the interaction graph defined by $H_0$, and which can be represented as sums of bounded-norm terms. As long as the local perturbation strength is smaller than a finite constant, we show that the perturbed Hamiltonian has well-defined spectral bands originating from the $O(1)$ smallest eigenvalues of $H_0$. The band originating from the smallest eigenvalue has $2^K$ states, is separated from the rest of the spectrum by a finite energy gap, and has exponentially narrow bandwidth $\delta = C N e^{-\Theta(d)}$, which is tighter than the best known bounds even in the Euclidean case.  We also obtain that the new ground state subspace is related to the initial code subspace by a quasi-local unitary, allowing one to relate their physical properties. 
Our proof uses an iterative procedure that performs successive rotations to eliminate non-frustration-free terms in the Hamiltonian. Our results extend to quantum Hamiltonians built from classical LDPC codes, which give rise to stable symmetry-breaking phases. These results show that LDPC codes very generally define stable gapped quantum phases, even in the non-Euclidean setting, initiating a systematic study of such phases of matter.

\tableofcontents

\section{Introduction}

Over the last two decades, ideas from quantum information theory have been transformational in our understanding of quantum phases of matter~\cite{zeng2019quantum}. 
This has been particularly notable in the study of \emph{zero temperature topological order}. Stabilizer codes~\cite{calderbank1996good,steane1996multiple,gottesman1997stabilizer}, the most well-studied examples of error correcting codes, also define gapped and exactly-solvable Hamiltonians which are sums of commuting projectors, such that the ground-state space of the Hamiltonian is the same as the code space. Encoding a number of qubits $K>0$ corresponds to a $2^K$-fold ground-state degeneracy for the corresponding Hamiltonian\footnote{The code rate of a code is $K/N$.};  likewise, features that endow the code with robustness against errors -- such as local indistinguishability of  code states~\cite{knill1997theory} --  also define non-trivial topological order in the ground states.  Kitaev's toric code~\cite{kitaev2003fault} stands as a paradigmatic example of $\mathbb{Z}_2$ topological order~\cite{fradkinshenker1979}, while various other examples such as fracton models~\cite{chamon2005quantum, haah2011local} describe distinct types of quantum order. 

This correspondence between codes and quantum order is powerful because stabilizer Hamiltonians can serve as representative fixed points which capture the properties of an entire \emph{phase} of matter. This was established in the seminal work of Bravyi, Hastings, and Michalakis (BHM) ~\cite{Bravyi_2010} (see also~\cite{Klich_2010, Bravyi_2011, Michalakis_2013}) which showed that zero temperature topological order in gapped, commuting-projector Hamiltonians remains \emph{stable} to \emph{all} sufficiently weak perturbations comprised of a sum of sufficiently local terms (as long as the local perturbation strength is below a constant threshold, which remains finite in the thermodynamic limit). 
More specifically, BHM showed that (i) the gap above the ground state subspace remains finite in the perturbed model, and that (ii) the degeneracy of the ground states is split into a narrow spectral band whose width $\delta$ scales to zero exponentially with increasing system size, so that the ground state degeneracy is robust. Furthermore, (iii) the low energy subspaces of the unperturbed and perturbed Hamiltonians are related by quasi-local unitary transformations, which cannot destroy the  patterns of long-range entanglement characteristic of topologically ordered ground states~\cite{hastings2005quasiadiabatic, chen2010local}. In other words, despite being solvable, stabilizer models are \emph{not} fine-tuned points in parameter space with respect to their physical properties. The results of BHM apply to geometrically local Hamiltonians on $D$-dimensional Euclidean lattices which obey certain
topological quantum order conditions (referred to as TQO-I and TQO-II). Nowadays, the most widely used result on ground state stability is \cite{bravyi2011short,Michalakis_2013,nachtergaele2022quasi3}, which, in contrast to BHM, is not restricted to commuting projector Hamiltonians. That restriction is replaced by local versions of the gap condition and the TQO-conditions. 
However, as it stands also this result is restricted to Euclidean lattices.

Recently, there have been a number of breakthroughs in quantum error correction (QEC) on the topic of low-density parity check (LDPC)  codes defined on \emph{non-Euclidean} expander graphs ~\cite{sipser1996expander,
hastings2021fiber,panteleev2021degenerate,panteleev2021quantum,panteleev2022asymptotically,breuckmann2021balanced,leverrier2022quantum,dinur2023good,lin2022good}. 
The LDPC condition imposes that each stabilizer check acts on only a constant number of qubits, and that every qubit is only part of  a constant number of checks; besides these requirements, completely general interaction graphs are allowed. Thus, the LDPC condition imposes a notion of graph-locality, but otherwise allows for spatially non-local interactions and/or non-Euclidean geometries, which are outside the scope of BHM's results. Of particular interest are ``good" LDPC codes defined on \emph{higher dimensional expanders}~\cite{
hastings2021fiber,panteleev2021degenerate,panteleev2021quantum,panteleev2022asymptotically,breuckmann2021balanced,leverrier2022quantum,dinur2023good,lin2022good},  which exhibit optimal scaling for certain metrics of error correction which can provably not be realized in local, Euclidean geometries~\cite{bpt2010}\footnote{Embedding non-Euclidean LDPC models within Euclidean geometries will necessarily produce non-local and/or non-planar interactions.}. Expanders have the property that the surface area of a region scales proportionally to its volume; notably,  the volume of a ball of radius $r$ on an expander graph can grow \emph{exponentially} with $r$.  In good codes, both $K$ (the number of logical qubits
encoded) and the code distance $d$ (the size of the smallest  operator that distinguishes states in the codespace) are proportional to $N$ (the number of physical
qubits), which is the optimal scaling for the density and robustness of encoded information. Quantum codes that satisfy this property were discovered only recently, after a decades-long quest. 
Moreover, such non-Euclidean geometries are rapidly becoming experimentally accessible in various quantum computing and simulation platforms~\cite{Kollar2019,Periwal2021,Bluvstein2022,LukinLDPC}. 

These developments raise a natural question, central to this work:
\begin{mdframed}
    \emph{Can we associate non-local and/or non-Euclidean  LDPC stabilizer codes with stable phases of quantum matter?}
\end{mdframed}

It is already apparent that non-Euclidean quantum LDPC codes can display novel and interesting properties from the perspective of physics. A subset of the present authors have recently embarked on the program of understanding these codes from a physics perspective, and have shown that these models can be understood as unconventional gauge theories~\cite{rakovszky2023gauge, rakovszky2024product, li2024perturbative}.
In~\cite{freedman2013quantum, Anshu_2023}, it was shown that good quantum LDPC codes can exhibit a particularly striking and strong form of topological order called (called ``No Low-energy Trivial States'' (NLTS)), which crucially relies on the bulk and boundary scaling proportionally on expander graphs. The question remains: are the novel properties of the Hamiltonians defined by these  codes \emph{stable} to weak graph-local perturbations, in analogy with BHM's results for local Euclidean models?  Establishing  stability under perturbations is a necessary first step for associating these codes with stable \emph{phases}.

On one hand, it is physically intuitive to expect all LDPC codes to be stable. They are gapped, and the LDPC condition ensures that some aspects of local physics is retained, including extensivity of energy and generalized Lieb-Robinson bounds on graphs~\cite{hastings2006spectral} (such bounds play an important role in BHM's results). Most importantly, if codes have a code distance $d$ which diverges with $N$, it means that ground states are locally indistinguishable because operators must have support on at least $d$ qubits to distinguish different ground states. Thus, if the perturbation is comprised of graph-local terms of strength $\epsilon <1$, an operator of weight $d$ is only generated at $d$th order in degenerate perturbation theory,  suggesting that ground states get split by an amount $\epsilon^d$ which vanishes as $N\rightarrow\infty$. In line with this intuition, the local indistinguishability condition is precisely the TQO-I condition which enters BHM's rigorous stability proof for local, Euclidean models. A related point is that it is proven that $d \geq c\log(N)$ for a constant $c>0$ is sufficient to ensure that LDPC codes have a finite \emph{decoding threshold}~\cite{Aharonov_1999, pryadko2014} to all local errors, even in non-local and non-Euclidean settings. This is a different notion of stability (and is the property of a non-equilibrium error channel rather than a local Hamiltonian), but this property is also intrinsically tied to local indistinguishability and the topological character of the code space.  In \cite{li2024perturbative}, a subset of us were able to use the existence of a decoding threshold to physically argue for stability of the gap and ground-state degeneracy of LDPC Hamiltonians, but our results were for a limited class of models and perturbations (CSS LDPC codes perturbed by purely $X$ or $Z$ fields), and these arguments do not constitute a rigorous proof. 

Despite these intuitive reasons for optimism, there are various issues which also make the stability analysis  subtle, as recently noted in~\cite{lavasani2024klocal}. Good codes have $K \propto N$, corresponding to an exponentially large ground state degeneracy and an extensive ground state entropy. Thus, within degenerate perturbation theory, even exponentially small matrix elements between these exponentially many states could produce large energy differences, thereby destroying the ground state degeneracy. More physically, stability of the ground state degeneracy would imply a stable violation of the third law of thermodynamics, which is certainly unconventional from the perspective of statistical physics\footnote{This might seem alarming, but we remind the reader that such violations would occur on expander graphs in which bulk and boundary scale proportionally, which is outside the purview of standard treatments in statistical mechanics.}.

In \cite{lavasani2024klocal}, Lavasani \textit{et.\,al.} extended the approach of \cite{Bravyi_2011} to prove the stability of a large class of $k$-local (but geometrically non-local) LDPC codes. Their results significantly extended BHM's work, but fell short of proving stability on  expander graphs in which the volume of a ball of radius $r$ grows exponentially with $r$. This left open the stability question for important examples such as the surface code on hyperbolic tilings~\cite{Breuckmann_2016} and  ``good'' quantum LDPC codes~\cite{dinur2021locally, panteleev2021asymptotically, leverrier2022quantum, dinur2022good}. The reason Ref.~\cite{lavasani2024klocal}'s analysis did not extend to expander graphs was because, following \cite{Bravyi_2011}, the authors used the formalism of quasi-adiabatic continuity in their analysis. This converts local operators to quasi-local operators; However, the bounds on locality (and Lieb-Robinson velocities) were not strong enough to produce operators with exponentially decaying tails to counteract the exponential growth of volumes.



\subsection{Informal statement of conditions and results}
In this work, we provide a definitive affirmative answer to the perturbative stability of \textit{all} quantum LDPC stabilizer codes with a sufficiently large code distance and under reasonable TQO conditions; these conditions are stated precisely in Secs.~\ref{subsec:TQOPauli} and \ref{subsec:TQOword}, but they are a natural generalization of the TQO conditions in BHM: 
\\
\begin{itemize}[labelsep=0em, leftmargin=5.5em, labelwidth=5em, itemindent=0em, align=parleft]
    \item [\textbf{TQO-I}: ] The ground subspace is the codespace of a quantum code with code distance $d\geq c\log(N)$ for constant $c>0$. 

    \item [\textbf{TQO-II}: ] Local ground subspaces are consistent with the global one. 
\end{itemize}
The first statement is a statement of local indistinguishability, and BHM discuss via the example of an ``unstable toric code" why TQO-I is necessary but not sufficient~\cite{Bravyi_2010}.  

Informally, we prove that for such Hamiltonians, all \textit{graph-local perturbations perturb graph-locally} \textit{i.e.} the ground states of the perturbed Hamiltonian and of the unperturbed one are related by quasi-local unitary transformations. In particular, for weak and extensive perturbations comprised of sums of bounded norm interactions with an appropriate local strength smaller than a finite threshold, we show that (i) the spectral gap remains finite in the perturbed model, and (ii) that the ground state splitting is exponentially small in the code distance, scaling as $\delta \leq C N \exp(- c' d)$ for finite constants $C,c' > 0$. This splitting decays to zero in the thermodynamic limit, so the ground state degeneracy and gap remain robust.

When applied to local LDPC codes on Euclidean lattices, our results significantly strength\-en parametric bounds on the ground state splitting relative to those obtained by BHM. As an example, when applied to the $2D$ toric code on an $L\times L$ lattice with $d\propto L$, our results predict the optimal scaling $\delta \leq C L^2\exp{-c' L}$, while BHM obtain a weaker bound $\delta = \textrm{poly}(L)\exp(-c_2 L^{3/8})$ with $c', c_2>0$. We also require a weaker distance bound $d \geq c\log(N)$ relative to BHM who require $d \geq cL$. 

With minimal additional effort, we extend our results to: 
\begin{enumerate}
    \item[(i)] Topologically trivial and local gapped paramagnetic stabilizer models with a unique ground state, on any geometry.
    Indeed, these models belong to the class of models we specify in Sec.~\ref{sec:summary}, with the specification that $K = 0$, and their stability is established by noting that nowhere in our proof do we invoke conditions on $K$.
    
    \item [(ii)] Classical LDPC codes with symmetry-breaking order, under symmetry-respecting perturbations.
    Examples range from perturbed Ising models on any graph to perturbed ``good" classical LDPC codes on expander graphs. In general, a classical code that encodes $K$ bits has a symmetry group $\mathbb{Z}_2^K$ (see \cite{rakovszky2023gauge} for a discussion of this point) which should be respected by the perturbations. 
    We detail this discussion in Sec.~\ref{sec:classical}.

    
\end{enumerate}

The key technical tool enabling our result is an iterative procedure that performs successive rotations to iteratively eliminate non-frustration-free terms in the Hamiltonian (as it is inspired by the iterative procedure in the proof of the KAM theorem ~\cite{kolmogorov1954, moser1962, arnold1963}, we call this procedure ``KAM" for Kolmogorov, Arnold, Moser). We generalize the approach in De Roeck and Sch{\"u}tz~\cite{deroeck2017exponentially}, which considered the stability of non-interacting paramagnetic Hamiltonians comprised of single-site stabilizers, to topologically ordered quantum stabilizer codes. Notably, this approach provides much stronger control over the locality of rotations by tracking the growth of the qubit-support of rotated operators at each stage and  the number of stabilizers excited by rotated operators at each stage (in contrast to only tracking the spatial extent of the operator). We note that there are several alternative (though related) schemes for proving the stability of the gap for many-body systems \cite{yarotsky2006ground,albanese1989spectrum,frohlich2020lie}. 

Our results rigorously establish two cornerstones of zero-temperature many-body physics -- the stability of gapped phases and the idea of quasi-adiabatic continuation -- remain well-defined for large classes of models beyond the setting of geometrically local models in finite-dimensional Euclidean spaces. This opens the door to a plethora of mostly unexplored questions in quantum many-body physics, such as the classification of non-Euclidean and $k$-local (but geometrically non-local) phases, and a systematic study of their robust physical properties. This is particularly topical in light of stunning experimental developments giving us access to tunable interactions and geometries~\cite{Kollar2019,Periwal2021,Bluvstein2022,LukinLDPC}, opening up a wide new frontier for many-body physics.   
.

\section{Models and Summary of Results}
\label{sec:summary}

Before presenting a formal statement of our results in Section~\ref{sec:theorems}, we discuss the types of Hamiltonians and perturbations considered in this work. In Section~\ref{sec:stabilizer_Hamiltonian}, we review a special class of Hamiltonians, stabilizer Hamiltonians, with frustration-free commuting Pauli terms. In Section~\ref{subsec:TQOPauli}, we discuss natural conditions on these Hamiltonians, including the LPDC condition giving a generalized notion of locality. Our main theorems assume these conditions. In Section~\ref{subsec:Zloc_condition}, we define an operator norm that penalizes nonlocality, and we discuss restrictions on the class of perturbations we consider in terms of this norm. 

Section~\ref{sec:theorems} gives the formal statement of our results, including our results on energy gaps and splittings (Theorem~\ref{thm: main}), relationships between perturbed and unperturbed ground states (Theorem~\ref{thm: locality}), and analyticity properties of the ground states (Theorem~\ref{thm: analyticity}).

\subsection{Stabilizer codes, checks, and graphs}\label{sec:stabilizer_Hamiltonian}

We first recall a few standard notions for stabilizer codes.

We are given a set $\Lambda$ of $|\Lambda|\equiv N$ sites, and to each site $x \in \Lambda$ we associate a qubit.
The Hilbert space of all qubits in $\Lambda$ is $\caH_\Lambda \coloneqq \bigotimes_{x\in \Lambda}\caH_x \cong (\mathbb{C}^2)^{\otimes N}$.
We denote by $\caP_\Lambda$ the \textit{Pauli group} supported on $\Lambda$.
Elements of $\caP_\Lambda$ are called \textit{Pauli strings}\footnote{The use of the word ``string" here is historical and does not imply a one dimensional geometry.}, which are operators on $\caH_\Lambda$.
A Pauli string $p \in \caP_\Lambda$ takes the form
\begin{equation}\begin{split}
    p = \bigotimes_{x \in \Lambda} p_x, \quad \text{ where } p_x \in \left\{
    \mathbb{1} = \begin{pmatrix}1 & 0 \\ 0 & 1\end{pmatrix},
    \mathsf{X} = \begin{pmatrix}0 & 1 \\ 1 & 0\end{pmatrix}, 
    \mathsf{Y} = \begin{pmatrix}0 & -i \\ i & 0\end{pmatrix},
    \mathsf{Z} = \begin{pmatrix}1 & 0 \\ 0 & -1\end{pmatrix} \right\}.
\end{split}\end{equation}
The \textit{qubit support of a Pauli string} $p$ is defined as 
\begin{equation}\begin{split}
    \supp(p) = \{x : p_x \neq \mathbb{1}\}.
\end{split}\end{equation}
A stabilizer code on $\Lambda$ is uniquely defined by an abelian subgroup $\caG \subseteq \caP_\Lambda$.
The \textit{code space} $P$ is the subspace of $\mathcal{H}_\Lambda$ defined as follows
\begin{equation}\begin{split}
    P \coloneqq \left\{\ket{\psi} \in \caH_\Lambda : \caG \ket{\psi} = \ket{\psi} \right\}.
\end{split}\end{equation}
We have that $K = \log_2 \dim P = N - \log_2 |\caG|$.
The number $K$ is often referred to as \textit{the number of logical qubits}.
The \textit{code distance} of $\caG$ is defined as 
\begin{equation}\begin{split}
    d = \min \{|{\rm supp}(p) | :  P p P \not \propto P \}.
\end{split}\end{equation}
We are interested in subsets of $\caG$ that generate $\caG$: $\langle \{ C_\alpha \}_{\alpha \in \caE} \rangle = \caG$ where $C_\alpha \in \{C_\alpha\}_{\alpha \in \caE} \subseteq \caG$, and we use the notation $\caE$ for an indexset of Pauli operators in a generating subset. 

We focus on \textit{stabilizer Hamiltonians} of the following commuting-projector form
\begin{equation}\label{eq:hamdef}
    H_0 = \sum_{\alpha \in \caE} E_\alpha, \quad \text{ where } E_\alpha = (1-C_\alpha)/2 
\end{equation}
Here, each $C_\alpha \in \{C_\alpha\}_{\alpha \in \caE} \subseteq \caG$ is called a \textit{check}, and we henceforth define $N_c \coloneqq |\caE|$. Given $\caG$, the choice of $\caE$ is equivalent to the choice of Hamiltonian $H_0$. We note in passing that different choices of $\caE$ for a given $\caG$ can have different stability properties, and the dependence on $\caE$ is reflected in the TQO-II condition given in Section~\ref{subsec:TQOPauli}.

In other words, the checks generate the stabilizer group.
Therefore,
\begin{equation}\begin{split}
    \prod_{\alpha \in \caE} G_\alpha = P, \quad \text{ where } G_\alpha = (1+C_\alpha)/2.
\end{split}\end{equation}
Here we use $P$ to refer to both the codespace and the projector onto this space. 
The ground state sector of $H_0$ corresponds to all checks being satisfied, which is identical to $P$.

We define the \textit{check support of a qubit $x$} as follows
\begin{equation}
\label{eq:check_supp_of_qubit}
    \supp_c(x) \coloneqq \{ \alpha: x \in \supp(C_\alpha) \}.
\end{equation}
We similarly define \textit{check support of a Pauli string $p$} as
\begin{equation}
\label{eq:check_supp_of_pauli}
    \supp_c(p) \coloneqq \bigcup_{x \in \supp(p)} \supp_c(x).
\end{equation}


The checks define a graph structure on $\Lambda$.
We introduce an edge $(x,y)$ for $x,y\in \Lambda$ if $x \in \supp(C_\alpha)$ and $y\in \supp(C_\alpha)$ for some check $C_\alpha$.
The distance between any two qubits is the usual graph distance.
We similarly define a graph structure on $\caE$ with the checks as the vertices, and we introduce an edge $(C_\alpha, C_{\alpha'})$ if $x \in \supp(C_\alpha)$ and $x \in \supp(C_{\alpha'})$ for some qubit $x$.

Let $S \subseteq \caE$.
We define the qubit support of $S$ to be
\begin{equation}
\label{eq:def_qubit_support_of_S}
    \supp(S) \coloneqq \bigcup_{\alpha \in S} \supp(C_\alpha).
\end{equation}
We define the $\caG_S \subseteq \caG$ as the subgroup generated by checks in $S$, 
\begin{equation}
\label{eq:def_G_S}
    \caG_S \coloneqq \langle \{C_\alpha\}_{\alpha \in S} \rangle.
\end{equation}
Meanwhile, let $\caG(S) \subseteq \caG$ be the following subgroup,
\begin{equation}
    \caG(S) \coloneqq \{p \in \caG : \supp(p) \subseteq \supp(S) \}.
\end{equation}
We have hence $\caG_S \subseteq \caG(S)$ but the inclusion is in general strict.
For example, consider an annulus-shaped region of qubits $A$ in the 2D toric code, and take $S$ to be the set of all plaquette checks that overlap with $A$.
The group $\caG(S)$ contains a loop operator which is the product of all plaquettes enclosed by $A$, but this operator is not within $\caG_S$. 

In the remainder of this paper, we always use Greek letters ($\alpha,  \beta, \ldots$) to refer to checks, and Latin letters ($x, y, \ldots$) to refer to qubits.


\subsection{Conditions on stable unperturbed stabilizer Hamiltonians}
\label{subsec:TQOPauli}

Our main result concerns the stability of  Hamiltonians $H_0$ associated with stabilizer codes, according to Eq.~\eqref{eq:hamdef}.   We are  interested in families of codes with growing $N$.
When discussing statistical mechanics in Euclidean spaces, the limit $N \to \infty$ is called the \textit{thermodynamic limit}, and this limit is essential for defining a sharp notion of phases and phase transitions. For codes on general graphs, it may not be possible to smoothly change $N$ (in units of 1) while maintaining local properties such as graph degree. Nevertheless, to associate codes to many-body phases, we have in mind families of models in which $N$ can be made arbitrarily large, even if the increase in $N$ does not happen in regular steps of 1.


With these definitions, we are now ready to state the conditions on unperturbed stabilizer Hamiltonians $H_0$ which obey the stability theorems in Sec.~\ref{sec:theorems}.

\begin{itemize}[labelsep=0em, leftmargin=5.5em, labelwidth=5em, itemindent=0em, align=parleft, itemsep = 2em]
\item[\textbf{LDPC}]
We say the code family is LDPC if the following conditions hold on the qubit-weight $w_q$ and the check-weight $w_c$: 
\begin{equation}\begin{split}
    w_q \equiv&\, \max_{x \in \Lambda} |\supp_c(x)| = \Theta(1), \\
    w_c \equiv&\, \max_{\alpha \in \caE} |\supp(C_\alpha)| = \Theta(1).
\end{split}\end{equation}
Throughout, we will use $\Theta(1)$ to mean asymptotically upper and lower bounded by constants as $N \to \infty$. 
It follows from the definition of the graphs $\Lambda$ and $\caE$ that their graph degrees are both upper bounded by $w_c w_q$. 
The \textbf{LDPC} condition guarantees that the energy density of $H_0$ for all states in $\caH_\Lambda$ is upper bounded by a constant. If we additionally impose that each qubit is part of at least one check (so that no qubits are ``idle''), then the LDPC condition guarantees that $N_c \propto N$ corresponding to an extensive bandwidth $\propto N$ for the spectrum of $H_0$, analogous to conventional thermodynamics.

\item[\textbf{Growth of balls}]
Let $B_r(x)$ be the set of all sites within a distance $r$ of $x$ in the graph $\Lambda$.
Similarly, let $B_r(\alpha)$ be all checks within a distance $r$ of $\alpha$ in the graph $\caE$.
We consider Hamiltonians for which there exists $\kappa = \Theta(1)$ such that 
\begin{equation} \label{eq:def_kappa}
\begin{split}
    \forall x \in \Lambda, &\ | B_r({x})| \leq e^{\kappa r}; \\
    \forall \alpha \in \caE, &\ |B_r(\alpha)| \leq e^{\kappa r}.
\end{split}
\end{equation}
We note that \textbf{Growth of balls} follows from \textbf{LDPC} with any $\kappa \geq \log(w_c w_q+1)$. 

Note that on $D$-dimensional local Euclidean graphs, $B_r \propto r^D$; so our condition allows for much stronger growth of volumes, thereby encompassing a large family of graphs including the expanders. 

\item[\textbf{TQO-I}]
For every set of qubits $A \subseteq \Lambda$ with $|A| < d$ and for every Pauli string $p$ such that $\supp(p) \subseteq A$, we have
\begin{equation}\begin{split}
\forall g \in \caG: [p,g]=0 \qquad \Rightarrow \qquad p \in \caG.
\end{split}\end{equation}
In other words, there are no logicals supported on $A$.
For stabilizer codes, \textbf{TQO-I} follows directly from the definition of the code distance $d$. 


\item[{\textbf{TQO-II}}] 
For every connected set $S$ in the check graph $\caE$, if $|S| < \widetilde{d} = \Theta(d)$, then we have 
\begin{equation}\label{eq:TQO2}\begin{split}
\caG(S) \subseteq \caG_{B_{\ell|S|}(S)}
\end{split}\end{equation}
where $\ell = \Theta(1)$, and
$B_{\ell|S|}(S)=\{\alpha : \mathrm{dist}(\alpha,S)\leq \ell |S|\}$. 
\end{itemize}
We define $\dstar := \min(\frac{d}{w_c}, \widetilde{d}) = \Theta(d)$. The theorems in Section~\ref{sec:theorems} are meaningful i.e. they give a stable gap and ground state degeneracy if $\dstar \geq c \log(N)$ for $c>0$. 

For local stabilizer codes on Euclidean geometries, our \textbf{TQO-II} condition is closely related
to the TQO-II condition of BHM in their Lemma 2.1 \cite{Bravyi_2010}. Therefore, we expect our conditions to hold for a wide class of codes in Euclidean spaces. In App.~\ref{app:TQO} we argue that this condition indeed holds for various important examples of non-Euclidean qLDPC codes, in particular to known constructions of good quantum LDPC codes and to hypergraph products of good classical LDPC codes. We further note that the condition \textbf{TQO-II} here is a very mild one; we are able to use such a mild condition because of how we track our rotated perturbations throughout the proof. We note that the hyperbolic toric code satisfies both \textbf{TQO-I} and \textbf{TQO-II} with $\ell = \Theta(1)$ and $\dstar = \Theta(\log(n))$, and so we are able to prove stability of the hyperbolic toric code.

In Figure~\ref{fig: graph_props}, we use the familiar example of the vanilla toric code to illustrate some of our notation, particularly the important quantities $\wc$ and $\wq$ that encode the LDPC nature of the Hamiltonian.
\begin{figure}[h]
\centering
\includegraphics[width=.9\linewidth]{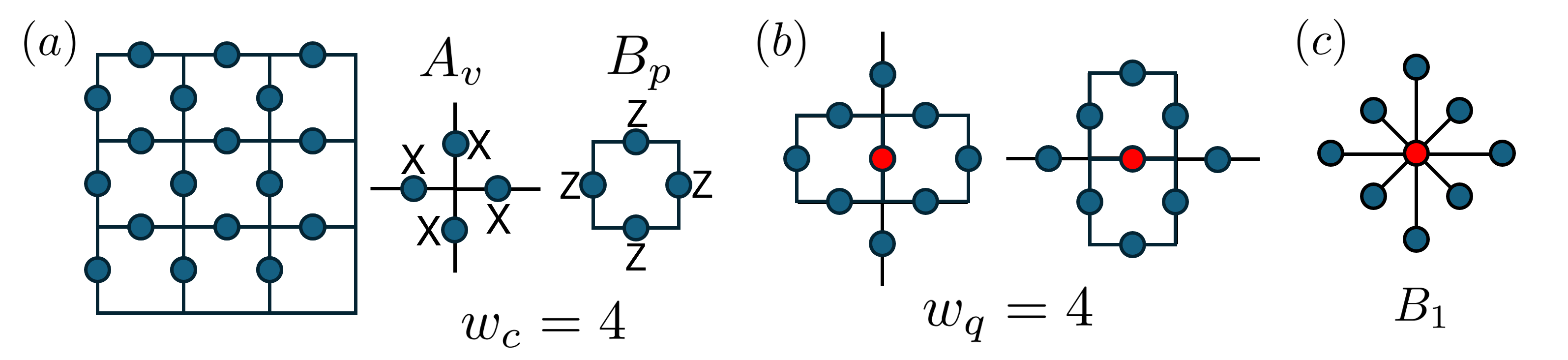}
 \caption{Examples of a few geometric quantities illustrated via the vanilla toric code. (a) The toric code's stabilizer group $\mathcal{G}$ is generated by the set $\{A_v\}_v \cup \{B_p\}_p$ of $\mathsf{X}$-stars and $\mathsf{Z}$-plaquettes. In this figure, we consider $\caE$ (see equation~\ref{eq:hamdef} and surrounding discussion) such that $\{C_\alpha\}_{\alpha \in \caE} = \{A_v\}_v \cup \{B_p\}_p$. This is the usual set of checks of the toric code. For this $\caE$, the maximum qubit support of any check is $\qc=4$, since each check contains exactly four qubits. (b) With this choice of $\caE$, the maximum check support of the qubits is $\wq=4$, since a given qubit (red) touches exactly four stabilizers (two stars and two plaquettes). (c) The checks define a graph structure on the qubits, where two qubits are connected by an edge if they are both part of the same check. Pictured is a ball of radius $1$ surrounding the red qubit (at site, say, $x$), which we denote $B_1(x)$.}
\label{fig: graph_props}
\end{figure}

\subsection{Conditions on allowed perturbations}
\label{subsec:Zloc_condition}

We will consider the stability of Hamiltonians $H_0$ obeying the conditions in Section~\ref{subsec:TQOPauli} to extensive perturbations $Z^{(0)}$ whose energy bandwidth $\propto N$. In the simplest case, $Z^{(0)}$ is a sum of $\Theta(N)$ strictly graph-local operators whose support is bounded within balls of finite size on the graph. However, we allow for more general quasi-local perturbations subject to the norm below; informally, perturbations that are local enough and small enough will suffice.

To state the class of perturbations we can handle, we will use a norm that penalizes nonlocal operators. Given the (unique) decomposition of an operator $O = \sum_{p} c_p p$ into the Pauli basis, we define an \textbf{intensive Pauli norm} $||| \cdot |||_{\mu}$:
\begin{equation}\label{eq:paulinorm}
    |||O|||_{\mu} = \sup_{x} \sum_{p: x\in \supp(p)} |c_p| e^{\mu |\mqs(p)|} 
\end{equation}
where $\mqs(p)$ is a connected set of qubits that covers the support of $p$, $\supp(p) \subset \mqs(p)$, with $|\mqs(p)| =\min_{\mathcal{A}: \, \supp(p) \subset \mathcal{A},\,\mathcal{A} \text{ connected}} |\mathcal{A}|$.

The supremum over $x$, and more particularly the restriction that $\supp(p)$ must contain $x$, means that this norm can and will be \textit{intensive} for the \textit{extensive} perturbations we consider\footnote{As an example, if $O$ is a field $h \sum_{x} \mathsf{X}_x$, then $|||O|||_{\mu} = h e^{\mu}$. If $O$ consists of a single Pauli string with support on all sites, such as $O = \prod_{x} \mathsf{X}_x$, then $|||O|||_{\mu} = e^{\mu N}$. Having a small $|||O|||_{\mu}$ requires that $O$ cannot contain too much weight on Pauli strings with large support, and requiring small $|||O|||_{\mu}$ for larger values of $\mu$ is more stringent.}.

To have control of the perturbation $Z^{(0)}$ for all system sizes $N$, we require that 
\begin{equation}\label{eq:mu0_norm_Z0}
    |||Z^{(0)}|||_{\widetilde{\mu}_{0}} = \Theta(1)
\end{equation}
for some $\widetilde{\mu}_0 > \mu_{\#}$, where 
\begin{equation}
    \mu_{\#} := w_q \mu_* + (w_q \log(2) + \kappa + \log(2 w_c))
\end{equation}
and, letting $c$ be the maximal number such that $d_\ast \geq c \log(N)$ for all system sizes $N$,
\begin{equation}\label{eq:def_mu_star}
    \mu_{*} := \max(\kappa \ell + \log(4)(\wc+1), c^{-1}).
\end{equation}
Recall that $\kappa$ is the geometric constant appearing in \eqref{eq:def_kappa}, and $\ell$ is the constant appearing in \textbf{TQO-II} and $\dstar$ appears underneath \textbf{TQO-II}, c.f. Sec.~\ref{subsec:TQOPauli}.
For all $\widetilde{\mu}_0 > \mu_{\#}$, there is a threshold strength $\widetilde{\epsilon} = \Theta(1)$ that is a function of $\widetilde{\mu}_0, \mu_{\#}$ and the geometrical constants $\wq, \wc, \kappa$, but not the system size $N$, c.f. Section~\ref{sec:running_couplings}. We further require 
\begin{equation}\label{eq:mu0_norm_Z0_eps0}
    \widetilde{\epsilon}_0 := |||Z^{(0)}|||_{\widetilde{\mu}_{0}} \leq \widetilde{\epsilon}.
\end{equation}

Roughly speaking, $\widetilde{\mu}_0$ controls the locality of $Z^{(0)}$, i.e. how fast the perturbation terms decay with increasing operator weight, while $\widetilde{\epsilon}_0$ controls the overall local strength of the perturbation. We require $\widetilde{\mu}_0 > \mu_{\#}$ to ensure that the perturbation terms are decaying sufficiently rapidly. If the perturbation $Z^{(0)}$ satisfies $\eqref{eq:mu0_norm_Z0}$, then multiplying $Z^{(0)}$ by a sufficiently small, system-size independent constant will make it satisfy \eqref{eq:mu0_norm_Z0_eps0}.

Many perturbations satisfy $\eqref{eq:mu0_norm_Z0}$. 
Extensive and strictly graph-local $Z^{(0)}$ (i.e. a sum of terms supported within finite-sized balls and with uniformly bounded operator norms) satisfy \eqref{eq:mu0_norm_Z0}. More generally, quasilocal perturbations with sufficiently short ``tails" will satisfy \eqref{eq:mu0_norm_Z0}. Note that the definition of locality relevant here is locality on the graphs $\Lambda$ and  $\caE$, defined directly in terms of the unperturbed Hamiltonian (which defines a notion of distance between qubits). Thus, we do not allow for perturbations by some other arbitrary LDPC code for example, but only those whose structure is compatible with that of $H_0$. One context where this arises naturally is if $H_0$ is defined in terms of some fixed underlying graph (such as many of the known constructions of good qLDPC codes, such as quantum Tanner codes and their relatives~\cite{leverrier2022quantum,panteleev2022asymptotically,dinur2022good}) and the perturbations are required to also be local on the same graph.



\subsection{Statement of main theorems}\label{sec:theorems}

In what follows, we will use the generic notation $c,C$ for constants that can depend on the $\Theta(1)$-model parameters, i.e.\ on $w_q,w_c,\kappa, \ell$. We use $c$ whenever we want to stress that $c>0$ and $C$ whenever we want to stress that $C < \infty$. 
The precise value of $c,C$ can change from line to line. Whenever we allow constants to depend on some additional parameters, we indicate this in the notation.   We are now ready to state our main results

\begin{theorem}\label{thm: main}
Suppose $H_0$ obeys the conditions of Section \ref{subsec:TQOPauli} and $Z^{(0)}$ obeys the conditions of Section \ref{subsec:Zloc_condition}.  Constants $\widetilde{\epsilon}_0, \widetilde{\epsilon}, \widetilde{\mu}_0, \mu_{\#}$ are defined in Section \ref{subsec:Zloc_condition} and obey $\widetilde{\mu}_0 > \mu_{\#}$ and $\widetilde{\epsilon}_0 \leq \widetilde{\epsilon}$. $N_c$ is defined in Section \ref{sec:stabilizer_Hamiltonian} and $\dstar$ is defined in Section \ref{subsec:TQOPauli}. 

Then, there are constants $b$,  $\mu_{\infty}>\mu_*$,  and $\epsilon_0 \leq \widetilde{\epsilon}_0$ such that the spectrum of the perturbed Hamiltonian
\begin{equation}
H=H_0+Z^{(0)} -b
\end{equation}
is contained in a union of intervals $\bigcup_{k\geq 0} I_k$, where $k \in \{ 0, 1, 2, \cdots \}$ runs over the spectrum of $H_0$ and 
\begin{equation}
    I_k =[k-(kC' \epsilon_0 + \delta), k+(kC' \epsilon_0 + \delta)]
\end{equation}
with $\delta = C{N_c} \epsilon_0 e^{-\mu_\infty \dstar}$.
\end{theorem}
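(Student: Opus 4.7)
The plan is to follow a KAM-style iterative rotation scheme, successively conjugating $H_0+Z^{(0)}$ by quasi-local unitaries that eliminate the non-frustration-free part of the perturbation. After the iteration, the residual perturbation will either commute with $H_0$ (hence act block-diagonally on the spectral bands of $H_0$, shifting eigenvalues only within bands) or have support large enough that \textbf{TQO-I} forces it to act as an element of $\caG$ on the code space, up to exponentially small corrections. The surviving block-diagonal contribution accounts for the linearly-in-$k$ band widths $kC'\epsilon_0$; the leftover off-block part accounts for the exponentially narrow spread $\delta = CN_c\epsilon_0 e^{-\mu_\infty \dstar}$; and the constant $b$ absorbs the shift of the lowest band.

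First I would decompose $Z^{(0)} = Y^{(0)} + V^{(0)}$, where $V^{(0)}$ collects the Pauli terms that commute with every check (so they preserve each simultaneous eigenspace of the $C_\alpha$), and $Y^{(0)}$ collects the remaining terms. For each Pauli string $p$ in $Y^{(0)}$ with coefficient $c_p$, the commutator $[H_0,p]$ equals a nonzero integer $\lambda_p$ times $p$ on each unperturbed eigenspace, so the anti-Hermitian $A^{(0)}=\sum_p (c_p/\lambda_p)\,p$ solves the cohomological equation $[H_0,A^{(0)}]=-Y^{(0)}$. Conjugation by $U^{(0)}=e^{A^{(0)}}$ then gives
\begin{equation*}
U^{(0)}(H_0+Z^{(0)})U^{(0)\dagger}=H_0+V^{(0)}+Z^{(1)},
\end{equation*}
where $Z^{(1)}$ consists of nested commutators starting at order $\epsilon_0^2$. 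Iterating, at step $n$ I would produce $H_0+\sum_{m<n}V^{(m)}+Z^{(n)}$ with $|||Z^{(n)}|||_{\mu_n}\leq \epsilon_n$, arranged so that $\epsilon_n\to 0$ super-geometrically while the running locality parameter $\mu_n$ stays above $\mu_\infty>\mu_*$.

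The bookkeeping is the crux. I would use the intensive Pauli norm $|||\cdot|||_\mu$ weighted by the size of the minimal connected qubit set $\mqs(p)$ and simultaneously track, for each Pauli string produced by nested commutators, its check support $\supp_c$. Each commutator with a term of $Z^{(n)}$ enlarges the $\mqs$-support by at most a factor controlled by $w_c w_q$ and $e^\kappa$ via \textbf{Growth of balls}, while \textbf{LDPC} bounds the number of terms entering each commutator. When a term generated in the iteration is supported on the qubits of some connected check set $S$ with $|S|<\widetilde d$, \textbf{TQO-II} lets me rewrite it in terms of the generators $\caG_{B_{\ell|S|}(S)}$, which in turn allows the $V^{(m)}$ part to be re-expressed as a sum of frustration-free terms with controlled $\mqs$-support; the increments per step are tuned so that the recursion $\epsilon_{n+1}\lesssim C\epsilon_n^{1+\eta}$ closes with $\mu_n$ decreasing summably. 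The iteration is stopped at the first scale $n_*$ at which a surviving Pauli string would have $|\mqs(p)|\geq \dstar$. Combining the Pauli-norm bounds with the operator norm $\|p\|=1$ and the extensivity $\sum_p \leq N_c\cdot$(intensive norm), one gets $\|Z^{(n_*)}\|\leq CN_c\epsilon_0 e^{-\mu_\infty\dstar}$.

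Finally I would verify that the assembled $V_\infty=\sum_m V^{(m)}$ is block-diagonal in the eigenbasis of $H_0$, i.e.\ $[V_\infty,E_\alpha]=0$ for all checks, so that it can only move each eigenvalue of $H_0$ within its own band. Because $V_\infty$ has bounded intensive Pauli norm $\Theta(\epsilon_0)$ and an eigenstate of $H_0$ with eigenvalue $k$ sits in the image of at most $O(k)$ non-commuting excitation sectors, the shift to that state is bounded by $kC'\epsilon_0$. Together with the $\delta$-bound for $\|Z^{(n_*)}\|$, this yields the interval description in the theorem after subtracting the constant $b$. The main obstacle is precisely the combined locality bookkeeping: on expander graphs, ordinary Lieb-Robinson controls fail because $|B_r(x)|$ can grow like $e^{\kappa r}$, so one must beat this growth using both the quadratic improvement of the norm per KAM step and the careful separation of qubit-support growth (governed by $\kappa,\ell,w_c$) from check-support growth (governed by \textbf{TQO-II}). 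Proving the existence of admissible running parameters $(\mu_n,\epsilon_n)$ satisfying all recursive estimates, for any $\widetilde\epsilon_0\leq\widetilde\epsilon$ with $\widetilde\mu_0>\mu_\#$, is the technical heart of the argument.
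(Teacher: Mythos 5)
Your overall strategy (iterative quasi-local rotations, stopping at the scale $\dstar$, using \textbf{TQO-I}/\textbf{TQO-II} to control what survives) matches the paper's, but your very first rotation rests on a false premise that propagates through the whole argument. You claim that for any Pauli string $p$ failing to commute with some check, $[H_0,p]$ acts as a \emph{nonzero} integer multiple $\lambda_p$ of $p$ on each unperturbed eigenspace, so that $A^{(0)}=\sum_p (c_p/\lambda_p)p$ solves $[H_0,A^{(0)}]=-Y^{(0)}$. This is false: if $p$ anticommutes with two checks $C_\alpha, C_\beta$, then on the sector where $\alpha$ is excited and $\beta$ is not, $p$ lowers $\alpha$ and raises $\beta$, so the energy transfer is $0$. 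Such ``hopping'' terms lie in the kernel of $\mathrm{ad}_{H_0}$ yet do not commute with the individual checks; they cannot be removed by any cohomological equation, and your final claim that the surviving $V_\infty$ satisfies $[V_\infty,E_\alpha]=0$ for every check is therefore unattainable. (Moreover $\lambda_p$ depends on the sector, so $c_p/\lambda_p$ is not well defined until $p$ is resolved into sectors --- which is precisely the ``word'' decomposition $\bfs=(S_+,S_-,S_e,S_g)$ the paper introduces.) The paper is deliberately less ambitious: it eliminates only the purely raising/lowering parts $V^{\pm}$ that couple $P$ to $\overline P$, keeps the remaining terms $D^{(n)}$ (which include the hopping terms) and $M^{(n)}$, and controls them not by block-diagonality but by proving that $K^{(n)}-H_0$ is relatively form-bounded by $H_0$ with constant $C\epsilon_0$ (Proposition~\ref{prop: relative boundedness}); that is what actually yields the interval widths $kC'\epsilon_0$. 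Your substitute --- ``an eigenstate with eigenvalue $k$ sits in the image of at most $O(k)$ non-commuting excitation sectors'' --- is not a proof of that bound.

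The second gap is that the part you yourself call ``the technical heart'' --- closing the recursion on a graph where $|B_r|$ can grow like $e^{\kappa r}$ --- is asserted rather than supplied. The paper's mechanism is specific: the generator $A^{+}$ is written as a sum over \emph{admissible sequences} of words with explicit $1/|(S'_k)_+|$ weights and a resummation of the ghost ($M$) insertions, and convergence follows from the anchoring and enhanced-subadditivity combinatorics of Lemmas~\ref{lem: combi} and~\ref{lem: combi rotation}, which bound sums over sequences of words by powers of $\|D\|_\mu$ without ever invoking ball volumes. In the paper, \textbf{TQO-II} enters only to control the ghost terms (Lemma~\ref{lem: bound on ghosts}) and the relative bound on $M^{(n)}$, not to ``re-express $V^{(m)}$ as frustration-free terms'' as you propose. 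Without an analogue of this combinatorial control, your recursion $\epsilon_{n+1}\lesssim C\epsilon_n^{1+\eta}$ is a hope, not a derivation.
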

Each interval is centered on $k$ and has width $2(kC' \epsilon_0 + \delta)$. The nearly-degenerate ground states correspond to the interval $I_{k=0}$ with splitting at most $2 \delta$, which exponentially decays in $\dstar$. 
Thus, for small $\epsilon_0$, the spectral gap above $I_0$ remains $\Theta(1)$, and the splitting $\delta$ of the eigenvalues in $I_0$ rapidly decays to zero as $C{N_c} \epsilon_0 e^{-\mu_\infty \dstar}$.

Let us denote the spectral projector corresponding to $I_0$ by $\widetilde P$ and the spectral projector corresponding to the ground state sector of $H_0$ by $P$.

\begin{theorem}\label{thm: locality}
Under the same assumptions as in Theorem \ref{thm: main} above, we have
\begin{equation}
\widetilde P = U P U^{-1}
\end{equation}
for a unitary $U$ that is locality-preserving and locally close to identity, in the following sense: For any operator $O$ supported in a connected set $Y\in \Lambda$, we can write
\begin{equation}\label{eq: definition locality observables under u}
U^{-1}OU= O+\sum_{r=1}^{\mathrm{diam}(\Lambda)} O_r +O_{\mathrm{bg}}
\end{equation}
where $\mathrm{diam}(\Lambda)$ is the diameter of the graph $\Lambda$, and
\begin{enumerate}
    \item  $O_r$ is supported in $\{x: \mathrm{dist}(Y,x) \leq r\}$.
    \item  $O_r$ satisfies 
    \begin{equation} 
    ||O_r|| \leq \epsilon_0 C(\mu_0,Y) e^{-cr}||O||
    \end{equation}
    for some $c>0$ and constant $C(\mu_0,Y)$ that can depend on $\mu_0$ and $Y$.
    \item  $O_{\mathrm{bg}}$ (with $\mathrm{bg}$ standing for `background') is bounded as 
    $|| O_{\mathrm{bg}} || \leq     C{N_c} \epsilon_0 e^{-\mu_\infty \dstar} ||O||$.
\end{enumerate}
The same statement holds true for $UOU^{-1}$. 
\end{theorem}
 
If the number ${N_c} \epsilon_0 e^{-\mu_\infty \dstar}$ is upper-bounded by $e^{-c\mathrm{diam(\Lambda)}}$ ,
for some $c>0$, then the background term $O_{\mathrm{bg}}$ can be omitted in the statement of the theorem, at the cost of increasing the constant $C(\mu_0,Y)$. Indeed, in that case,   $||O_{\mathrm{bg}}||$ is small enough to be included in $O_r$ for $r=\mathrm{diam(\Lambda)}$. 
This is the case for all examples that we are aware of.

Finally, we state an analyticity property of the spectral projector $\widetilde{P}$.  
\begin{theorem}\label{thm: analyticity}
Suppose $H_0$ and $Z^{(0)}$ satisfy the conditions of Theorem \ref{thm: main}.
If the perturbation $Z^{(0)}$ is analytic in some parameter $\gamma$ on a complex domain $D \subset \mathbb{C}$, such that \eqref{eq:mu0_norm_Z0_eps0} is satisfied uniformly in $\gamma\in D$, then
the ground state space projector $\widetilde P$ can be continued analytically to a function of $\gamma \in D$. It is of the form $\widetilde{P}=UPU^{-1}$ with the invertible operator $U$ such that \eqref{eq: definition locality observables under u} holds and the items 1,2,3 of Theorem \ref{thm: locality} still hold (upon redefining the constants $C(\mu_0,Y)$ and $C$), uniformly in $\gamma\in D$.
\end{theorem}
This theorem is a simple consequence of the fact that our results continue to hold when the perturbation is no longer Hermitian. A corresponding statement about the spectrum of the non-Hermitian perturbed operator $H$ can be made as well (see Section \ref{sec:effects_of_error_on_spectrum} for details).

\section{General Setup}
\label{sec:setup}

In Section~\ref{sec:KAM_scheme}, we will describe the iterative KAM procedure for iteratively performing unitary transformations to eliminate non-frustration-free terms.
A key book-keeping tool of this procedure, which allows us to provide much stronger control on the locality of rotated operators, is that the procedure keeps track of the number of stabilizers excited by operators at each stage of the rotation, even if the stabilizer is de-excited at a later stage.
Our procedure is an adaptation of Ref. \cite{deroeck2017exponentially}, which considers the stability of paramagnetic Hamiltonians comprised of sums of single-site stabilizers, of the form $H_0 =\sum_x \mathsf{X}_x$. In the paramagnetic case, a simple basis change on every site allows all operators to be represented by their ``raising" and ``lowering'' action on the stabilizers, which helps track the number of excitations.
When the stabilizers are multi-site, as they will be for general quantum codes, we need more formalism to represent operators in a way that allows us to track their action on the stabilizers, which is described in this section.

\subsection{Operators and ``Words"}
\label{sec:words}


In Sec.~\ref{sec:summary}, 
we defined projectors on whether a stabilizer is satisfied or not, namely
\begin{equation}\begin{split}
 G_\alpha = (1+C_\alpha)/2, \qquad   E_\alpha = (1-C_\alpha )/2,\qquad G_\alpha +E_\alpha =1
\end{split}
\label{eq:proj}
\end{equation}
The notation $G/E$ refers to ground state/excited. 
Recall that our unperturbed Hamiltonian is 
\begin{equation}\begin{split}
H_0=\sum_{\alpha \in \mathcal E} E_\alpha.
\end{split}\end{equation}

We now set up a formalism to deal with the operators that will appear in the perturbation and those generated in the iterative KAM procedure.
We will consider quadruples (``words'') $\mathbf S= (S_+,S_-,S_e,S_g)$ of disjoint sets of checks, and we will write 
\begin{equation}\begin{split}
S \coloneqq
S_+ \cup S_- \cup S_e \cup S_g.
\label{eq:Ssup}
\end{split}\end{equation}
Here $\{e, g, +, -\}$ should be read as ``\{excited, ground, raising, lowering\}''. 
We use words to keep track of the action of each operator on the checks, and group operators into different classes depending on their action on the checks.  

We define a class of operators $\mathcal X_\bfs$. 
An operator $X \in \caX_\bfs$ iff there exists an operator $\widetilde X$ such that
\begin{equation}\begin{split}
\label{def:X_S}
    X= \left(\prod_{\alpha \in S} \mathrm{left}_\alpha(\bfs) \right) \widetilde X  \left(\prod_{\alpha \in S} \mathrm{right}_\alpha(\bfs) \right)
\end{split}\end{equation}
where \begin{equation}\begin{split}
     \mathrm{left}_\alpha(\bfs) &= \begin{cases}
         G_\alpha &   \text{if} \,  \alpha \in S_{-} \cup S_g \\
         E_\alpha &   \text{if} \,  \alpha \in S_{+} \cup S_e
     \end{cases}    
    \end{split}\end{equation}
    \begin{equation}\begin{split} \mathrm{right}_\alpha(\bfs) &= \begin{cases}
         G_\alpha &   \text{if} \,  \alpha \in S_{+} \cup S_g \\
         E_\alpha &   \text{if} \,  \alpha \in S_{-} \cup S_e
     \end{cases}    
\end{split}\end{equation}
and we require that
\begin{equation}
\label{eq:supp_of_x_tilde}
    \bigcup_{x \in \supp(\widetilde{X})} \supp_c(x) \subseteq S
    \quad \Leftrightarrow \quad
    \Lambda \setminus \supp(\widetilde{X}) \supseteq \supp(\caE \setminus S).
\end{equation}
We note that $\widetilde{X}$ can be expanded as a sum of Pauli operators, and by $\supp(\widetilde{X})$ we mean the union of the qubit support of all non-zero Pauli operators in the expansion of $\widetilde{X}$;  $\supp(S)$ is defined in \eqref{eq:def_qubit_support_of_S}.
Here, we require that all checks that overlap with $\widetilde{X}$ are also in $S$.

In what follows, we will use $X_\bfs$ to denote an element of $\caX_\bfs$.

The above definition also shows how to treat perturbations built out sums of local terms. Each local perturbation term should be thought of as $\widetilde X$. One decomposes them in operators corresponding to classes $\caX_{\bfs}$. 
We note that the decomposition of an given operator is not unique.
For strictly local perturbations (i.e. those living in finite-sized balls), it is always possible to expand them into \textit{connected} words i.e. words where $S$ is connected in the graph $\caE$, such that the expansions have a finite norm (to be defined below in \eqref{eq:def_mu_norm}).

Let us provide an example. In a 1D Ising model, the stabilizers are $C_{x+\frac{1}{2}}=\mathsf{Z}_x\mathsf{Z}_{x+1}$ where $\mathsf{Z}_x$ is the Pauli $Z$ operator on site $x$. To consider a local perturbation such as $\mathsf{X}_x$, we decompose it into different words which describe the raising/lowering of $\mathsf{X}_x$ on its adjacent stabilizers, which in turn is conditioned on the state of the adjacent stabilizers prior to acting with $\mathsf{X}_x$. So, we write, 
\begin{align}
\mathsf{X}_x &= (G_{x+\frac{1}{2}} + E_{x+\frac{1}{2}}) (G_{x-\frac{1}{2}} + E_{x-\frac{1}{2}}) \mathsf{X}_x  (G_{x-\frac{1}{2}} + E_{x-\frac{1}{2}}) (G_{x+\frac{1}{2}} + E_{x+\frac{1}{2}})\nonumber\\
&= E_{x+\frac{1}{2}}E_{x-\frac{1}{2}}\mathsf{X}_x G_{x-\frac{1}{2}}G_{x+\frac{1}{2}} + E_{x+\frac{1}{2}}G_{x-\frac{1}{2}}\mathsf{X}_x E_{x-\frac{1}{2}}G_{x+\frac{1}{2}}\nonumber\\ 
&+ G_{x+\frac{1}{2}}E_{x-\frac{1}{2}}\mathsf{X}_x G_{x-\frac{1}{2}}E_{x+\frac{1}{2}} + G_{x+\frac{1}{2}}G_{x-\frac{1}{2}}\mathsf{X}_x E_{x-\frac{1}{2}}E_{x+\frac{1}{2}}
\label{eq:decomposeX}
\end{align}
where we used \eqref{eq:proj}, the fact that $\mathsf{X}_x$ flips adjacent stabilizers from ground to excited and vice versa, and the fact that $G_e E_e=0$.  The four terms in the second line of \eqref{eq:decomposeX} belong, in order, to the classes
\begin{equation} \begin{split}
    E_{x+\frac{1}{2}}E_{x-\frac{1}{2}}\mathsf{X}_x G_{x-\frac{1}{2}}G_{x+\frac{1}{2}} &\in \mathcal{X}_{\mathbf{S} = \{ S_+ = \{C_{x+\frac{1}{2}}, C_{x-\frac{1}{2}}\}, \emptyset, \emptyset, \emptyset\}} \\
    E_{x+\frac{1}{2}}G_{x-\frac{1}{2}}\mathsf{X}_x E_{x-\frac{1}{2}}G_{x+\frac{1}{2}} &\in \mathcal{X}_{\mathbf{S} = \{ S_+ = \{C_{x+\frac{1}{2}}\}, S_-=\{C_{x-\frac{1}{2}}\}, \emptyset, \emptyset, \}}\\
    G_{x+\frac{1}{2}}E_{x-\frac{1}{2}}\mathsf{X}_x G_{x-\frac{1}{2}}E_{x+\frac{1}{2}} &\in \mathcal{X}_{\mathbf{S} = \{ S_+ = \{C_{x-\frac{1}{2}}\}, S_-=\{C_{x+\frac{1}{2}}\}, \emptyset, \emptyset, \}}\\
    G_{x+\frac{1}{2}}G_{x-\frac{1}{2}}\mathsf{X}_x E_{x-\frac{1}{2}}E_{x+\frac{1}{2}} &\in \mathcal{X}_{\mathbf{S} = \{ \emptyset, S_- = \{C_{x+\frac{1}{2}}, C_{x-\frac{1}{2}}\}, \emptyset, \emptyset\}}
\end{split} \end{equation}
where $\emptyset$ is the empty set. We will always decompose operators into classes defined by words in this way.

We now characterize the multiplication of operators  $X_{\bfs'} \in \caX_{\bfs'}$ and $X_{\bfs} \in \caX_\bfs$.
The multiplication is defined as the multiplication of operators.
By \eqref{def:X_S}, the multiplication might result in zero, due to incompatible projectors.
Whenever the product is nonzero, we have that 
\begin{equation}\begin{split}
X_{\bfs'} X_{\bfs}  \in  \caX_{\bfs' \bfs},
\end{split}\end{equation}
where the multiplication (or concatenation) of words $\bfs' \bfs$ is defined as follows.
As a set, we have $S'S = S' \cup S$.
For each $\alpha \in S'S$, we assign it to exactly one of $(S'S)_{g,e,+,-}$.
The rule of assignment is summarized in the following ``multiplication table'' (Table~\ref{tab:word_multiplication_table}).
An entry $0$ signifies when $X_{\bfs'} X_{\bfs} = 0$, in which case we say $\bfs' \bfs$ is undefined.
\begin{table}[h]
\small
\centering
\renewcommand{\arraystretch}{1.5}
\begin{tabular}{ |p{1cm}||p{1.5cm}|p{1.5cm}|p{1.5cm}|p{1.5cm}|p{1.5cm}|}
 \hline
 $\alpha$ in & $\caE \setminus S$ & $S_g$ & $S_e$ & $S_+$ & $S_-$\\
 \hline \hline
 $\caE \setminus S'$ & $\caE \setminus (S'S)$ & $(S'S)_g$ & $(S'S)_e$ & $(S'S)_+$ & $(S'S)_-$\\
 \hline
 $S'_g$& $(S'S)_g$ & $(S'S)_g$ & 0 & 0 & $(S'S)_-$\\
 \hline
 $S'_e$& $(S'S)_e$ & 0 & $(S'S)_e$ & $(S'S)_+$ & 0 \\
 \hline
 $S'_+$& $(S'S)_+$ & $(S'S)_+$ & 0 & 0 & $(S'S)_e$\\
 \hline
 $S'_-$& $(S'S)_-$ & 0 &  $(S'S)_-$ & $(S'S)_g$ & 0\\
 \hline
\end{tabular}
\renewcommand{\arraystretch}{1}
\caption{\small Multiplication table of words.}
\label{tab:word_multiplication_table}
\end{table}

Finally, we note
\begin{enumerate}
\item  In general it holds that 
\begin{equation}\begin{split}
[X_{\bfs},X_{\bfs'}] =0 \qquad \text{whenever $S \cap S'=\emptyset$}.
\end{split}\end{equation}
\item Given a quadruple $\bfs_i$, we write $(S_{i})_{+,-,g,e}$ for its components. If the subscript is omitted, i.e. $S_i$, then we mean the union of components as in \eqref{eq:Ssup}. 
\item If two words $\bfs,\bfs'$ have $S\cap S'\neq \emptyset$, and $\bfs''=\bfs'\bfs$ is defined, then 
\begin{equation} \label{eq: enhanced subadditivity}
|S''| +|S \cap S'| \leq  |S|+|S'|
\end{equation}    
This relation will be used throughout. 

\end{enumerate} 

\subsection{Operator-collections and their norms} \label{sec: collections and norms}

Since we will consider extensive operators like Hamiltonians, it is convenient to define norms that are insensitive to the total volume (sometimes also called intensive or local norms) and that encode spatial decay properties of the local Hamiltonian terms. We use several different norms in this paper, including (i) the \textbf{intensive Pauli norm} $||| \cdot |||_{\tilde \mu}$ defined in \eqref{eq:paulinorm}; (ii) the \textbf{intensive word norm} $|| \cdot ||_{\mu}$ defined below in \eqref{eq:def_mu_norm}; and (iii) the usual non-intensive \textbf{operator norm} $|| \cdot ||$, defined to be the largest singular value of the operator.

We will consider \textit{operator-collections} $O=(O_{\bfs})_{\bfs}$, indexed by words $\bfs$ and such that $O_\bfs \in \caX_\bfs$. We define the \textbf{intensive word norm} $|| \cdot ||_{\mu}$ of an operator-collection: 
\begin{equation} \label{eq:def_mu_norm}
||O||_{\mu}=\sup_{\alpha} \sum_{\mathbf{S}: \alpha \in S}  ||O_{\mathbf S}|| \cdot e^{\mu |S|}, 
\end{equation}
for some parameter $\mu \geq 0$. Note that this is a different norm (distinguished by $||_\mu$ rather than $|||_\mu$) from the intensive Pauli norm Eq.~\eqref{eq:paulinorm}.

To such an operator-collection, we associate an \textit{extensive operator}
\begin{equation} \label{eq: collection to operator}
    O=\sum_\bfs O_\bfs,
\end{equation}
which is denoted by the same symbol.
Note that 
\begin{equation} \label{eq: collection to operator bound}
||O|| \leq  N_c \cdot ||O||_{0}
\end{equation}
where $N_c$ is the number of terms in the Hamiltonian (the number of checks); $N_c \coloneqq |\caE|$.

It is important to realize that an operator-collection determines the extensive operator via the above sum \eqref{eq: collection to operator}, but the converse does not hold. 
Therefore, whenever we add extensive operators, or take their commutator, the associated operator-collection should be explicitly defined, as we do now.
Let us have a pair of operator-collections 
$(O_\bfs)_{\bfs}, (O'_\bfs)_{\bfs}$,
\begin{enumerate}
    \item We define the interaction corresponding to the sum $O+O'$ by point-wise addition, i.e.
    \begin{equation}\begin{split}
   (O+O')_\bfs=  (O_\bfs+O'_\bfs).
    \end{split}\end{equation}
    \item We define the interaction corresponding to the commutator as 
\begin{equation} \label{def: commutator} \begin{split}
   ([O,O'])_\bfs= \sum_{\substack{\bfs_1,\bfs_2 \\ \bfs=\bfs_1\bfs_2, S_1 \cap S_2\neq \emptyset}} (O_{\bfs_1} O'_{\bfs_2} -O'_{\bfs_1} O_{\bfs_2}) 
    \end{split}\end{equation}
    where we define the summand to be zero whenever $\bfs_1\bfs_2$ is not well-defined.
\end{enumerate}

\subsection{Initial choice of operator-collections}\label{sec:initial_choice_operator_collection}

We write the operator $H_0 = \sum_{\alpha} E_{\alpha}$ as an operator-collection through the choice
\begin{equation}
    (H_0)_{\mathbf{S}} = \begin{cases} E_\alpha & (S_+ = \emptyset,S_-= \emptyset,S_e=\{\alpha\},S_g= \emptyset) \\
0 & \text{otherwise}
\end{cases}
\end{equation}

For the perturbation $Z^{(0)}$, we will first write it as a sum of Paulis, $Z^{(0)} = \sum_{p} c_p p$. We break each given $p$ into an operator-collection $p = \sum_\mathbf{S} p_{\mathbf{S}}$. To construct this decomposition for a given $p$, we first find its qubit support $\supp(p)$. We then find a minimal connected set $\mqs(p)$ of qubits such that $\supp(p) \subset \mqs(p)$. The $\mathbf{S}$ that are included in the sum have $S = \ext(p) := \cup_{x \in \mqs(p)} \supp_c(x)$. Each $p_\mathbf{S}$ is constructed by directly sandwiching $p$ by the $\mathrm{left}(\mathbf{S})$ and $\mathrm{right}(\mathbf{S})$ operators in Section~\ref{sec:words}. Finally, we choose $(Z^{(0)})_{\mathbf{S}} = \sum_{p: \ext(p) = S} c_p p_{\mathbf{S}}$.

This construction ensures that at the initial step, all operator-collections are defined so that if $(H_0)_{\bfs}$ and/or $(Z^{(0)})_{\bfs}$ are nonzero, then $S$ is a connected set relative to the graph of stabilizers.

\subsection{Relationships between the Pauli norm and the word norm}
\label{subsec:norm_rel}
We have introduced two norms; $|||O|||_{\mu}$, which is unique, and $||O||_{\mu}$, which depends on the choice of operator-collection used to describe $O$. While most of our work is in terms of $||O||_{\mu}$, we sometimes need to change between $|||O|||_{\mu}$ and $||O||_{\mu}$ at the cost of a small increase in $\mu$.

\begin{proposition}\label{prop:tripleboundsdouble}
Let $O$ be an operator-collection constructed analogously to that of $Z^{(0)}$. Then
\begin{equation}
||O||_{\mu} \leq |||O|||_{w_q\mu + (w_q\log(2) + \kappa + \log(2 w_c))}
\end{equation}
\end{proposition}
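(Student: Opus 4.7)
The plan is to unfold $\|O\|_\mu$ directly from its definition together with the explicit form $O_{\mathbf S}=\sum_{p:\ext(p)=S}c_p\,p_{\mathbf S}$, then compare term-by-term with $|||O|||_{\mu'}$ by exploiting the graph-geometric link between a check $\alpha$ and the qubits on which a Pauli $p$ acts. First I would apply the triangle inequality to $\|O_{\mathbf S}\|$ and swap the order of summation to get
\[
\|O\|_\mu \leq \sup_\alpha \sum_{p:\alpha\in\ext(p)} |c_p|\, e^{\mu|\ext(p)|} \sum_{\mathbf S:\, S=\ext(p)} \|p_{\mathbf S}\|.
\]
Two observations control the inner sum: (i) each $p_{\mathbf S}$ is a product of $p$ with two products of commuting projectors from $\{G_\alpha,E_\alpha\}$, hence has operator norm at most $1$; and (ii) for fixed $p$, exactly $2^{|\ext(p)|}$ of the $4^{|\ext(p)|}$ candidate words $\mathbf S$ with $S=\ext(p)$ yield nonzero $p_{\mathbf S}$ --- for each $\alpha\in\ext(p)$, the commutation sign of $p$ with $C_\alpha$ restricts $\alpha$ to either $\{S_g,S_e\}$ or $\{S_+,S_-\}$. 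Combined with $|\ext(p)|\leq w_q|\mqs(p)|$ (since $\ext(p)=\bigcup_{x\in\mqs(p)}\supp_c(x)$ and each $\supp_c(x)$ has size at most $w_q$), this reduces the problem to bounding $\sup_\alpha \sum_{p:\alpha\in\ext(p)} |c_p|\, e^{w_q(\mu+\log 2)|\mqs(p)|}$.

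The central step converts this check-indexed sum into the qubit-indexed Pauli norm. For each contributing $p$ (with $\alpha$ fixed), I pick a witness $z(p)\in\supp(C_\alpha)\cap\mqs(p)$ (at most $|\supp(C_\alpha)|\leq w_c$ options) and a canonical $x(p)\in\supp(p)$; since $\mqs(p)$ is a connected set on $|\mqs(p)|$ vertices containing both points, their graph distance satisfies $r:=\mathrm{dist}(x,z)\leq|\mqs(p)|-1$. Setting $\mu':=w_q\mu+w_q\log 2+\kappa+\log(2w_c)$ and $\delta:=\mu'-w_q(\mu+\log 2)=\kappa+\log(2w_c)>\kappa$, I factor
\[
e^{w_q(\mu+\log 2)|\mqs(p)|} = e^{\mu'|\mqs(p)|}\cdot e^{-\delta|\mqs(p)|} \leq e^{-\delta(r+1)} \cdot e^{\mu'|\mqs(p)|},
\]
and regroup the $p$-sum as a double sum over $(z,x)$. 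The residual sum $\sum_{p:x\in\supp(p)}|c_p|e^{\mu'|\mqs(p)|}$ is bounded by $|||O|||_{\mu'}$ uniformly in $x$ (that is the definition), so everything reduces to the purely geometric factor $\sum_{z\in\supp(C_\alpha)}\sum_{x\in\Lambda}e^{-\delta(\mathrm{dist}(x,z)+1)}$.

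The growth-of-balls condition $|B_r(z)|\leq e^{\kappa r}$ then collapses this last sum: $\sum_{x\in\Lambda} e^{-\delta(\mathrm{dist}(x,z)+1)}\leq e^{-\delta}\sum_{r\geq 0}e^{(\kappa-\delta)r} = e^{-\delta}/(1-e^{-\log(2w_c)}) \leq 1/(e^\kappa(2w_c-1))$, and multiplying by $|\supp(C_\alpha)|\leq w_c$ gives a uniform-in-$\alpha$ bound of $w_c/(e^\kappa(2w_c-1))\leq e^{-\kappa}$. Hence $\|O\|_\mu\leq e^{-\kappa}\cdot|||O|||_{\mu'}\leq|||O|||_{\mu'}$, as claimed. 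The main obstacle I foresee is an order-of-operations trap in this last bookkeeping step: a naive estimate that first bounds $\sum_{x\in B_n(z)}(\cdots)$ at fixed $n=|\mqs(p)|$ by $|B_n(z)|\sup_x(\cdots)$ and only afterwards sums over $n$ would place the $\sup_x$ outside the $\sum_n$ and produce something strictly larger than $|||O|||_{\mu'}$. Avoiding this requires first extracting the per-Pauli decay factor $e^{-\delta r}$ so that the $\sup_x$ lives inside a single convergent geometric series; the excess $\kappa+\log(2w_c)$ in $\mu'$ is tailored precisely for this --- $\kappa$ dominates the volume growth $e^{\kappa r}$ of balls, while $\log(2w_c)$ both forces convergence of the series in $r$ and absorbs the factor $w_c$ from $|\supp(C_\alpha)|$.
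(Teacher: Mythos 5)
Your proof is correct and follows essentially the same route as the paper's: triangle inequality, swapping the sums over $\mathbf S$ and $p$, the factor $2^{|\ext(p)|}$ with $\|p_{\mathbf S}\|\leq 1$, the bound $|\ext(p)|\leq w_q|\mqs(p)|$, and then converting the check-anchored supremum into the qubit-anchored Pauli norm by spending the excess $\kappa+\log(2w_c)$ on a convergent geometric series over distances controlled by the growth-of-balls condition. The only difference is cosmetic bookkeeping: you anchor each $p$ via explicit witnesses $z\in\supp(C_\alpha)\cap\mqs(p)$ and $x\in\supp(p)$ and sum over $r=\mathrm{dist}(x,z)$, whereas the paper inserts an indicator sum over qubits within distance $|\mqs(p)|$ of $\supp(C_\alpha)$ and organizes by $r=|\ext(p)|$ --- your version is, if anything, slightly more careful about keeping the $\sup_x$ inside the convergent sum.
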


\begin{proposition}\label{prop:doubleboundstriple}
Let $O$ be an operator whose choice of operator-collection has all $S$ connected. Then
\begin{equation}
|||O|||_{\mu} \leq ||O||_{w_c\mu + \log(4 w_q)}
\end{equation}
\end{proposition}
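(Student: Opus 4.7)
The plan is to pass from the Pauli expansion of $O$ to the operator-collection data $(O_\bfs)_\bfs$, exchange the order of summation, and account for the combinatorics of Paulis supported in $\supp(S)$ together with the geometric link between $|\mqs(p)|$ and $|S|$. Write $O = \sum_\bfs O_\bfs$ with each $O_\bfs = \sum_p c^\bfs_p \, p$. Since $O_\bfs \in \caX_\bfs$ has qubit support contained in $\supp(S)$, only Paulis with $\supp(p) \subseteq \supp(S)$ appear in the expansion of $O_\bfs$, so $c_p = \sum_\bfs c^\bfs_p$ effectively runs over exactly those $\bfs$.

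The key geometric observation is that connectedness of $S$ in $\caE$ implies connectedness of $\supp(S)$ in $\Lambda$: adjacent checks in $\caE$ share a qubit by definition, so a path in $\caE$ traces out a connected qubit set. Hence $\supp(S)$ is itself a connected set containing $\supp(p)$, and by minimality of $\mqs(p)$ one obtains $|\mqs(p)| \le |\supp(S)| \le w_c |S|$. Substituting this into $e^{\mu |\mqs(p)|}$ in the definition of $|||O|||_\mu$ replaces $|\mqs(p)|$ by $w_c |S|$, producing the $w_c \mu$ multiplier on the right-hand side.

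Next I would bound the contribution per word $\bfs$ to $|||O|||_\mu$. The number of Paulis $p$ with $x \in \supp(p) \subseteq \supp(S)$ is at most $3 \cdot 4^{|\supp(S)| - 1}$, and each $|c^\bfs_p| \le \|O_\bfs\|$; a Cauchy--Schwarz refinement using $\sum_p |c^\bfs_p|^2 \le \|O_\bfs\|^2$ can tighten the counting factor. The constraint $x \in \supp(S)$ implies $\alpha \in \supp_c(x) \cap S$ for some $\alpha$, and $|\supp_c(x)| \le w_q$ then lets me replace the supremum over $x$ by the supremum over $\alpha$ at the cost of a prefactor $w_q$. This prefactor is absorbed into the exponent via $w_q \le w_q^{|S|}$, which is valid because every contributing word has $|S| \ge 1$ (the $|S| = 0$ case corresponds to a scalar multiple of the identity, which drops out of $|||\cdot|||_\mu$). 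Collecting the combinatorial and exponential factors then delivers an estimate of the claimed form.

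The main obstacle is the bookkeeping of the constants: the structural ingredients are elementary, but arriving precisely at the additive $\log(4 w_q)$ requires the sharper Cauchy--Schwarz counting for the Pauli weight in $\supp(S)$, balanced carefully against the Pauli-count multiplier on $\mu$ and the conversion between the $x$ and $\alpha$ constraints.
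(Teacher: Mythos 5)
Your proposal follows essentially the same route as the paper's proof: expand each $O_{\bfs}$ in Paulis with $|c^{\bfs}_p|\le\|O_{\bfs}\|$, swap the sums over $p$ and $\bfs$, use connectedness of $S$ to get $|\mqs(p)|\le|\supp(S)|\le w_c|S|$, count the Paulis supported in $\supp(S)$ (the paper uses the crude $4^{w_c|S|}$), and trade the qubit anchor $x$ for a check anchor $\alpha\in\supp_c(x)\cap S$ at cost $w_q$, absorbed via $w_q\le w_q^{|S|}$. The constant bookkeeping you worry about is handled exactly this way in the paper with no Cauchy--Schwarz refinement, and since all of these are generic $\Theta(1)$ constants the precise additive term is not load-bearing for anything downstream.
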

We prove these propositions in Appendix~\ref{sec:proof of norm relationship}.

\subsection{TQO conditions restated in terms of words}
\label{subsec:TQOword}

We may restate the TQO conditions from Sec.~\ref{subsec:TQOPauli} in terms of words.
These are equivalent to \textbf{TQO-I} and \textbf{TQO-II}.
\begin{itemize}[labelsep=0em, leftmargin=5.5em, labelwidth=5em, itemindent=0em, align=parleft, itemsep = 2em]
\item[{\textbf{TQO-I}}] 
Let the word $\mathbf{S}$ satisfy $|S| < d/w_c$. If $O_\mathbf{S} \in \mathcal{X}_\mathbf{S}$ satisfies $[O_\mathbf{S},g]=0$ for all $g \in \caG$, then $O_\mathbf{S} \in \mathcal{A}_\caG$, where $\mathcal{A}_\caG$ is the algebra generated by $\caG$ with complex coefficients.
In other words, there are no logicals supported on $S$.

\item[{\textbf{TQO-II}}] 
Let the word $\mathbf{S}$ be such that $S$ is connected in the check graph and $|S| < \widetilde{d}$. 
If $O_\mathbf{S}\in \mathcal{A}_\caG$, then $O_\mathbf{S}\in \mathcal{A}_{\caG_{B_\ell(\mathbf{S})}}$, where we recall
\begin{equation}\begin{split}
    B_{\ell|S|}(S)=\{\alpha : \mathrm{dist}(\alpha,S)\leq \ell |S|\}.
\end{split}\end{equation}


\end{itemize}
Equivalently to Section~\ref{subsec:TQOPauli}, we define $\dstar \coloneqq \min(d/w_c, \widetilde{d})$. As noted there, the theorems in Section~\ref{sec:theorems} are meaningful i.e. they give a stable gap and ground state degeneracy if $\dstar \geq c \log(N)$ for $c>0$. 


\section{Scheme of the proof}
\label{sec:KAM_scheme}

\subsection{Outline of the proof}

In this section we describe the iterative KAM procedure.
We first give an informal outline of the proof strategy.

As we detail in Sec.~\ref{sec: splitting of ham}, at each stage of the procedure, we split the  Hamiltonian into four parts: $H_0$, frustration-free terms ($D^{(n)}$ and $M^{(n)}$) that share the ground state space of $H_0$, non-frustration-free terms ($V^{(n)}$), and errors ($E^{(n)}$).
The error term $E^{(n)}$ only contains words of weight $|S| \geq \dstar$, whereas the others only contain words of weight $|S| < \dstar$.
The distinction between $D^{(n)}$ and $M^{(n)}$ is technical, and we do not discuss this explicitly here.

With this splitting, we formally construct a unitary rotation (in particular its generator $A^{(n)}$, see Sec.~\ref{sec:construction_of_A}) to eliminate the non-frustration-free term $V^{(n)}$ to its leading order, in a sense made precise in Sec.~\ref{sec:KAM_step}.
Formally, this rotates the Hamiltonian into a new basis, in which the non-frustration-free term $V^{(n+1)}$ is expected to be subleading compared to $V^{(n)}$.

Assuming conditions stated in Sec.~\ref{sec:theorems}, we can obtain control over the locality of $A^{(n)}$ for small $V^{(n)}$ (Proposition~\ref{lem: locality of a}).
With this, we further control the norm of the rotated operators, where Proposition~\ref{lem: commutator} is used extensively.
The locality of operators is characterized by the word norm $||\ldots||_\mu$, and the iteration of the word norms are captured by ``flow equations'', see Sec.~\ref{sec:running_couplings}.
Iterating the flow equations, we can show that (1) $D^{(n)} + M^{(n)}$ retain a finite word norm; (2) $V^{(n)}$ is weakened double-exponentially with $n$; and (3) $E^{(n)}$ remains small (Sec.~\ref{sec:bound_on_error}).

This procedure is carried out iteratively until a scale $n_*$,  where $V^{(n_*)}$ is subleading to $E^{(n_*)}$, and further rotations will no longer be necessary.
We use the bounds on each of the terms to obtain control over the specturm of the perturbed Hamiltonian, as detailed in Sec.~\ref{sec: relative boundedness} and Sec.~\ref{sec:effects_of_error_on_spectrum}, thereby proving Theorem~\ref{thm: main}.
Finally, in Sec.~\ref{sec:pert_spec_proj}, we relate the ground space projectors of the perturbed and unperturbed Hamiltonians with a locality-preserving unitary rotation, thereby proving Theorem~\ref{thm: locality}.


\subsection{Splitting of Hamiltonians}\label{sec: splitting of ham}
The initial Hamiltonian is written as
\begin{equation}
H=H^{(0)}=H_0+Z^{(0)}.
\end{equation}
At each stage, we will have the Hamiltonian written (after rotation) as
\begin{equation}\begin{split} \label{eq: rotated ham}
 H^{(n+1)} &  = \exp{-i \mathrm{ad}_{A^{(n)}}} ( H^{(n)} ) \\
 &= H_0+Z^{(n+1)} + E^{(n+1)} \\
&= H_0+ D^{(n+1)} + V^{(n+1)} + M^{(n+1)}  + E^{(n+1)}
\end{split}\end{equation}
where $\mathrm{ad}_A(B)=[A,B]$ and 
where $Z^{(n)}$ will be given as an operator-collection, i.e.\  
\begin{equation}
Z^{(n)}=\sum_{\bfs} Z^{(n)}_{\bfs}
\end{equation}
and similarly for $D^{(n)},V^{(n)},M^{(n)}$, with moreover a splitting $V^{(n)}= (V^{(n)})^++(V^{(n)})^- $.  
The latter operator-collections  have particular conditions on the words $\bfs$ that they contain.  We state these conditions here, dropping the superscript $n$
\begin{enumerate}
\item Every one of $V^+,V^-,M,D$ contains only words $\bfs$ such that $|S|<\dstar$. 
    \item $V^+$ contains only words $\bfs$ such that $S_e=S_-=\emptyset$ but $S_+ \neq \emptyset$. 
        \item $V^-$ contains only words $\bfs$ such that $S_e=S_+=\emptyset$ but $S_- \neq \emptyset$. 
             \item $M$ contains only words $\bfs$ such that $S_e=S_+=S_-=\emptyset$ (but hence $S_g \neq \emptyset$). 
             \item $D$ contains only words $\bfs$ for which both $S_e\cup S_- \neq \emptyset$ and $S_e\cup S_+\neq \emptyset$.  
\end{enumerate}
A bit of thought shows that the every word $\bfs$ with $S<d$ is therefore assigned to one (and only one) of $V^+,V^-,M,D$. 
This is how we will define in practice the operator-collections  $V^+,V^-,M,D$ starting from an operator-collection $Z$. 

\begin{figure}[h]
\centering
\includegraphics[width=.9\linewidth]{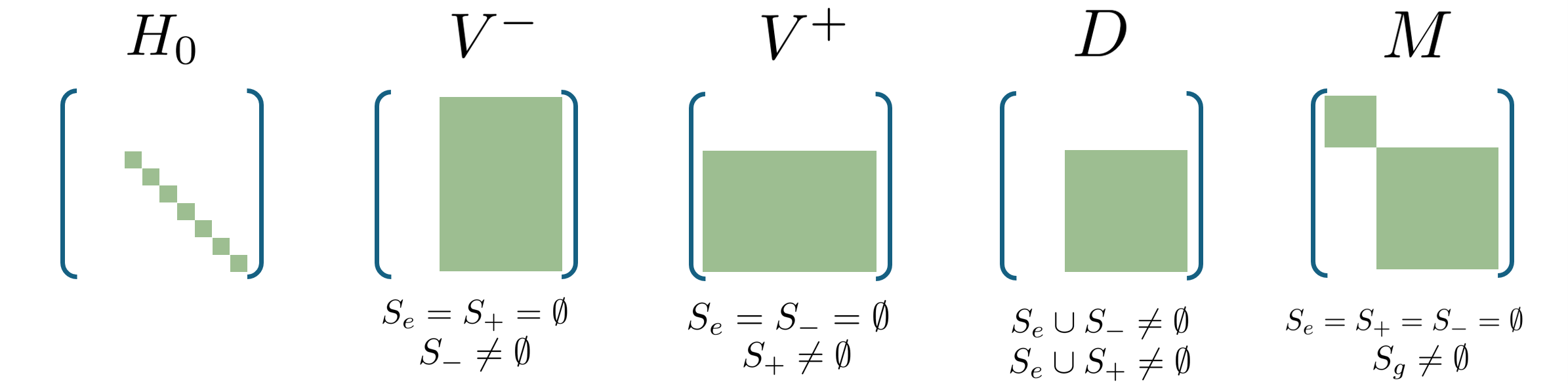}
 \caption{Depiction of the breakdown of $H^{(n)}$ into $H_0,D^{(n)},(V^{(n)})^+,(V^{(n)})^-,M^{(n)}$ in the eigenbasis of $H_0$, with the upper left block corresponding to the ground states of $H_0$. The white regions of the matrices correspond to entries that are necessarily zero by the definitions of these operator collections. When the context is clear, we sometimes omit the superscript $(n)$ for brevity. The aim of the KAM procedure is to rotate away $V^+, V^-$ into $D$ and $M$ (Section~\ref{sec:KAM_step}). Though $D$ and $M$ are both block diagonal, they contain different words, and we handle $M$ with special care when constructing the generator of rotations (Section~\ref{sec:construction_of_A}). We maintain control over the locality of the rotated operators (Sections~\ref{sec: locality of a} and~\ref{sec: controlling_rotation}) in the sense of the intensive word norm; this control over locality has several benefits. By TQO-I, only sufficiently nonlocal operators can be logical operators and hence cause splitting between the ground states. We throw nonlocal operators into $E$ (not pictured, can have any entry nonzero), and we bound $E$'s operator norm through our control over the locality, allowing us to upper bound the splitting. As an aside, TQO-I ensures that $M$ is a linear combination of stabilizers (Section~\ref{sec:proof_rel_bounded}) and is then strictly diagonal in the eigenbasis of $H_0$. Control over locality also plays an important role in arguing (Section~\ref{sec:proof_rel_bounded}) that the lower right blocks of $D$ and $M$ corresponding to the excited states of $H_0$ cannot affect the low-energy spectrum of $H_0 + D + M$ too much; in particular, in the Hamiltonian $H_0+D+M$, this block cannot shift the energies of the excited states of $H_0$ below those of the ground states of $H_0$.}
\label{fig: decomposition}
\end{figure}


\subsection{A KAM step}\label{sec:KAM_step}
Let us put an index $(n)$ (scale) on all of these terms except $H_0$, and then the generator of rotations $A^{(n)}$ is defined such that it satisfies
\begin{equation}
 \overline P (i[H_0+D^{(n)}+M^{(n)},A^{(n)}]+V^{(n)}) P =  0 \label{eq: basic a property one}
\end{equation}
where $P$ is the ground state projector (c.f. Section~\ref{sec:stabilizer_Hamiltonian}) and $\overline P = (1-P)$. 

\begin{equation}
 P (i[H_0+D^{(n)}+M^{(n)},A^{(n)}]+V^{(n)}) \overline P =0    \label{eq: basic a property two} 
\end{equation}
This leads to definitions of all terms, as follows.
Define $\widetilde{Z}^{(0)} = Z^{(0)}$, and
\begin{equation}\begin{split}
\label{eq:tildeZn+1_from_Zn}
    \widetilde{Z}^{(n+1)} \equiv Z^{(n)}+ \sum_{k=1}^{\infty}    \frac{1}{k!} 
(-i\mathrm{ad}_{A^{(n)}   })^k   (H_0+Z^{(n)} ).
\end{split}\end{equation}
This gives a unique definition of the operator $\widetilde Z^{(n+1)}$. By the rules stated in subsection \ref{sec: collections and norms}, the expression \eqref{eq:tildeZn+1_from_Zn} also defines an operator-collection $\widetilde Z^{(n+1)}$. 
Next, we construct $Z^{(n+1)}$ so as to satisfy \eqref{eq: rotated ham}, namely 
\begin{equation}\label{eq: delta h}
  Z^{(n+1)}_\bfs = \begin{cases}
      \widetilde{Z}^{(n+1)}_\bfs & \bfs:|S| < \dstar \\
  0 & \text{otherwise}
  \end{cases}
\end{equation}

The operator-collections  $D^{(n+1)},V^{(n+1)},M^{(n+1)}$ are now defined as indicated in subsection \eqref{sec: splitting of ham}, as restrictions of $Z^{(n+1)}$ to the appropriate words $\bfs$.
The crucial thing to note here is that, by virtue of equations (\ref{eq: basic a property one},\ref{eq: basic a property two}), the operator-collection $V^{(n+1)}$ receives only contributions from (in other words: ``is the restriction of")
\begin{equation} \label{eq: origin of v}
\sum_{k=1}^{\infty}    \frac{1}{k!} 
(- i\mathrm{ad}_{A^{(n)}   })^k   (V^{(n)} ) +   \sum_{k=2}^{\infty}    \frac{1}{k!} 
(-i\mathrm{ad}_{A^{(n)}   })^k   (H_0+D^{(n)}+M^{(n)} ).   
\end{equation}
The error operator is then defined as 
\begin{align}\label{eq: definition error}
E^{(n+1)} &= \exp{-i \mathrm{ad}_{A^{(n)}}}E^{(n)}+ \sum_{\bfs: |S|\geq \dstar}(\widetilde{Z}^{(n+1)})_{\bfs} \\
&= \exp{-i \mathrm{ad}_{A^{(n)}}}E^{(n)}+ \sum_{\bfs: |S|\geq \dstar}(\widetilde{Z}^{(n+1)}-Z^{(n)})_{\bfs}   \label{eq: definition error}
\end{align}
In Section~\ref{sec:bound_on_error}, we bound the operator norm of $E^{(n)}$ using that it contains high weight operators that will be highly suppressed by virtue of the finite intensive word norms.


\subsection{Bounds on running coupling constants}\label{sec:running_couplings}
We start from the perturbation strength
\begin{equation}
\epsilon_0= ||Z^{(0)}||_{\mu_0}, 
\end{equation}
implying also
\begin{equation}
||V^{(0)}||_{\mu_0}   \leq \epsilon_0, \qquad ||D^{(0)}||_{\mu_0}  \leq \epsilon_0, \qquad  ||M^{(0)}||_{\mu_0}  \leq \epsilon_0,
\end{equation}
and we define the running coupling constants
\begin{equation}
\eta_n \coloneqq ||  D^{(n)}||_{\mu_n} + ||  M^{(n)}||_{\mu_n}, \qquad 
\epsilon_n \coloneqq || V^{(n)}||_{\mu_n},\qquad n\geq 1
\end{equation}
for a decreasing sequence $(\mu_n)_{n=0,1,\ldots}$ that will be fixed below.
Let us now derive some flow equations, or rather: ``flow inequalities", for these running coupling constants.
For these to be valid, we need that $\mu_n$ is larger than the critical value $\mu_\ast$ (see \eqref{eq:def_mu_star}), as this is a prerequisite for Proposition \ref{lem: locality of a}, which will be used below.  
We assume therefore that $\mu_n >\mu_*$ for any $n$, and we will fix $\mu_n$ more precisely below.

~

\noindent\emph{Derivation of flow equations}. By Proposition \ref{lem: locality of a}, we have
\begin{equation}\label{eq: bound on a norm}
    || A^{(n)}||_{\mu_n} \leq \frac{2\epsilon_n}{1-e\eta_n},
\end{equation}
provided that $e\eta_n<1$. 
To get a bound on $\epsilon_{n+1}$, we start from 
\eqref{eq: origin of v} and we use equation \eqref{eq: bound on a norm} and Proposition \ref{lem: commutator}. This yields
\begin{equation}
\epsilon_{n+1} \leq \sum_{k=1}^{\infty}  \frac{(k+1)!}{k!} \frac{(4e\epsilon_n (1-e\eta_n)^{-1})^k}{|\mu_n-\mu_{n+1}|^{k+1}} \epsilon_n + 
\sum_{k=2}^{\infty}  \frac{(k+1)!}{k!} \frac{(4e\epsilon_n (1-e\eta_n)^{-1})^k}{|\mu_n-\mu_{n+1}|^{k+1}} (e^{\mu_0}+ \eta_n), 
\end{equation}
where we have also used the equality $
   ||H_0||_{\mu_0} = e^{\mu_0} $.
To bound $\eta_n$, we directly use
\eqref{eq:tildeZn+1_from_Zn} to get, again using Proposition \ref{lem: locality of a} and \ref{lem: commutator}, 
\begin{equation} \label{eq: inductive bound etan}
\eta_{n+1} \leq \eta_n+ \epsilon_n +  \sum_{k=1}^{\infty}  \frac{(k+1)!}{k!} \frac{(4e\epsilon_n (1-e\eta_n)^{-1})^k}{|\mu_n-\mu_{n+1}|^{k+1}} (e^{\mu_0}+ \eta_n+\epsilon_n).
\end{equation}
We will now fix the decay rates $\mu_n$. As already stated, we need $\mu_n>\mu_*$ and so a convenient choice is 
\begin{equation}
\mu_n-\mu_{n+1} =
\frac{3}{\pi^2}
(\mu_0-\mu_*)\frac{1}{(n+1)^2}
\end{equation}
which ensures that 
\begin{equation}
\mu_n \geq \mu_{\infty}:=\mu_*+ (\mu_0-\mu_*)/2,
\end{equation}
for all $n$. 

Our most important result is then Proposition \ref{prop: running couplings} below. It is derived from the above estimates by inspection. 
\begin{proposition}\label{prop: running couplings}
Let $\mu_0>\mu_*$ and let $\mu_n$ be fixed as explained above.  
There is a $\epsilon(\mu_0,\mu_*) >0$, such that, if $\epsilon_0\leq \epsilon(\mu_0,\mu_*)$, then 
\begin{equation}
\eta_{n}\leq 2\epsilon_0,\qquad   \epsilon_{n} \leq  \epsilon^{(3/2)^{n}}_0.
\end{equation}
\end{proposition}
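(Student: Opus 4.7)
The plan is a standard KAM-type induction on $n$, iterating the two flow inequalities derived just above. I would take as joint inductive hypothesis $\eta_n \leq 2\epsilon_0$ and $\epsilon_n \leq \epsilon_0^{(3/2)^n}$, with the base case following from the fact that $V^{(0)}, D^{(0)}, M^{(0)}$ are the pieces of the partition of $Z^{(0)}$ by the word-classes of Section~\ref{sec: splitting of ham}, each of $\mu_0$-word-norm at most $\epsilon_0$. Before entering the inductive step I would fix the threshold $\epsilon(\mu_0,\mu_*)$ small enough that $e\eta_n \leq 1/2$ holds under the hypothesis; this lets Proposition~\ref{lem: locality of a} apply, gives $(1-e\eta_n)^{-1} \leq 2$, and bounds $||A^{(n)}||_{\mu_n} \leq 4\epsilon_n$.

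The key structural observation driving the $\epsilon$-bound is that the flow inequality for $\epsilon_{n+1}$ is \emph{quadratic} in $\epsilon_n$. In the first sum the leading $k=1$ term is already of order $\epsilon_n^2$. In the second sum the $k=1$ term is absent: by the defining relations \eqref{eq: basic a property one}--\eqref{eq: basic a property two} the generator $A^{(n)}$ is engineered precisely so that $V^{(n)}$ is cancelled at first order by the commutator with $H_0+D^{(n)}+M^{(n)}$, so only $k\geq 2$ survives in \eqref{eq: origin of v}, again quadratic in $\epsilon_n$. Combined with $\mu_n-\mu_{n+1}=(3/\pi^2)(\mu_0-\mu_*)(n+1)^{-2}$, bounding each geometric series in $k$ by its first nonzero term gives an estimate of the form
\begin{equation}
    \epsilon_{n+1} \leq C(\mu_0,\mu_*)(n+1)^{6}(1+e^{\mu_0})\,\epsilon_n^{2}.
\end{equation}
Rewriting the target as $\epsilon_0^{(3/2)^{n+1}} = \epsilon_n \cdot \epsilon_0^{(1/2)(3/2)^n}$, closing the induction reduces to verifying $C(n+1)^6(1+e^{\mu_0})\epsilon_n \leq \epsilon_0^{(1/2)(3/2)^n}$; since the right-hand side decays double-exponentially in $n$ while the left decays double-exponentially up to a polynomial prefactor in $n$, this holds uniformly in $n$ once $\epsilon_0$ is below a $(\mu_0,\mu_*)$-dependent threshold.

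For $\eta_{n+1}$ the flow is only linear in $\epsilon_n$: $\eta_{n+1} \leq \eta_n + \epsilon_n\bigl(1 + C(n+1)^4(1+e^{\mu_0})\bigr)$. Telescoping and inserting the $\epsilon$-bound yields $\eta_{n} - \eta_0 \leq \sum_{m=0}^{n-1}\epsilon_0^{(3/2)^m}\bigl(1+C(m+1)^4(1+e^{\mu_0})\bigr)$; the super-exponential decay of $\epsilon_m$ makes the sum bounded by a constant times $\epsilon_0$ whose coefficient can be made arbitrarily small by shrinking $\epsilon_0$, which keeps $\eta_n \leq 2\epsilon_0$ along the chain. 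The main obstacle---really the only delicate point---is selecting a single threshold $\epsilon(\mu_0,\mu_*)>0$ that works uniformly in $n$ for both the $\epsilon$-recursion and the $\eta$-telescoping. This is the canonical KAM small-divisor tradeoff: the polynomial-in-$n$ factors coming from $(\mu_n-\mu_{n+1})^{-(k+1)}$ must be dominated by the super-exponential decay of $\epsilon_n$ at every step. Since the worst case occurs at small $n$, a single explicit smallness condition on $\epsilon_0$ depending on $\mu_0-\mu_*$ and $e^{\mu_0}$ suffices for the entire chain.
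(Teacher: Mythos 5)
Your proposal is the fleshed-out version of exactly the argument the paper leaves implicit (the paper only says the proposition "is derived from the above estimates by inspection"): the quadratic structure of the $\epsilon$-flow coming from the $k\geq 2$ truncation in \eqref{eq: origin of v}, the absorption of the $(\mu_n-\mu_{n+1})^{-(k+1)}\sim (n+1)^{2(k+1)}$ small-divisor factors by the double-exponential decay of $\epsilon_n$, and the telescoping of the $\eta$-flow are all the intended steps, and your verification of the induction $C(n+1)^6(1+e^{\mu_0})\epsilon_n \leq \epsilon_0^{(1/2)(3/2)^n}$ is correct.

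One quantitative point does not close as you state it. Since $D^{(0)},M^{(0)}$ are each only bounded by $\epsilon_0$, the base case gives $\eta_0\leq 2\epsilon_0$, and the increment at step $m=0$ in your telescoped sum is $\epsilon_0\bigl(1+C(1+e^{\mu_0})\bigr)$, whose coefficient is bounded below by $1$ and hence \emph{cannot} be made arbitrarily small by shrinking $\epsilon_0$ (only the $m\geq 1$ terms are $o(\epsilon_0)$). So the induction as written yields $\eta_n\leq (3+C(\mu_0))\epsilon_0$ rather than $2\epsilon_0$. This imprecision is inherited from the paper's own setup rather than introduced by you, and it is harmless downstream (the constant $2$ only enters linearly in the relative-boundedness estimate of Section~\ref{sec:proof_rel_bounded}), but a clean write-up should either enlarge the constant in the proposition or note that the stated bound holds with $2$ replaced by a $\mu_0$-dependent constant.
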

 We connect this now to the assumptions made in our results:
\begin{proposition}
     The conditions of Proposition~\ref{prop: running couplings} follow from the conditions of Theorem~\ref{thm: main}. 
\end{proposition}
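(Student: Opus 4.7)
The plan is to reduce the statement to a straightforward bookkeeping exercise that uses Proposition~\ref{prop:tripleboundsdouble} to convert between the Pauli norm $|||\cdot|||$ (in which Theorem~\ref{thm: main}'s hypotheses on the perturbation are stated) and the word norm $||\cdot||_\mu$ (in which Proposition~\ref{prop: running couplings} operates). The quantity $\mu_{\#}$ in Section~\ref{subsec:Zloc_condition} was manifestly defined so that this conversion lands on top of the condition $\mu_0 > \mu_*$, so the only real content is to pick the right value of $\mu_0$ and then a small enough $\widetilde\epsilon$.

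Concretely, I would first set
\begin{equation}
\mu_0 \;:=\; \frac{\widetilde\mu_0 - (w_q\log 2 + \kappa + \log(2w_c))}{w_q},
\end{equation}
so that $w_q\mu_0 + (w_q\log 2 + \kappa + \log(2w_c)) = \widetilde\mu_0$. The hypothesis $\widetilde\mu_0 > \mu_{\#} = w_q\mu_* + (w_q\log 2 + \kappa + \log(2w_c))$ of Theorem~\ref{thm: main} then immediately gives $\mu_0 > \mu_*$, which is the first requirement of Proposition~\ref{prop: running couplings}. Next, I would invoke Proposition~\ref{prop:tripleboundsdouble}, which is applicable because the initial operator-collection for $Z^{(0)}$ is constructed (in Section~\ref{sec:initial_choice_operator_collection}) in exactly the form to which that proposition applies: this gives
\begin{equation}
\epsilon_0 \;=\; ||Z^{(0)}||_{\mu_0} \;\leq\; |||Z^{(0)}|||_{\widetilde\mu_0} \;=\; \widetilde\epsilon_0.
\end{equation}

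Finally, I would simply choose the threshold $\widetilde\epsilon$ appearing in condition \eqref{eq:mu0_norm_Z0_eps0} to be $\widetilde\epsilon := \epsilon(\mu_0,\mu_*)$, where $\epsilon(\mu_0,\mu_*)$ is the smallness threshold produced by Proposition~\ref{prop: running couplings}. This is legitimate because $\mu_0$ and $\mu_*$ are themselves determined by $\widetilde\mu_0$ together with the geometric constants $w_q,w_c,\kappa,\ell$ (and the TQO constant $c$ in $\mu_*$), so $\widetilde\epsilon$ is a function of the same $\Theta(1)$ data — not of the system size $N$. With this choice, $\widetilde\epsilon_0 \leq \widetilde\epsilon$ combined with the previous display yields $\epsilon_0 \leq \epsilon(\mu_0,\mu_*)$, which is the second hypothesis of Proposition~\ref{prop: running couplings}.

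I do not expect any genuine obstacle here; the only thing to watch is that we are applying Proposition~\ref{prop:tripleboundsdouble} to the specific operator-collection chosen in Section~\ref{sec:initial_choice_operator_collection} (rather than some arbitrary collection representing the same operator), and that the constant $c$ entering $\mu_*$ through $c^{-1}$ in \eqref{eq:def_mu_star} is precisely the one from the hypothesis $d_\ast \geq c\log N$, so $\mu_*$ is indeed $\Theta(1)$ and $\widetilde\epsilon$ is $N$-independent, as required for a thermodynamic-limit statement.
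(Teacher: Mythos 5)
Your proposal is correct and follows essentially the same route as the paper's proof: the paper defines $f(z) = \frac{1}{w_q}(z - (w_q\log 2 + \kappa + \log(2w_c)))$, sets $\mu_0 = f(\widetilde\mu_0)$ (identical to your choice), observes $\widetilde\mu_0 > \mu_{\#} \Leftrightarrow \mu_0 > \mu_*$, applies Proposition~\ref{prop:tripleboundsdouble} to pass from the Pauli norm to the word norm, and picks $\widetilde\epsilon := \epsilon(f(\widetilde\mu_0),\mu_*)$ exactly as you do. Your closing remarks on the $N$-independence of the threshold are a welcome clarification but do not change the argument.
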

\begin{proof}
For ease, define $f(z) = \frac{1}{\wq}(z - (w_q\log(2) + \kappa + \log(2 w_c)))$. Note that $\widetilde{\mu}_0 > \mu_{\#}  \leftrightarrow f(\widetilde{\mu}_0) > \mu_{*}$, and $|||Z^{(0)}|||_{\widetilde{\mu}_0} < \widetilde{\epsilon}(\widetilde{\mu_0}) \rightarrow ||Z^{(0)}||_{f(\widetilde{\mu}_0)} < \widetilde{\epsilon}(\widetilde{\mu}_0)$ by Proposition~\ref{prop:tripleboundsdouble}. In Theorem~\ref{thm: main}, we choose $\widetilde{\epsilon}(\widetilde{\mu}_0, \mu_{\#}) := \epsilon(f(\widetilde{\mu}_0), f(\mu_{\#})) = \epsilon(f(\widetilde{\mu}_0), \mu_*)$. Then the conditions of Theorem~\ref{thm: main} being satisfied imply that the conditions of Proposition~\ref{prop: running couplings} are satisfied with the choice $\mu_0 = f(\widetilde{\mu}_0)$ and $\epsilon_0= ||Z^{(0)}||_{\mu_0}$.
\end{proof}



\subsection{Bounds on error terms $E^{(n)}$}\label{sec:bound_on_error}

Since the error term is the only term that can contain logical operators, it is ultimately responsible for the splitting between ground states of the rotated Hamiltonian. We can strongly bound the error term $E$, starting from \eqref{eq: definition error}:
\begin{align}
|| E^{(n+1)} || \leq&\,  || E^{(n)} ||  +  \sum_{\bfs: |S|\geq \dstar} || 
 (\widetilde{Z}^{(n+1)}-Z^{(n)})_{\bfs} || \\
\leq&\,  || E^{(n)} || + N_c e^{-\mu_{n+1} \dstar} || \widetilde{Z}^{(n+1)}-Z^{(n)}||_{\mu_{n+1}}
\end{align}
Our bound for $|| \widetilde{Z}^{(n+1)}-Z^{(n)}||_{\mu_{n+1}}$ is the same as the right hand side of \eqref{eq: inductive bound etan} but without the term $(\eta_n+\epsilon_n)$ (because now we are subtracting $Z^{(n)}$).  By using Proposition \ref{prop: running couplings}, we can bound the resulting series as   
\begin{equation}
    || \widetilde{Z}^{(n+1)}-Z^{(n)}||_{\mu_{n+1}} \leq C  n^4 \epsilon_{n} 
\end{equation}
Therefore,
\begin{equation}
|| E^{(n+1)} || \leq  || E^{(n)} ||  + C {N_c} e^{-\mu_{n}\dstar} n^4 \epsilon_{n}.
\end{equation}
Since $E^{(0)}=0$, and using Proposition~\ref{prop: running couplings} ($\epsilon_{n} \leq  \epsilon^{(3/2)^{n}}_0$), we get 
 the uniform in $n$-bound
\begin{equation}
|| E^{(n)} || \leq  C {N_c} e^{-\mu_{\infty}\dstar} \epsilon_0
\end{equation}
In principle, we could run the renormalization scheme an infinite number of times, to obtain the Hamiltonian
\begin{equation}
H^{(\infty)}=H_0+D^{(\infty)}+M^{(\infty)}+E^{(\infty)},
\end{equation}
i.e.\ where the $V$-term has been completely eliminated.
However, there is no need to do this, since after $n_*$ KAM steps, with $n_*$ the smallest natural number such that
\begin{equation}
\epsilon_{n_*} \leq  C' \epsilon_0 e^{-\mu_{\infty}\dstar},
\end{equation}
we already obtain the operator $V^{(n_*)}$, such that 
$|| V^{(n_*)}|| \leq   || E^{(n_*)}||   \leq  C {N_c}\epsilon_0 e^{-\mu_{\infty}\dstar}$,
which leads to the same result. 



\subsection{Relative boundedness}\label{sec: relative boundedness}

Up to now, we have passed from $H^{(0)}$ to $H^{(n_*)}$. By construction,  the operators
\begin{equation}
H_0+ D^{(n)}
\end{equation}
satisfy 
\begin{equation}
(H_0+ D^{(n)})P=P(H_0+ D^{(n)})=0. 
\end{equation}
Once one adds the term $M^{(n)}$, this is no longer true, as the energy of the ground state sector shifts.  Either way, one needs a nontrival argument to show that the spectral gap of $
H_0+ D^{(n)}+ M^{(n)}
$ remains open.  For any operator $O$, we define the ground space expectation 
\begin{equation}
  \langle O\rangle= \frac{\mathrm{Tr} P OP }{\mathrm{Tr} P}.
\end{equation}
Then 
\begin{equation}
K^{(n)}= H_0+ D^{(n)}+ M^{(n)}-\langle  M^{(n)} \rangle 
\end{equation}
also satisfies
\begin{equation} \label{eq: zero eigenvalue of kn}
K^{(n)} P= PK^{(n)}= 0.
\end{equation}
This is because every term $M_{\bfs}$ satisfies
\begin{equation}
PM_{\bfs}=M_{\bfs}P=\langle M_{\bfs} \rangle P
\end{equation}
as a direct consequence of  \textbf{TQO-I}.
Furthermore, it is manifest (c.f.\eqref{eq: collection to operator bound}) that 
\begin{equation}
|\langle  M^{(n)} \rangle |\leq  ||M^{(n)}  ||_0  {N_c} \leq C \epsilon_0 {N_c},
\end{equation}
uniformly in $n$, so the shift in energy density is small.  
To prove that the gap remains open, we establish
\begin{proposition}\label{prop: relative boundedness}
The operator $K^{(n)} - H_0 $ is relatively bounded with respect to $H_0$, with relative bound $C_0\epsilon_0$. More precisely, it holds that
\begin{equation}
|| (K^{(n)} -H_0)\psi || \leq    C_0\epsilon_0 ||H_0\psi||.
\end{equation}
for any $\psi \in \mathcal{H}_\Lambda$ and a 
constant $C_0$ uniform in $n$.
\end{proposition}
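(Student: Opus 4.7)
The plan is to split $K^{(n)} - H_0 = D^{(n)} + (M^{(n)} - \langle M^{(n)}\rangle)$ and bound each piece by $C\epsilon_0 \langle\psi,H_0\psi\rangle$, exploiting the word-norm estimates $\|D^{(n)}\|_{\mu_n}, \|M^{(n)}\|_{\mu_n} \leq \eta_n \leq 2\epsilon_0$ from Proposition~\ref{prop: running couplings}. The unifying tool is that $H_0 = \sum_\alpha E_\alpha$ with commuting orthogonal projectors $E_\alpha$, so $\sum_\alpha \langle\psi, E_\alpha\psi\rangle = \langle\psi, H_0\psi\rangle$, and any word $\bfs$ whose operator carries "excitation structure" on some check $\alpha$ can be absorbed into the $E_\alpha$-piece of this sum.

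For $D^{(n)}$, every word $\bfs$ in $D$ has both $S_+\cup S_e$ and $S_-\cup S_e$ nonempty. Hence $D^{(n)}_\bfs$ contains at least one $E_\alpha$ among its left projectors and at least one $E_\beta$ among its right projectors; since all the sandwich projectors commute and $E_\alpha^2=E_\alpha$, we may write $D^{(n)}_\bfs = E_\alpha D^{(n)}_\bfs E_\beta$. Cauchy--Schwarz then gives
\[
|\langle\psi, D^{(n)}_\bfs\psi\rangle| \leq \|D^{(n)}_\bfs\|\,\|E_\alpha\psi\|\,\|E_\beta\psi\| \leq \tfrac{1}{2}\|D^{(n)}_\bfs\|\bigl(\langle\psi,E_\alpha\psi\rangle + \langle\psi,E_\beta\psi\rangle\bigr).
\]
Summing over $\bfs$ and regrouping by check index, the coefficient of $\langle\psi,E_\alpha\psi\rangle$ is majorized by $\sum_{\bfs:\alpha\in S}\|D^{(n)}_\bfs\| \leq \|D^{(n)}\|_0 \leq \eta_n$. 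So $|\langle\psi, D^{(n)}\psi\rangle| \leq \eta_n \langle\psi, H_0\psi\rangle \leq C\epsilon_0 \langle\psi, H_0\psi\rangle$.

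For $M^{(n)} - \langle M^{(n)}\rangle$, each word has $S_e = S_+ = S_- = \emptyset$, so $M^{(n)}_\bfs = G_S \tilde M_\bfs G_S$ with $G_S = \prod_{\alpha\in S}G_\alpha$. I first verify $[M^{(n)}_\bfs, C_\alpha]=0$ for every check: for $\alpha\in S$ this follows from $C_\alpha G_\alpha = G_\alpha$; for $\alpha \notin S$ it follows from~\eqref{eq:supp_of_x_tilde}, which forces $\tilde M_\bfs$ and $C_\alpha$ to have disjoint qubit supports. Since $|S|<\dstar\leq d/w_c$, \textbf{TQO-I} promotes this to $M^{(n)}_\bfs \in \mathcal A_\caG$; since $S$ is connected (a property preserved under the word-multiplication rule, and used at the initial step per Section~\ref{sec:initial_choice_operator_collection}) and $|S|<\dstar\leq \widetilde d$, \textbf{TQO-II} sharpens this to $M^{(n)}_\bfs \in \mathcal A_{\caG_{B_{\ell|S|}(S)}}$. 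Setting $Q_\bfs := \prod_{\alpha\in B_{\ell|S|}(S)} G_\alpha$, every generator of this subalgebra acts as identity on the range of $Q_\bfs$, so $M^{(n)}_\bfs Q_\bfs = \langle M^{(n)}_\bfs\rangle Q_\bfs = Q_\bfs M^{(n)}_\bfs$, where the scalar is identified with $\langle M^{(n)}_\bfs\rangle$ by evaluating on $P \subseteq \mathrm{range}(Q_\bfs)$. Hence $M^{(n)}_\bfs - \langle M^{(n)}_\bfs\rangle = (1-Q_\bfs)(M^{(n)}_\bfs - \langle M^{(n)}_\bfs\rangle)(1-Q_\bfs)$, which together with the operator inequality $1-\prod_\alpha G_\alpha \leq \sum_\alpha E_\alpha$ yields
\[
|\langle\psi, (M^{(n)}_\bfs - \langle M^{(n)}_\bfs\rangle)\psi\rangle| \leq 2\|M^{(n)}_\bfs\|\,\|(1-Q_\bfs)\psi\|^2 \leq 2\|M^{(n)}_\bfs\| \sum_{\alpha\in B_{\ell|S|}(S)}\langle\psi,E_\alpha\psi\rangle.
\]
Regrouping the sum over $\bfs$ by $\alpha$ gives the coefficient $c_\alpha = \sum_{\bfs:\alpha\in B_{\ell|S|}(S)} 2\|M^{(n)}_\bfs\|$; splitting by $|S|=r$, bounding $|B_{\ell r}(\alpha)| \leq e^{\kappa\ell r}$ via growth of balls, and using $\sup_{\alpha'}\sum_{\bfs:|S|=r,\alpha'\in S}\|M^{(n)}_\bfs\| \leq e^{-\mu_n r}\|M^{(n)}\|_{\mu_n}$ produces a geometric series $\sum_r e^{(\kappa\ell-\mu_n)r}$ that converges because $\mu_n \geq \mu_* \geq \kappa\ell + (\wc+1)\log 4$. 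Thus $c_\alpha \leq C\|M^{(n)}\|_{\mu_n} \leq C\epsilon_0$, and a final sum over $\alpha$ reconstructs $\langle\psi,H_0\psi\rangle$, completing the proof.

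The main obstacle is the $M$-bound: a naive argument using only $1-P \leq H_0$ would sum $\|M^{(n)}_\bfs\|$ over all $\bfs$, producing an unusable extensive prefactor $N_c\epsilon_0$. \textbf{TQO-II} is what lets one replace the global $P$ by the local $Q_\bfs$ attached to each word, and the growth-of-balls constraint baked into $\mu_*$ is what keeps the resulting local sum finite after the double regrouping. This is a graph-local analogue of the argument BHM use for the analogous "energy-diagonal constant" pieces in the Euclidean setting, but it is the place where the particular choice of $\mu_*$ and the $\ell|S|$-neighborhood in our \textbf{TQO-II} are both essential.
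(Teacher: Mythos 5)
Your proposal is correct and follows essentially the same route as the paper's proof in Section~\ref{sec:proof_rel_bounded}: the same splitting into $D^{(n)}$ and $M^{(n)}-\langle M^{(n)}\rangle$, the same use of the excitation projectors on both sides of each $D_\bfs$ (you phrase it via Cauchy--Schwarz where the paper uses the operator inequality $1-G_{\alpha_-}G_{\alpha_+}\leq E_{\alpha_-}+E_{\alpha_+}$, which is equivalent), and the same \textbf{TQO-I}/\textbf{TQO-II} localization of $M_\bfs$ in $B_{\ell|S|}(S)$ followed by the regrouping-by-check and growth-of-balls geometric series. Your extra verifications (commutation of $M_\bfs$ with all checks, connectedness of $S$) are points the paper leaves implicit, and are correct.
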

This proposition is proven in Section~\ref{sec:proof_rel_bounded}.
As is well-known, relative boundedness leads to some stability of the spectrum. 
In particular, one deduces that the resolvent $(z-K^{(n)})^{-1}$ enjoys the bound
\begin{equation}\label{eq: first bound resolvent}
|| (z-K^{(n)})^{-1}|| \leq \frac{1}{r_z}  \frac{1}{1-C_0\epsilon (1+(|z|/r_z))  },\qquad r_z=\mathrm{dist}(z,\mathrm{spec}(H_0))
\end{equation}

Given that $$\mathrm{spec}(H_0)\subset \{0,1,2,\ldots\},$$  we deduce that, provided $C_0\epsilon_0 <1 $ is small enough,  the spectrum of $K^{(n)}$ is contained in discs centered on $m=0,1,2,\ldots$ with radius $Cm\epsilon_0$.
In particular, if $K^{(n)}$ is Hermitian, then
\begin{equation}
\mathrm{spec}(K^{(n)}) \subset \bigcup_{m=0,1,2,\ldots}   [(1-C\epsilon_0)m,(1+C\epsilon_0)m]
\end{equation}
Note that we already explicitly checked that $0$ is an eigenvalue in \eqref{eq: zero eigenvalue of kn}, and the above inclusion says moreoever that $0$ is an isolated part of the spectrum.

\subsection{Effect of $E^{(n)}$ on spectrum}\label{sec:effects_of_error_on_spectrum}
Finally, we discuss the spectral problem for the operator $H^{(n_*)}$ that we cast as
\begin{equation}
H^{(n_*)}=  K^{(n_*)} +   (V^{(n_*)}+E^{(n_*)})   +  \langle M^{(n_*)}\rangle
\end{equation}
The last term is a multiple of identity, and the operator $(V^{(n_*)}+E^{(n_*)})$ is bounded as
\begin{equation}\label{eq: ridiculous perturbation}
||V^{(n_*)}+E^{(n_*)} || \leq C {N_c} \epsilon_0 e^{-\mu_\infty \dstar} =: \delta,
\end{equation}
by the results of subsection \ref{sec:bound_on_error}.
Hence, the resolvent $(z-H^{(n_*)}-\langle M^{(n_*)}\rangle )^{-1}$ exists for $z$ such that
$$
\delta < || (z-K^{(n)})^{-1}||^{-1}
$$
By inspection of  \eqref{eq: first bound resolvent}, we deduce that the spectrum of $H^{(n_*)}-\langle M^{(n_*)}\rangle $ is contained in discs centered on $m=0,1,2,\ldots$ with radius $Cm\epsilon_0+C\delta$. 
If all operators are Hermitian, then these results can be slightly strenghtened (since in that case $|| (z-K^{(n)})^{-1}|| = \frac{1}{\mathrm{dist}(z, \mathrm{spec}(K^{(n)} )}$) and we obtain that
\begin{equation}
\mathrm{spec}(H^{(n_*)}-\langle M^{(n_*)}\rangle ) \subset \bigcup_{m=0,1,2,\ldots}   [(1-C_0\epsilon_0)m -\delta,(1+C_0\epsilon_0)m+\delta].
\end{equation}
and since $\langle M^{(n_*)}\rangle$ is a number, this implies Theorem \ref{thm: main}.


Moreover, it follows that, for $m$ such that $(1+C_0\epsilon_0)m+\delta <  (1-C_0\epsilon_0)(m+1)-\delta $, the spectral patch contained in $[(1-C_0\epsilon_0)m -\delta,(1+C_0\epsilon_0)m+\delta]$ has the same cardinality as the unperturbed eigenvalue $m$ of $H_0$.

\subsection{The perturbed spectral projector}\label{sec:pert_spec_proj}

We recall that $H_0$ and $K^{(n_*)}$ have an isolated eigenvalue at zero, with corresponding spectral projector $P$.   For the operator $H^{(n_*)}$, the eigenvalue and spectral projector are perturbed, and we call the latter $P'$. By standard, rigorous perturbation theory, see e.g.\ \cite{kato2013perturbation}, we can construct an operator $A^{(\infty)}$ with 
\begin{equation} \label{eq: bound on a infinity}
||A^{(\infty)} || \leq C {N_c} \epsilon_0 e^{-\mu_\infty \dstar}
\end{equation}
such that
\begin{equation}
   P'= \exp{-i\mathrm{ad}_{A^{(\infty)}}}(P) 
\end{equation}
Ultimately, we are interested in the spectral projector $\widetilde P$ of the original Hamiltonian $H=H_0+ Z^{(0)}$.  To relate it to the spectral projector $P'$ of $H^{(n_*)}$, we simply need to undo the successive rotations, i.e.\
\begin{equation}\begin{split} \label{eq: final spectral projector}
    \widetilde P  &= 
    \exp{-i\mathrm{ad}_{A^{(0)}}}\ldots \exp{-i\mathrm{ad}_{A^{(n_*)}}} (P') \\
    &=     \exp{-i\mathrm{ad}_{A^{(0)}}}\ldots \exp{-i\mathrm{ad}_{A^{(n_*)}}}  \exp{-i\mathrm{ad}_{A^{(\infty)}}}(P) =: U P U^{-1}
\end{split}\end{equation} 
The operator $U$ in the last line can simply be defined as the product $e^{-i A^{(0)}}\ldots e^{-i A^{(n_*)}} e^{-i A^{(\infty)}}$. It is unitary if the perturbation is Hermitian, since then $A^{(n)}$ and $A^{(\infty)}$ can be chosen Hermitian. 
We now come to the proof of Theorem \ref{thm: locality} and Theorem \ref{thm: analyticity}, i.e.\ we investigate the locality properties and analyticity of the maps $O \mapsto U O U^{-1}$ and its inverse $O \mapsto  U^{-1} OU $. 
For the sake of explicitness, let us focus on the latter, as the former is treated analogously. It is important that we allow the perturbation to be non-Hermitian, as we also did in the previous subsections.
We use the shorthand
\begin{equation}
\caU^{(j)}=  \exp{i\mathrm{ad}_{A^{(j)}}},\qquad  \caU^{(\infty)}=  \exp{i\mathrm{ad}_{A^{(\infty)}}}
\end{equation}
so that the object of interest reads
\begin{equation}
  U^{-1} OU=  \caU^{(\infty)} \caU^{(n_*)}\ldots \caU^{(0)} (O)
\end{equation}
We write this as
\begin{align} \label{eq: splitting transfos}
U^{-1} OU  &= O  +
 ( O' -O  )+ (\caU^{(\infty)} (O') -O' )
\end{align}
where $O'=\caU^{(n_*)}\ldots \caU^{(0)} (O) $.   The three terms on the right-hand side of \eqref{eq: splitting transfos} correspond to the three terms on the right-hand side of \eqref{eq: definition locality observables under u}.
The third term, which is hence identified as the background term $O_{\mathrm{bg}}$, is bounded in norm by  $C||A^{(\infty)} ||  ||O'|| \leq C {N_c} \epsilon_0 e^{-\mu_\infty \dstar} ||O'||$ by using the smallness of $||A^{(\infty)} || $, i.e.\ \eqref{eq: bound on a infinity}. This will yield the claimed bound once we establish that $||O'||\leq C||O||$, which will follow from the treatment below.  Note that, 
if the perturbation is Hermitian, we have $||O'||=||O||$ by unitarity.

To estimate $|| O' -O  ||$, we will interpret $O',O$ as operator-collections and we first
establish locality-preservation for $\caU^{(n)}\ldots \caU^{(0)}$ when acting on operator-collections:
\begin{lemma}
For any operator-collection $B$ and any $n$,
\begin{equation}
    || \caU^{(n)}\ldots \caU^{(0)}(B)-B ||_{\mu_{n+1}} \leq C \epsilon_0 ||B||_{\mu_0} 
\end{equation}
\end{lemma}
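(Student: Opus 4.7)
I would prove the lemma by a telescoping decomposition, combined with the commutator bound (Proposition~\ref{lem: commutator}) and the running-coupling estimates of Proposition~\ref{prop: running couplings}. Set $B^{(0)} := B$ and recursively $B^{(j+1)} := \caU^{(j)}(B^{(j)})$, so that $B^{(n+1)} = \caU^{(n)}\cdots\caU^{(0)}(B)$. Then
\begin{equation*}
\caU^{(n)}\cdots\caU^{(0)}(B) - B \;=\; \sum_{j=0}^{n}\bigl(\caU^{(j)} - \mathrm{id}\bigr)(B^{(j)}) \;=\; \sum_{j=0}^{n}\sum_{k\geq 1}\frac{1}{k!}\bigl(i\,\mathrm{ad}_{A^{(j)}}\bigr)^k B^{(j)}.
\end{equation*}
The strategy is to estimate each summand in the intermediate norm $\|\cdot\|_{\mu_{j+1}}$ and then use the monotonicity of $\|\cdot\|_\mu$ in $\mu$ to collect everything in $\|\cdot\|_{\mu_n}$.

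\textbf{Per-step estimate and induction.} Applying the commutator bound termwise to the Taylor series and summing yields, provided $\|A^{(j)}\|_{\mu_j}$ is small compared to $|\mu_j-\mu_{j+1}|$,
\begin{equation*}
\bigl\|(\caU^{(j)}-\mathrm{id})(B^{(j)})\bigr\|_{\mu_{j+1}} \;\leq\; C\,\frac{\|A^{(j)}\|_{\mu_j}}{|\mu_j-\mu_{j+1}|}\,\|B^{(j)}\|_{\mu_j}.
\end{equation*}
By \eqref{eq: bound on a norm} and Proposition~\ref{prop: running couplings}, $\|A^{(j)}\|_{\mu_j}\leq C\epsilon_j \leq C\epsilon_0^{(3/2)^j}$; combined with $|\mu_j-\mu_{j+1}|^{-1}\leq C(j+1)^2$ from the chosen sequence of decay rates, the right-hand side is bounded by $C(j+1)^2\epsilon_j\|B^{(j)}\|_{\mu_j}$. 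Writing $\|B^{(j+1)}\|_{\mu_{j+1}} \leq \bigl(1+C(j+1)^2\epsilon_j\bigr)\|B^{(j)}\|_{\mu_j}$ and noting that $\prod_{j\geq 0}\bigl(1+C(j+1)^2\epsilon_j\bigr)\leq 2$ once $\epsilon_0$ is small enough (the series $\sum_j(j+1)^2\epsilon_j$ being dominated by its $j=0$ term), I obtain by induction the uniform bound $\|B^{(j)}\|_{\mu_j}\leq 2\|B\|_{\mu_0}$ for all $j\leq n$.

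\textbf{Summation and conclusion.} Inserting this uniform bound into the per-step estimate and using monotonicity $\|\cdot\|_{\mu_n}\leq\|\cdot\|_{\mu_{j+1}}$ for every $j<n$, I get $\|(\caU^{(j)}-\mathrm{id})(B^{(j)})\|_{\mu_n}\leq 2C(j+1)^2\epsilon_j\|B\|_{\mu_0}$. Summing the telescoping series over $j=0,\ldots,n$ then produces the claimed bound $C'\epsilon_0\|B\|_{\mu_0}$. The main balancing act is between the shrinking spacings $\mu_j-\mu_{j+1}\propto(j+1)^{-2}$ and the double-exponential smallness $\epsilon_j\leq\epsilon_0^{(3/2)^j}$; once $\epsilon_0$ lies below a threshold depending on $\mu_0$ and $\mu_\ast$, the polynomial prefactors $(j+1)^2$ are easily absorbed. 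A minor residual technicality is the $j=n$ endpoint of the telescope, whose natural output norm is $\mu_{n+1}<\mu_n$ rather than $\mu_n$; this is handled by slightly reapportioning the $\mu$-decrements among the $n+1$ telescoping steps so that each step loses strictly less than $\mu_j-\mu_{j+1}$ while the total cumulative loss remains at most $\mu_0-\mu_n$, which changes none of the qualitative estimates above.
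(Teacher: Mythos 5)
Your proof is correct and follows essentially the same route as the paper: a telescoping decomposition of $\caU^{(n)}\cdots\caU^{(0)}-\mathrm{id}$, the single-step bounds \eqref{eq: effect of exp first}--\eqref{eq: effect of exp zeroth} derived from Proposition~\ref{lem: commutator}, the estimate $\|A^{(j)}\|_{\mu_j}\leq C\epsilon_j$, and the double-exponential decay of $\epsilon_j$ to absorb the polynomially shrinking spacings $\mu_j-\mu_{j+1}$ (your telescope places the difference factor last, acting on the accumulated $B^{(j)}$, whereas the paper places it first and propagates with the remaining $\caU^{(i)}$'s — an immaterial variant). The only slip is cosmetic: \eqref{eq: effect of exp first} carries $|\mu_j-\mu_{j+1}|^{-2}$ rather than $|\mu_j-\mu_{j+1}|^{-1}$, so your prefactors should read $(j+1)^4$ instead of $(j+1)^2$, which changes nothing in the convergence of the sum.
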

\begin{proof}
We write
\begin{equation}
\caU^{(n)}\ldots\caU^{(0)}-1 = \sum_{j=0}^n     
( \caU^{(n)} \ldots \caU^{(j+1)} )(\caU^{(j)}-1),
\end{equation}
and we apply this map to an operator-collection $B$. 
We use \eqref{eq: effect of exp first} to bound $(\caU^{(j)}-1)(B)$ in the norm $||\cdot||_{\mu_{j+1}}$, and then successively \eqref{eq: effect of exp zeroth} to bound  $\caU^{(i)}(B')$ in the norm $||\cdot||_{\mu_{i+1}}$ with $i>j$ for the resulting operator-collection $B'$. In applying these bounds, we exploit the estimate
 $|| A^{(n)}||_{\mu_n} \leq C\epsilon_n$, see \eqref{eq: bound on a norm}. The lemma then follows by the fast decay of $\epsilon_n$.
\end{proof}

\noindent Finally, we  pass from the action of $\caU^{(0)}\ldots \caU^{(n)}$ on operator-collections , to its action on local operators $O$ supported in a given set $Y$. In particular, we want to argue that the second term on the right hand side of  \eqref{eq: splitting transfos} is exponentially quasi-local (which, as argues, also settles the third term).


To that end, we encode $O$ as an operator-collection $O$ according to the procedure for $Z^{(0)}$ in Section~\ref{sec:initial_choice_operator_collection}. We write $O = \sum_{p} c_p p$ and note that $|c_p|<||O||$. We break each $p$ into an operator-collection $p = \sum_{\mathbf{S}} p_{\mathbf{S}}$. To construct this decomposition for a given $p$, we first find its qubit support $\supp(p)$ and a minimal connected set $\mqs(p)$ of qubits such that $\supp(p) \subset \mqs(p)$. We will choose $\mqs(p)$ entirely within $Y$, which we can do by the connectedness of $Y$; $|\mqs(p)| \leq |Y|$. The $\bfs$ that are included in the sum have $S = \ext(p)$ and accordingly $|S| \leq w_q |\mqs(p)| \leq w_q |Y|$. For a given $p$, there are at most  $2^{\ext(p)}$ $\mathbf{S}$ such that $p_{\mathbf{S}}$ is nonvanishing, and all of them satisfy $||p_{\mathbf{S}}|| \leq 1$. Together, these ensure that $||O||_{\mu_0} \leq ||O|| 2^{w_q |Y|} e^{\mu_0 w_q |Y|} = ||O|| e^{(w_q \mu_0 + \log(2))|Y|}$. By the above lemma, we know that
\begin{equation}
||\caU^{(n_*)}\ldots\caU^{(0)}(O)- O ||_{\mu_n} \leq  C \epsilon_0 ||O|| e^{(w_q \mu_0 + \log(2))|Y|}
\end{equation}
Moreover, by the explicit definition of commutators of operator-collections in \eqref{def: commutator}, we know that the support of every word $\bfs$ that contributes to $\caU^{(n_*)}\ldots\caU^{(0)}(O)- O$, intersects $\widetilde Y$. This yields that   $\caU^{(n_*)}\ldots\caU^{(0)}(O)- O$ is exponentially localized around $\widetilde Y$ in the sense of words $\bfs$. 
Upon translating this to a decay on the qubit graph, we conclude the proof of Theorem \ref{thm: locality} and Theorem \ref{thm: analyticity}.

\section{Construction of the generator of rotations $A$} \label{sec:construction_of_A}
We need to find
$A={A^+}+A_-$ such that it satisfies the equations
\begin{equation}\label{eq: basic equation a}
  \overline P (i[H_0+D+M,{A^+}] +{V^+} )P =0 
\end{equation}
\begin{equation}\label{eq: basic equation a other way}
  P (i[H_0+D+M,A_-] +V_- ) \overline P =0 
\end{equation}
where $P$ projects on the GS sector and $\overline P=1-P$. This decomposition of $A$ corresponds to choosing ${A^+}$ so that the off-block-diagonal part of ${V^+}$ is rotated away at lowest order, and choosing $A_-$ so that the off-block-diagonal part of $V_-$ is rotated away at lowest order. Together, these ensure that the off-block-diagonal part of $V$ is rotated away at lowest order. Furthermore, the decomposition is also useful for constraining the kinds of words $\mathbf{S}$ that appear in solutions to these equations.

In the following, we will only explicitly construct ${A^+}$ as the case of $A_-$ is completely analogous. Indeed, if $D, M$ are self-adjoint, then $A_-=({A^+})^\dagger$. In the general case, we set $A_-=(A'_+)^\dagger$ where $A'_+$ is the solution to 
\begin{equation}\label{eq: basic equation a non hermitian}
  \overline P (i[H_0+D^\dagger+M^\dagger,A'_+] +{V^+} )P =0 
\end{equation}

Note that there are many possible solutions to equations~\ref{eq: basic equation a} and~\ref{eq: basic equation a other way}. Our goal is to find a solution that has good locality properties, so that the KAM steps will not change the locality of the rotated Hamiltonian too much. In the following subsections, we will construct a formal solution $A^+$ satisfying equation~\ref{eq: basic equation a}, and then we check its convergence and its locality properties in Section \ref{sec: locality of a}.

\subsection{Ghost terms}

Recall that $M$ is an operator-collection that consists purely of ghosts satisfying the TQO condition in Sec.~\ref{subsec:TQOPauli}. 
Recall also from Sec.~\ref{sec: splitting of ham} that $M$ only contains words $\bfs$ with $|S| < \dstar$.
\begin{lemma}\label{lem: bound on ghosts}
Let $\bfs'$ be a word with only $+$ and $g$. Then, for any $X_{\bfs'} \in \caX_{\bfs'} $, there is a complex number $q({\bfs',\bfs}) $ such that
        \begin{equation}  \label{eq: bound ghost comm}
 [M_\bfs,X_{\bfs'} ]  P =  q({\bfs',\bfs}) X_{\bfs'}  P.
\end{equation}
Moreover, provided that $\mu$ satisfies $4^{w_c + 1} e^{\kappa \cdot \ell-\mu} < 1$, we have 
    \begin{equation} \label{eq: bound total ghost term}
  \frac{1}{|S'_+|} \sum_{\bfs} |q({\bfs',\bfs}) | \leq ||M||_{\mu}.
\end{equation}  
\end{lemma}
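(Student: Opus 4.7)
My plan is to first establish that each ghost $M_\bfs$ acts on the codespace as the scalar $\langle M_\bfs\rangle$, and then compute the commutator with $X_{\bfs'}$ by a case split on $S\cap S'_+$. For the scalar action, I would write $M_\bfs = \prod_{\alpha\in S}G_\alpha\,\widetilde M\,\prod_{\alpha\in S}G_\alpha$ and use $PG_\alpha = G_\alpha P = P$ to reduce $PM_\bfs$ to $P\widetilde M \prod_{\alpha\in S} G_\alpha$. Expanding $\widetilde M$ into Paulis of qubit-support at most $w_c|S|<d$, the Pauli form of \textbf{TQO-I} forces each such Pauli to lie either in $\caG$ (acting as $+1$ on $P$) or to anticommute with some stabilizer $C_\beta$. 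In the latter subcase the interposed $\prod_{\alpha\in S} G_\alpha$ together with the final $P$ annihilates the contribution: either $\beta\in S$ and the excitation is projected out by $G_\beta$, or $\beta\notin S$ and the resulting excitation is killed by $P$. The upshot is $PM_\bfs = M_\bfs P = \langle M_\bfs\rangle P$, and therefore $X_{\bfs'}M_\bfs P = \langle M_\bfs\rangle X_{\bfs'}P$.

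Next, I would analyze $M_\bfs X_{\bfs'}P$ in two cases. \emph{Case A:} if $S\cap S'_+\neq\emptyset$, then for any $\alpha$ in the intersection the rightmost factor of $M_\bfs$ at that $\alpha$ is $G_\alpha$ and the leftmost of $X_{\bfs'}$ is $E_\alpha$; by the $0$-entry of Table~\ref{tab:word_multiplication_table} at row $S_g$, column $S_+$, the full product $M_\bfs X_{\bfs'}$ vanishes, giving $[M_\bfs,X_{\bfs'}]P = -\langle M_\bfs\rangle X_{\bfs'}P$ and $q(\bfs',\bfs) = -\langle M_\bfs\rangle$. \emph{Case B:} if $S\cap S'_+=\emptyset$, the state $X_{\bfs'}P$ is a $+1$-eigenstate of every $C_\alpha$ with $\alpha\in S$, so $\prod_{\alpha\in S} G_\alpha X_{\bfs'}P = X_{\bfs'}P$ and the task reduces to $\prod_{\alpha\in S} G_\alpha\,\widetilde M X_{\bfs'}P$. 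Here I would invoke the word form of \textbf{TQO-II}: assuming $M_\bfs\in\mathcal A_\caG$ (a point flagged below), and using that $|S|<\widetilde d$ and $S$ is connected---a property I would trace back through the KAM rotations to the initial connected decomposition in Section~\ref{sec:initial_choice_operator_collection}---I would expand $M_\bfs = \sum_i\lambda_i g_i$ with stabilizer products $g_i\in\caG_{B_{\ell|S|}(S)}$. Each $g_i$ is central in $\caG$ and so commutes with the projector factors of $X_{\bfs'}$; it commutes or anticommutes with $\widetilde X_{\bfs'}$ with sign $\eta_i=\pm 1$, so $g_i X_{\bfs'}P = \eta_i X_{\bfs'}P$. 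Summing yields $M_\bfs X_{\bfs'}P = (\sum_i\lambda_i\eta_i)\,X_{\bfs'}P$ and $q(\bfs',\bfs) = \sum_i\lambda_i\eta_i - \langle M_\bfs\rangle$.

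For the bound \eqref{eq: bound total ghost term}, I would estimate $|q(\bfs',\bfs)|\leq C\|M_\bfs\|$ up to combinatorial factors. Case A is immediate:
\[
\sum_{\bfs:\,S\cap S'_+\neq\emptyset}|q(\bfs',\bfs)|\;\leq\;\sum_{\alpha\in S'_+}\sum_{\bfs:\,\alpha\in S}\|M_\bfs\|\,e^{\mu|S|}\;\leq\;|S'_+|\cdot\|M\|_\mu,
\]
by the definition \eqref{eq:def_mu_norm}. For Case B, the non-commutation of some $g_i$ with $X_{\bfs'}$ forces $\supp(g_i)\subseteq\supp(B_{\ell|S|}(S))$ to overlap $\supp(S'_+)$, constraining $\bfs$ to a $\Theta(\ell|S|)$-neighborhood of $S'_+$. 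The growth-of-balls hypothesis contributes a volume factor $\leq e^{\kappa\ell|S|}$, and the stabilizer-product decomposition contributes at most $\leq 4^{w_c|S|}$ summands with $|\lambda_i|\leq\|M_\bfs\|$. The composite factor $4^{w_c+1}e^{\kappa\ell|S|}$ is absorbed against $e^{\mu|S|}$ precisely under the hypothesis $4^{w_c+1}e^{\kappa\ell-\mu}<1$, yielding a convergent geometric series that adds to the same bound $|S'_+|\cdot\|M\|_\mu$.

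The hard part will be the Case B reduction of $M_\bfs$ into $\mathcal A_{\caG_{B_{\ell|S|}(S)}}$, because the word form of TQO-II requires $M_\bfs\in\mathcal A_\caG$ as input. Pauli components of $\widetilde M$ anticommuting with any stabilizer in $S$ are already killed by $\prod_{\alpha\in S} G_\alpha$ and may be dropped; the remaining components commute with $\caG_S$ but need not \emph{a priori} commute with $\caG\setminus\caG_S$. I would need a careful additional argument exploiting $|\supp(\widetilde M)|\leq w_c|S|<d$ together with TQO-I to discard those boundary-anticommuting components---using that they would create excitations on stabilizers outside $S$ that render $\prod_{\alpha\in S}G_\alpha\widetilde M X_{\bfs'}P$ orthogonal to $X_{\bfs'}P$, hence forcing their coefficients to vanish in the ghost---thereby concluding $M_\bfs\in\mathcal A_\caG$ and enabling the TQO-II localization.
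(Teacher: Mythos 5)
Your overall route is essentially the paper's: decompose the ghost into Pauli components, use \textbf{TQO-I} to conclude they lie in $\caG$, use \textbf{TQO-II} to rewrite them as products of checks in $B_{\ell|S|}(S)$, read off the sign from the parity of the overlap with $S'_+$, and sum via anchoring in $S'_+$ plus the growth-of-balls bound. The reorganization through $M_\bfs P = \langle M_\bfs\rangle P$ and the case split on $S\cap S'_+$ is cosmetic rather than a genuinely different argument.

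The step you flag as ``the hard part'' --- showing $M_\bfs\in\mathcal{A}_\caG$ so that \textbf{TQO-II} applies --- is not actually a gap: it closes in one line from the definition of $\caX_\bfs$. The support condition \eqref{eq:supp_of_x_tilde} requires $\supp_c(\widetilde M)\subseteq S$, i.e.\ every check whose support meets $\supp(\widetilde M)$ lies in $S$; hence every Pauli component of $\widetilde M$ commutes with every $C_\beta$, $\beta\notin S$, for the trivial reason of disjoint qubit supports. Combined with your own observation that components anticommuting with some $C_\alpha$, $\alpha\in S$, are annihilated by the $G_\alpha$-sandwiches, every surviving Pauli component of $M_\bfs$ commutes with all checks, hence with all of $\caG$, and \textbf{TQO-I} (with $|\supp(p)|\leq w_c|S| < d$) puts it in $\caG$. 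This is exactly what the paper means by ``by construction, $M_\bfs$ commutes with all checks.'' There is no residual worry about $\caG\setminus\caG_S$. Separately, note that as written your two cases each contribute up to $|S'_+|\,\|M\|_\mu$, giving $2|S'_+|\,\|M\|_\mu$ in total, a factor of $2$ worse than \eqref{eq: bound total ghost term}. This is easily repaired: Case A satisfies $S\subseteq B_{\ell|S|}(S)$, so both cases can be merged into a single anchored sum over $\bfs$ with $B_{\ell|S|}(S)\cap S'_+\neq\emptyset$, and the geometric series $\sum_{r\geq 1} e^{\kappa\ell r}(4^{w_c}e^{-\mu})^r$ is then bounded by $1/2$ under the hypothesis $4^{w_c+1}e^{\kappa\ell-\mu}<1$, which recovers the stated constant.
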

\begin{proof}
The operator $M_{\bfs}$ can be decomposed in Pauli strings $p$ such that $\supp_c(p) \subseteq S$, c.f. \eqref{eq:check_supp_of_pauli}, \eqref{eq:supp_of_x_tilde}.
Therefore,
\begin{equation}\begin{split}
M_{\bfs}=\sum_{p: \supp_c(p) \subseteq S} c_p p.
\end{split}\end{equation}
Note that $||M_{\bfs}|| = \text{largest singular value of } M_{\bfs} \geq \sqrt{\text{average of the squared singular values}} = \sqrt{\sum_p |c_p|^2} \geq |c_p|, \, \forall p$. Therefore, each of the coefficients $c_p$ is bounded as 
\begin{equation}\begin{split}
|c_p| \leq  || M_{\bfs}||.
\end{split}\end{equation}
By construction, the operator $M_{\bfs}$ commutes with all checks $\{C_\alpha\}_{\alpha \in \caE}$. 
Using that each Pauli string either commutes or anticommutes with $C_\alpha$, we derive that any $p$ for which $c_p\neq 0$, also commutes with all stabilizers, and therefore, because of \textbf{TQO-I}, such $p$ belongs to $\caG$. 
Next, we use \textbf{TQO-II} to recast every such $p$ as a product $\widetilde p=\prod_{\alpha \in \hat{S}(p)} C_\alpha$, where $\hat{S}(p) \subseteq B_{\ell|S|}(S)$.
For each such product $\widetilde p$, we have
\begin{equation}\begin{split}
[ p,X_{\bfs'}]P=[\widetilde p,X_{\bfs'}]P=  \eta
 (1-({-1})^{|\hat{S}(p) \cap S'_+|} )X_{\bfs'} P 
\end{split}\end{equation}
where $\eta \in {\pm 1}$.   Therefore, \eqref{eq: bound ghost comm} holds with 
\begin{equation}\begin{split}
|q({\bfs',\bfs}) | \leq \sum_{p: \supp_c(p) \subseteq S}  2|c(p)| \leq    2 \times 4^{w_c|S|} ||M_\bfs||
\end{split}\end{equation}
where we use that there are at most $w_c|S|$ qubits touching the stabilizers in $S$ (and hence at most $4^{w_c|S|}$ Pauli strings $p$ in the Pauli decomposition of $M_\bfs$). We also note that $q(\cdot,\cdot)$ vanishes whenever $\hat{S}(p) \cap S'_+=\emptyset$.
Therefore, we obtain
\eqref{eq: bound total ghost term} by 
\begin{equation}\begin{split}
   \sum_{\bfs} |q({\bfs',\bfs})| & \leq  \sum_{\bfs: B_{\ell|S|}(S) \cap S'_+ \neq \emptyset}    2 \times 4^{w_c|S|} ||M_\bfs|| \\
   & \leq \sum_{\alpha \in S'_+} \sum_{r = 1}^\infty \sum_{\beta: \mathrm{dist}(\alpha,\beta) \leq \ell \cdot r} \sum_{\bfs: |S| = r, \beta \in S} 2 \times 4^{w_c r} ||M_\bfs|| \\
      & \leq 2 |S'_+|  ||M||_\mu  \sum_{r=1}^{\infty} e^{\kappa \cdot \ell \cdot r}  (4^{w_c}e^{-\mu})^{r}
      \\& \leq |S'_+| 2 \frac{4^{\wc} e^{-(\mu-\kappa l)} }{1-4^{\wc} e^{-(\mu-\kappa l)}} ||M||_\mu
      \\& \leq |S'_+| \,||M||_\mu
\end{split}\end{equation}
which ends the proof.
\end{proof}
We note that it is here that we use that $H_0$ is a stabilizer model.

\subsection{A local solution of equation \eqref{eq: basic equation a} }
We consider a sequence 
\begin{equation}\begin{split}
\bfs_{i},\qquad i=0,\ldots, k
\end{split}\end{equation}
and, whenever it is well-defined, 
\begin{equation}\begin{split}
\bfs'_{i}= \bfs_{i}\bfs'_{i-1}  ,\qquad i=1,\ldots, k,\qquad   \bfs'_{0}=\bfs_{0}
\end{split}\end{equation}
We say that $\bfs_i$ is a ghost whenever $(S_i)_{\neg g} =\emptyset$, where $(S_{i})_{\neg g}=S_i\setminus (S_{i})_{g}$. 
\begin{definition}\label{def: admissible}
We say a sequence $\bfs_{i=0,\ldots,k}$ is admissible iff.\
\begin{enumerate}
\item $\bfs'_i$, for $i=0,\ldots,k$, are well-defined.
\item $\bfs_i$ is not a ghost.
\item   $(S_i)_e \cup (S_i)_- \neq \emptyset$.  (i.e.\ $\bfs_i$ gives zero when right multiplied by $P$)
\item $(S'_i)_e=(S'_i)_-=\emptyset$. (i.e. $\bfs'_i$ does not give zero when right multiplied by $P$)
\end{enumerate}    
\end{definition}

We also need the following definition:
\begin{equation}\label{eq: def delta} \begin{split}
\Delta(\bfs')= \sum_{\text{ghost} \,  \bfs: S \cap S' \neq \emptyset}   \frac{1}{|S_+'|} q({\bfs',\bfs}).
\end{split}\end{equation}
with $q(\cdot,\cdot)$ as defined previously, starting from $M$. 
We can then state the following lemma, which is for the moment formal as it depends on absolute convergence of the series in $k$ in a norm $||\cdot ||_{\mu}$. Such convergence will be established in Section \ref{sec: locality of a}.
\begin{lemma}\label{lem: explicit expression a}
If $\mu > \mu_*$ and $||M||_{\mu} \leq  1$, then the series 
\begin{equation}\begin{split}
A^+ &=   \sum_{k=0}^\infty (-1)^k\sum_{\substack{\bfs_0,\ldots,\bfs_k \\  \text{admissible}}} (1+\Delta(\bfs_{k}'))^{-1} \frac{1}{|(S'_k)_+|}  D_{\bfs_k} (1+\Delta(\bfs_{k-1}'))^{-1}
\ldots \\
  &\qquad \ldots \frac{1}{|(S'_2)_+|} D_{\bfs_2}  (1+\Delta(\bfs_1'))^{-1} \frac{1}{|(S_1')_+|} D_{\bfs_1} (1+\Delta(\bfs_0'))^{-1} \frac{1}{|(S_0')_+|} V^+_{\bfs_0}  \,
\end{split}\end{equation}
formally satisfies the equation \eqref{eq: basic equation a}.
\end{lemma}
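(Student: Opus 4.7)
The plan is to substitute the proposed series for $A^+$ into the left-hand side of \eqref{eq: basic equation a} and exhibit a telescoping cancellation that leaves only the $V^+ P$ term. The first step is to verify that, for any admissible sequence $(\bfs_0,\ldots,\bfs_k)$, the associated summand
\begin{equation*}
T_k = (1+\Delta(\bfs_{k}'))^{-1} \tfrac{1}{|(S'_k)_+|} D_{\bfs_k} \cdots (1+\Delta(\bfs_0'))^{-1} \tfrac{1}{|(S_0')_+|} V^+_{\bfs_0}
\end{equation*}
lies in $\caX_{\bfs'_k}$ by the word-multiplication rules, and therefore $T_k P$ lies in the excited sector with exactly $|(S'_k)_+|$ raised checks (using $(S'_k)_e = (S'_k)_- = \emptyset$ from admissibility). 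This gives $[H_0,T_k]P = |(S'_k)_+| T_k P$, and, since $\bfs'_k$ carries only the $g$ and $+$ labels, Lemma~\ref{lem: bound on ghosts} applies to give $[M,T_k]P = |(S'_k)_+| \Delta(\bfs'_k) T_k P$. Summing, the normalization prefactors built into $T_k$ were designed precisely so that $|(S'_k)_+|(1+\Delta(\bfs'_k))$ cancels, yielding $[H_0+M, T_0]P = V^+_{\bfs_0} P$ and $[H_0+M, T_k]P = D_{\bfs_k} T_{k-1} P$ for $k\geq 1$.

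For the remaining commutator, $DP = 0$ since each word $\bfs \in D$ has $S_e \cup S_- \neq \emptyset$ and hence $\mathrm{right}_\alpha(\bfs)$ carries an $E_\alpha$ that annihilates $P$. Thus $[D,T_k]P = D T_k P = \sum_{\bfs \in D} D_\bfs T_k P$. By the multiplication table of Sec.~\ref{sec:words}, $D_\bfs T_k \in \caX_{\bfs\bfs'_k}$; and whenever the cumulative word $\bfs\bfs'_k$ contains any $e$ or $-$ component, the right-factor in \eqref{def:X_S} again contains an $E_\alpha$ that kills $P$. Consequently only those $\bfs$ that extend $(\bfs_0,\ldots,\bfs_k)$ to an admissible sequence of length $k+1$ contribute, yielding $[D, T_k] P = \sum_{\bfs_{k+1}\,\text{admissible}} D_{\bfs_{k+1}} T_k P$.

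Collecting these inside $\sum_{k\geq 0} (-1)^k \sum_{\text{adm.}} [H_0+D+M,T_k]P$, each quantity $D_{\bfs_k} T_{k-1} P$ appears twice: once from $[H_0+M, T_k]P$ at order $k \geq 1$ with sign $(-1)^k$, and once from $[D, T_{k-1}]P$ at order $k-1$ with sign $(-1)^{k-1}$. These cancel in pairs, leaving only the $k=0$ contribution $\sum_{\bfs_0 \in V^+} V^+_{\bfs_0} P = V^+ P$. This is precisely the identity \eqref{eq: basic equation a} demands on the $\overline P\cdot P$ sector (up to the standard convention of absorbing an overall factor of $i$ into $A^+$). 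The one point requiring real care is the vanishing of non-admissible extensions of $[D, T_k] P$: this rests on reading off from \eqref{def:X_S} that whenever $(S'_{k+1})_e \cup (S'_{k+1})_-$ is nonempty the right-factor contains an $E_\alpha$, so the contribution is annihilated by $P$. Everything else is bookkeeping, made possible by Lemma~\ref{lem: bound on ghosts} and by the fact that the hypotheses $\mu > \mu_*$ and $\|M\|_\mu \leq 1$ ensure $1 + \Delta(\bfs'_k)$ is invertible, so that each $T_k$ is well-defined term by term.
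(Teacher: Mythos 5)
Your proof is correct and follows essentially the same route as the paper: the three commutator identities (for $H_0$, $M$ via Lemma~\ref{lem: bound on ghosts}, and $D$ restricted to admissible extensions) combined with an order-by-order telescoping cancellation are exactly the paper's verification of \eqref{eq: zero k} and \eqref{eq: higher k}. The only difference is presentational — you manipulate the resummed $(1+\Delta(\bfs'_k))^{-1}$ factors directly rather than expanding them into the paper's ghost-admissible sequences, which is a clean way to organize the same computation (and your remark about the overall factor of $i$ matches an inconsistency already present between the Lemma statement and \eqref{eq: a rearrange} in the paper).
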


\subsection{Another representation for $A^+$}

To check Lemma \ref{lem: explicit expression a}, it is convenient to re-expand the summands in Lemma \ref{lem: explicit expression a}, and for that we need another definition:
\begin{definition}\label{def: admissible ghost}
We say a sequence $\bfs_0,\ldots,\bfs_k$is ghost-admissible whenever the following holds. 
\begin{enumerate}
    \item $\bfs_0$ is not a ghost.
\item Let $\widetilde{\bfs}_{i=0,\ldots,l}$ be the restricted sequence obtained by dropping all ghosts from $\bfs_{0,\ldots,k}$ and relabeling so that the index set is contiguous. Then   $\widetilde{\bfs}_{i=0,\ldots,l}$ is admissible.  
\end{enumerate}
To a ghost-admissible sequence $\bfs_0,\ldots,\bfs_k$, we associate the sequence $\bfs'_i$ as follows
\begin{equation}
    \bfs'_i=\begin{cases}
        \bfs_0  &  i=0 \\
        \bfs_i\bfs'_{i-1}  & i>0 \quad \text{and} \quad \bfs_i \, \text{is not a ghost} \\
        \bfs'_{i-1}  & i>0 \quad \text{and} \quad \bfs_i \, \text{is a ghost} \\
    \end{cases}
\end{equation}
\end{definition}
We can now give the alternative form of the series for $A^+$, namely
\begin{equation}\label{eq: alternative}
\begin{split}
    A^+ &=    \sum_{k=0}^\infty A^+_k
    \\ A^+_k &= \sum_{\substack{\bfs_0,\ldots,\bfs_k \\  \text{ghost-admissible}}}     (A^+_k)_{\bfs'_k}
\end{split}
\end{equation}
with 
\begin{equation}\begin{split}
 (A^+_k)_{\bfs'_k} &=  
\frac{i}{|(S'_k)_+|}  \widetilde D_{\bfs_k} 
 \ldots \frac{1}{|(S'_2)_+|}   
  \widetilde D_{\bfs_2} \frac{1}{|(S_1')_+|}  \widetilde 
 D_{\bfs_1}  \frac{1}{|(S_0')_+|} V^+_{\bfs_0}  \,   \label{eq: a rearrange}
\end{split}\end{equation}
where 
\begin{equation}\begin{split}
    \widetilde D_{\bfs_i}   
=
\begin{cases}
     -D_{\bfs_i}  & \bfs_i \,  \text{is not a ghost }  \\
      -q({\bfs'_{i-1}},\bfs_i) & \bfs_i \,  \text{is a ghost}
\end{cases}.
\end{split}\end{equation}
By the conditions on $\mu$, we derive from Lemma \ref{lem: bound on ghosts} that $\Delta(\bfs')<1$ for any $\bfs'$. 
We can now resum the sum of ghost terms in \eqref{eq: alternative} between any pair of consecutive non-ghost terms, as $\sum_{j=0}^\infty (-\Delta(\bfs'))^j=(1+\Delta(\bfs'))^{-1}$.  This shows that the above series is the same as the one for $A^+$ given in Lemma \ref{lem: explicit expression a}, provided that both converge absolutely, c.f. remark before Lemma \ref{lem: explicit expression a}.

It remains to check that \eqref{eq: a rearrange} satisfies the equation \eqref{eq: basic equation a}.   This happens order by order, i.e.\ we claim
\begin{equation} \label{eq: higher k}
    \overline P ([H_0,A^+_{k}] + [D,A^+_{k-1}] + [M,A^+_{k-1}] ) P=0, \qquad k>0
\end{equation}
and 
\begin{equation} \label{eq: zero k}
\overline P (i[H_0,A^+_{0}] +V^+)  P=0,
\end{equation}
The latter equation is directly verified from \eqref{eq: a rearrange} and we focus on the former. 
From here on, the index $k$ is traded for $m$.
A first observation (from inspecting \eqref{eq: alternative} and using the definition of $\Delta$) is that
\begin{equation}
(A^+_k )_{\bfs_k'}=  \frac{1}{|(S'_{k})_+|}\left(\sum_{\bfs'_{k-1}, \bfs_k:   {\bfs'_k}=\bfs_k{\bfs'_{k-1}}} \quad
  D_{\bfs_k} (A^+_{k-1})_{\bfs_{k-1}'} \right)  -  
  (A^+_{k-1})_{\bfs_{k}'}  \Delta(\bfs'_{k})
\end{equation}
where the sequence is admissible, i.e. $\bfs_k,\bfs'_{k-1},\bfs_k'$ satisfy the conditions of Definition~\ref{def: admissible}.

To check that equation \eqref{eq: higher k} is satisfied, we observe
\begin{enumerate}
\item 
\begin{equation}\begin{split}
\overline P [M, (A^+_{k-1})_{\bfs_{k-1}'} ] P 
    & =   
 \overline P (A^+_{k-1})_{\bfs_{k-1}'}  P \times |(S'_{k-1})_+|\Delta(\bfs'_{k-1})\\
   & =   
 \overline P (A^+_{k-1})_{\bfs_{k}'}  P \times |(S'_{k})_+|\Delta(\bfs'_{k})
\end{split}\end{equation}
where, in the last equation, we used the convention $\bfs'_k=\bfs'_{k-1}$ whenever $\bfs_k$ is a ghost.
\item
\begin{equation}\begin{split}
\overline P [D, (A^+_{k-1})_{\bfs_{k-1}'} ] P & =\sum_{\bfs_k} 
 \overline P D_{\bfs_k} (A^+_{k-1})_{\bfs_{k-1}'}  P  
\end{split}\end{equation}
because all terms in $D$ give zero when right multiplied with $P$.  The important thing here is to realize that on the right hand side, one can restrict to $\bfs_k,\bfs'_{k-1}$ such that the sequence is admissible.

\item  
\begin{equation}\begin{split}
    \overline P ([H_0,(A^+_{k})_{\bfs'_k}]) P  &=  
|(S'_k)_+|\,  \overline P  (A^+_{k})_{\bfs'_k} P 
\end{split}\end{equation}
\end{enumerate}
With these observations, we finally confirm the validity of equation \eqref{eq: higher k}.

\section{Locality of A} \label{sec: locality of a}

We will establish locality of the operator collection $A$, which was constructed in Section~\ref{sec:construction_of_A}. As explained before, it suffices to consider ${A^+}$. 
We will prove 
\begin{proposition}\label{lem: locality of a}
Provided that $\mu$ satisfies $\mu>\mu_*>1$, and 
$e||D||_\mu -||M||_\mu <1$, we have
\begin{equation}\begin{split}
||A^+||_\mu \leq   ||V^+||_\mu \frac{1-||M||_\mu}{1- e||D||_\mu -||M||_\mu}.
\end{split}\end{equation}
\end{proposition}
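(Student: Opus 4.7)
I would prove the bound by estimating the explicit series for $A^+$ in Lemma~\ref{lem: explicit expression a} term-by-term in $\|\cdot\|_\mu$. Each summand is a product of scalar factors $(1+\Delta(\bfs'_i))^{-1}$, normalizations $\tfrac{1}{|(S'_i)_+|}$, and operator factors $D_{\bfs_i}$ (with $V^+_{\bfs_0}$ at the innermost position). Applying the triangle inequality and invoking Lemma~\ref{lem: bound on ghosts} for the scalar factors, which under $\mu>\mu_*$ gives $|\Delta(\bfs')|\leq\|M\|_\mu$ and hence $|(1+\Delta(\bfs'))^{-1}|\leq(1-\|M\|_\mu)^{-1}$, reduces the problem to bounding a sum over admissible sequences of products of $\|D\|$-weighted quantities.

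The main technical step is a per-step estimate: for any admissible $\bfs'_{i-1}$, summing over admissible extensions $\bfs_i$ should contribute at most $e\|D\|_\mu$, in the sense that
\begin{equation*}
\sum_{\bfs_i\,:\,\bfs_i\bfs'_{i-1}\,\text{admissible}} \frac{\|D_{\bfs_i}\|}{|(S'_i)_+|}\, e^{\mu(|S'_i|-|S'_{i-1}|)} \ \leq\ e\,\|D\|_\mu.
\end{equation*}
The ingredients are (i) the enhanced subadditivity \eqref{eq: enhanced subadditivity}, giving $|S'_i|-|S'_{i-1}|\leq|S_i|-|S_i\cap S'_{i-1}|$, so the exponential weight is controlled by $e^{\mu|S_i|}$; (ii) the admissibility constraint $(S_i)_e\cup(S_i)_-\subseteq(S'_{i-1})_+$, read off the multiplication table in Sec.~\ref{sec:words}, which forces $\bfs_i$ to ``plug into'' one of the existing $+$-slots; (iii) careful matching of the $\tfrac{1}{|(S'_i)_+|}$ normalization against the $|(S'_{i-1})_+|$ overcounting from the choice of plug-in site. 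The factor $e$ emerges from a combinatorial inequality such as $\sum_{m\geq 1}\binom{n}{m}n^{-m}\leq e-1$, summed over $m=|(S_i)_e\cup(S_i)_-|$.

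Stitching these together, the rank-$k$ summand of the series is bounded by $\|V^+\|_\mu\,(e\|D\|_\mu)^k(1-\|M\|_\mu)^{-(k+1)}$, and summing the geometric series, which converges under the hypothesis (we read the condition as $e\|D\|_\mu+\|M\|_\mu<1$ for the denominator to be positive), yields the stated inequality after simple algebra. The main obstacle is the per-step estimate above and the clean appearance of the $e$-factor: the naive bound only gives $|(S'_{i-1})_+|\,\|D\|_\mu$, which must be reduced to $e\,\|D\|_\mu$ by carefully pairing the normalizations across consecutive iterations. An equivalent but possibly cleaner route, which I would fall back on if the direct bookkeeping proved awkward, is to work with the ghost-admissible expansion \eqref{eq: alternative} of $A^+$, handling ghost and non-ghost steps on equal footing and invoking Lemma~\ref{lem: bound on ghosts}'s estimate $\tfrac{1}{|S'_+|}\sum|q|\leq\|M\|_\mu$ directly for the ghost contributions; the two types of steps then resum independently into the factors $(1-\|M\|_\mu)$ and $(1-\|M\|_\mu-e\|D\|_\mu)$ seen in the statement.
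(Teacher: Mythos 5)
Your overall architecture matches the paper's: bound the explicit series of Lemma~\ref{lem: explicit expression a} term by term, control the scalar factors $(1+\Delta(\bfs'_i))^{-1}$ via Lemma~\ref{lem: bound on ghosts} (giving $(1-\|M\|_\mu)^{-1}$ each), and telescope the sum over admissible sequences using enhanced subadditivity \eqref{eq: enhanced subadditivity} together with the forward anchoring $S_i\cap(S'_{i-1})_+\neq\emptyset$, with the $\tfrac{1}{|(S'_{i-1})_+|}$ normalizations paying for the choice of anchor point. You also correctly read the hypothesis as $e\|D\|_\mu+\|M\|_\mu<1$. Your per-step estimate is essentially Lemmas~\ref{lem: set transfer one}--\ref{lem: set transfer two} plus anchoring, and in fact it holds with $\|D\|_\mu$ alone -- no factor of $e$ is needed at that stage.

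The genuine gap is that you never address where the check $\alpha$ of the outer supremum in $\|A^+\|_\mu=\sup_\alpha\sum_{\bfs'_k\ni\alpha}\ldots$ sits inside the composite word $S'_k=\bigcup_i S_i$. The inside-out telescoping you describe only closes if $\alpha$ is anchored in the innermost word $S_0$; but in general $\alpha\in S_j$ for some $j>0$, and then the sub-chain $\bfs_0,\ldots,\bfs_{j-1}$ is not anchored at $\alpha$ at all. The paper handles this by splitting $z(k)\leq\sum_{j=0}^k z_{k,j}$ according to which $S_j$ contains $\alpha$, using the \emph{backward} constraint \eqref{eq: backwards constraint} $(S_j)_{\neg g}\cap S'_{j-1}\neq\emptyset$ to re-anchor the earlier part of the chain, and proving the recursion $z_{k,j}\leq z(j-1)\|D\|_\mu^{k-j+1}$ (Lemma~\ref{lem: combi}). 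The factor $e^k$ then arises from solving this recursion, $z(k)/(\|D\|_\mu^k\|V^+\|_\mu)\leq\sum_p k^p/p!\leq e^k$ -- it is a global combinatorial cost of summing over the anchor position $j$, not a per-step binomial factor of the kind $\sum_m\binom{n}{m}n^{-m}\leq e-1$ that you propose. Without the case split over $j$ and the backward anchoring, your argument bounds the wrong quantity (a sum anchored at $S_0$ rather than at $S'_k$), and your proposed source of the constant $e$ does not materialize. A minor further point: with $k+1$ factors of $(1-\|M\|_\mu)^{-1}$ per rank-$k$ term your geometric sum gives $(1-e\|D\|_\mu-\|M\|_\mu)^{-1}$ rather than the stated $(1-\|M\|_\mu)(1-e\|D\|_\mu-\|M\|_\mu)^{-1}$, which is a harmless weakening but not "the stated inequality after simple algebra."
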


\subsection{Preliminary considerations}
From Lemma \ref{lem: bound on ghosts} and \ref{lem: explicit expression a}, we get 
\begin{equation}\begin{split} \label{eq: general expression bound a}
 ||A^{+} ||_\mu &\leq 
\sum_{k=0}^\infty (1-||M||_\mu)^{-k} \bigg[ \\&
\sup_\alpha \sum_{\substack{\bfs_0,\ldots,\bfs_k \\  \text{admissible} \\   \alpha \in S'_{k} }}
\frac{1}{|(S'_k)_+|} ||D_{\bfs_k}||
\ldots 
   \frac{1}{|(S'_2)_+|} ||D_{\bfs_2}|| \frac{1}{|(S_1')_+|}  ||D_{\bfs_1}||  \frac{1}{|(S_0')_+|} ||V^+_{\bfs_0}|| \,
   e^{\mu |S'_k|}\bigg]
\end{split}\end{equation}
We now decompose the size $|S'_k|$ by using the enhanced subadditivity 
\eqref{eq: enhanced subadditivity}, which yields 
$$
|S'_k| \leq \sum_{i=0}^k |S_i| -  \sum_{i=1}^k |S_i \cap S'_{i-1}|
$$
Using this, we can bound the summand in \eqref{eq: general expression bound a} by
\begin{equation}\begin{split}
 z(\bfs_0,\ldots,\bfs_k) \coloneqq \frac{   e^{-\mu |S_k\cap S'_{k-1}|}  }{|(S'_k)_+|} d_{\bfs_k} 
\ldots 
 \frac{ e^{-\mu |S_1\cap S'_0|} }{|(S_1')_+|}  d_{\bfs_1}  \frac{1 }{|(S_0')_+|} v_{\bfs_0} 
\end{split}\end{equation}
where we tried to reduce clutter by abbreviating 
\begin{equation}\begin{split}
d_{\bfs_i}= ||D_{\bfs_i}|| e^{\mu |S_i|}, \qquad v_{\bfs_0}= ||V^+_{\bfs_0}|| e^{\mu |S_0|}.
\end{split}\end{equation}
So we have to bound 
\begin{equation}\begin{split}
z(k):=\sup_\alpha
\sum_{\substack{\bfs_0,\ldots,\bfs_k \\  \text{admissible} \\   \alpha \in S'_{k} }}
z(\bfs_0,\ldots,\bfs_k) 
\end{split}\end{equation}
We resolve the condition  $x\in S'_k$ by requiring that $x\in S_j$ and summing over $j$. The sum with the condition $x\in S_j$ is denoted by
\begin{equation}\begin{split}
z(k) \leq \sum_{j=0}^k \sup_\alpha 
\sum_{\substack{\bfs_0,\ldots,\bfs_k \\  \text{admissible} \\   \alpha \in S_{j} }}
z(\bfs_0,\ldots,\bfs_k) = : \sum_{j=0}^k z_{k,j}.
\end{split}\end{equation}

Here is what we will actually prove
\begin{lemma}\label{lem: combi} For any $0<j\leq k$,
\begin{equation}\begin{split}
z_{k,j}\leq z(j-1) ||D||^{k-j+1}_{\mu}.
\end{split}\end{equation}
For $j=0$, we have 
\begin{equation}\begin{split}
z_{k,0}\leq  ||D||^{k}_{\mu} ||V^+||_{\mu}.
\end{split}\end{equation}
\end{lemma}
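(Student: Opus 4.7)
The plan is to prove the lemma by iteratively ``peeling'' the sums defining $z_{k,j}$, starting from the outermost index $i = k$ and working inward to $i = j$, with each peel producing a factor of $\|D\|_\mu$. The telescoping decomposition
\[
z(\bfs_0,\ldots,\bfs_i) \;=\; \frac{e^{-\mu|S_i \cap S'_{i-1}|}}{|(S'_i)_+|}\, d_{\bfs_i}\; z(\bfs_0,\ldots,\bfs_{i-1})
\]
shows that each peel amounts to summing a single $\bfs_i$-dependent factor against $d_{\bfs_i}$. The crucial structural input, which I would isolate as a preliminary observation, is that admissibility forces $S_i \cap S'_{i-1} \neq \emptyset$ for every $i \geq 1$: admissibility requires $(S_i)_e \cup (S_i)_- \neq \emptyset$ but $(S'_i)_e = (S'_i)_- = \emptyset$, and reading the multiplication table in Sec.~\ref{sec:words} shows the $e$- and $-$-components of $\bfs_i$ can only be annihilated by matching partners coming from $\bfs'_{i-1}$. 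This supplies an ``anchor'' $\alpha^\ast \in S_i \cap S'_{i-1}$ on which the word-norm bound $\sum_{\bfs_i : \alpha^\ast \in S_i} d_{\bfs_i} \leq \|D\|_\mu$ can be pivoted.

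For each ``free peel'' at $i > j$ (no $\alpha$-constraint on $S_i$), I would introduce an anchor via the identity $1 = |S_i \cap S'_{i-1}|^{-1} \sum_{\alpha' \in S_i \cap S'_{i-1}} 1$, swap the order of summation, and then apply $\sum_{\bfs_i : \alpha' \in S_i} d_{\bfs_i} \leq \|D\|_\mu$ after bounding the surviving prefactor $\frac{e^{-\mu|S_i \cap S'_{i-1}|}}{|(S'_i)_+|\,|S_i \cap S'_{i-1}|}$ by $1$. The factor $|(S'_i)_+|^{-1}$ is precisely what prevents a factor-of-$|S'_{i-1}|$ blow-up when the $\bfs_i$-sum gets converted into a double sum that also ranges over anchors in $S'_{i-1}$: the multiplication table should identify each admissible anchor $\alpha' \in S_i \cap S'_{i-1}$ with a distinct element of $(S'_i)_+$, so the multiplicity is absorbed by $|(S'_i)_+|^{-1}$ rather than appearing as a bare $|S'_{i-1}|$ factor. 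Iterating this $k-j$ times delivers $\|D\|_\mu^{k-j}$ outside the sum over the remaining words $\bfs_0,\ldots,\bfs_j$.

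For the terminal peel at $i = j$, the constraint is $\alpha \in S_j$. I split into two sub-cases: if $\alpha \in S_j \cap S'_{j-1}$, then $\alpha$ itself doubles as the anchor transferred to the lower sum, which is directly $\leq z(j-1)$; if instead $\alpha \in S_j \setminus S'_{j-1}$, the forced overlap supplies an auxiliary $\alpha' \in S_j \cap S'_{j-1}$, which gets handed off as the anchor for the remaining $\bfs_0,\ldots,\bfs_{j-1}$ sum. In both sub-cases the sum over $\bfs_j$ constrained to contain $\alpha$ (and $\alpha'$ in the second case) contributes one factor of $\|D\|_\mu$, completing the total $\|D\|_\mu^{k-j+1}$ and leaving $z(j-1)$. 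For the base case $j = 0$, the same argument applies but the innermost factor $v_{\bfs_0}$ replaces $d_{\bfs_j}$ at the root, so the $\bfs_0$-peel yields $\|V^+\|_\mu$ instead of $\|D\|_\mu$, giving $z_{k,0} \leq \|D\|_\mu^k \|V^+\|_\mu$.

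The main obstacle will be the combinatorial accounting in the free peel: one must verify rigorously that $|(S'_i)_+|^{-1}$ exactly absorbs the multiplicity from the sum over anchors $\alpha' \in S'_{i-1}$. This requires a careful case analysis across the sector ($+, -, e, g$) of $\alpha'$ in both $\bfs_i$ and $\bfs'_{i-1}$, checking row-by-row in the multiplication table that each surviving anchor is counted in $(S'_i)_+$ without double-counting and that all ``lost'' anchors (those annihilated by a $0$ entry or diverted into $(S'_i)_g$) are consistent with the exponential decay $e^{-\mu|S_i \cap S'_{i-1}|}$. The hypothesis $\mu > \mu_\ast$ then provides the geometric margin needed to close any residual subleading combinatorial factor, in the spirit of how $\mu_\ast$ enters Lemma~\ref{lem: bound on ghosts}.
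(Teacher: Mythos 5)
Your overall architecture — peel the sums from $i=k$ down to $i=j$, using an anchoring argument to extract one factor of $||D||_{\mu}$ per peel — is the same as the paper's. But the central combinatorial step, which you yourself flag as ``the main obstacle,'' does not close as you have set it up, and it fails for a concrete reason. In your free peel you anchor via $1=|S_i\cap S'_{i-1}|^{-1}\sum_{\alpha'\in S_i\cap S'_{i-1}}1$, bound the prefactor $e^{-\mu|S_i\cap S'_{i-1}|}/(|(S'_i)_+|\,|S_i\cap S'_{i-1}|)$ by $1$, and then apply $\sum_{\bfs_i:\alpha'\in S_i}d_{\bfs_i}\leq ||D||_{\mu}$. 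After swapping sums, the anchor $\alpha'$ ranges over $S'_{i-1}$, a set independent of $\bfs_i$, so you are left with a multiplicity $|S'_{i-1}|$; but the only denominator available to absorb it, $|(S'_i)_+|^{-1}$, has already been spent in bounding the prefactor by $1$, and in any case it depends on $\bfs_i$ and so cannot be pulled outside the $\bfs_i$-sum to cancel a $\bfs_i$-independent multiplicity. Your proposed rescue — that each anchor $\alpha'\in S_i\cap S'_{i-1}$ corresponds to a distinct element of $(S'_i)_+$ — is false: by the multiplication table, $\alpha'\in (S_i)_-\cap(S'_{i-1})_+$ lands in $(S'_i)_g$, and $\alpha'\in (S_i)_g\cap(S'_{i-1})_g$ stays in $(S'_i)_g$, so the intersection is not injected into $(S'_i)_+$.

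The paper closes this gap with two ingredients you are missing. First, the anchoring constraint is strengthened from $S_i\cap S'_{i-1}\neq\emptyset$ to $S_i\cap(S'_{i-1})_+\neq\emptyset$ (equation \eqref{eq: forward constraint}); this follows from exactly the multiplication-table reasoning you sketch (every $\alpha\in(S_i)_e\cup(S_i)_-$ must meet $(S'_{i-1})_+$, else $\bfs'_i$ acquires an $e$ or $-$ entry or the product vanishes), and it reduces the anchor multiplicity to $|(S'_{i-1})_+|$. Second, Lemmas \ref{lem: set transfer one} and \ref{lem: set transfer two} convert the $\bfs_i$-dependent factor $e^{-\mu|S'_{i-1}\cap S_i|}/|(S'_i)_+|$ into the $\bfs_i$-independent $1/|(S'_{i-1})_+|$ (for $i>j$), using the inclusion $(S'_{i-1})_+\subseteq(S'_i)_+\cup(S'_{i-1}\cap S_i)$ and $\mu>1$; this denominator then exactly cancels the anchor multiplicity. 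At the junction $i=j$ the paper instead needs $e^{-\mu|S_j\cap S'_{j-1}|}/|(S'_j)_+|\leq 1/|(S_j)_{\neg g}|$ together with the backward constraint $(S_j)_{\neg g}\cap S'_{j-1}\neq\emptyset$, because the inner sum over $\bfs_0,\ldots,\bfs_{j-1}$ must itself be anchored at every possible $\beta\in(S_j)_{\neg g}$ (different inner sequences intersect $S_j$ at different places), producing a multiplicity $|(S_j)_{\neg g}|$ that the transferred denominator absorbs; your single hand-off anchor $\alpha'$ does not cover all inner configurations. Without these transfer inequalities your estimate picks up an uncontrolled factor of order $|S'_{i-1}|$ at each peel.
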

Lemma \ref{lem: combi} is proven below. For now, we use it to give the 
\begin{proof}[Proof of Proposition \ref{lem: locality of a}]
Using $z(k)\leq \sum_{j=0}^k z_{k,j}$ and Lemma \ref{lem: combi}, we can inductively bound $z(k)$ as 
\begin{equation}\begin{split}
\frac{z(k)}{||D||^{k}_{\mu} ||V^+||_\mu } \leq  \sum_{p=1}^k  \frac{k^p}{p!} \leq e^k.
\end{split}\end{equation}
Then, provided that $e||D||_\mu +|| M ||_\mu <1
$, we estimate 
\begin{equation}\begin{split}
|| A^+||_\mu \leq \sum_{k=0}^\infty e^k  (\frac{||D||_\mu}{1-||M||_\mu})^{k} ||V^+||_\mu \leq  ||V^+||_\mu \frac{1-||M||_\mu}{1- e||D||_\mu -||M||_\mu}.
\end{split}\end{equation}
\end{proof}.

\subsection{Proof of Lemma \ref{lem: combi}}
\subsubsection{Some preparatory work}
We now state a few remarks that will be used.  They are specific to the sets $\bfs_0,\ldots, \bfs_k$ appearing in our construction. In particular, we assume that the sequence is admissible. 
We start with two anchoring equations:
\begin{equation}\label{eq: forward constraint}
S_{i}\cap (S'_{i-1})_+  \neq \emptyset
\end{equation}
and 
\begin{equation}\label{eq: backwards constraint}
   (S_{i})_{\neg g}\cap S'_{i-1}  \neq \emptyset 
\end{equation}
Both follow in a rather direct way from the defintion of an admissible sequence. The first equation will be used to anchor $S_i$ in $(S'_{i-1})_+ $ and it will be used often.
The second equation is used to anchor $S'_{i-1}$ in $(S_{i})_{\neg g}$ and it will only be used at the end of subsection \ref{sec: argument for j larger}.

\begin{lemma}\label{lem: set transfer one}
If $\mu>1$, then
\begin{equation}\begin{split}
\frac{  e^{-\mu |S_k\cap S'_{k-1}|}  }{|(S'_k)_+|}  \leq \frac{1}{|(S_k)_{\neg g}|}
\end{split}\end{equation}
\end{lemma}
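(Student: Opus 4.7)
The plan is to extract two set-theoretic inclusions from the multiplication table (Table~\ref{tab:word_multiplication_table}) and convert the resulting additive bound on $|(S_k)_{\neg g}|$ into the desired multiplicative one via the elementary inequality $e^{\mu m} \geq 1+m$.

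First I would use the admissibility requirement $(S'_k)_e = (S'_k)_- = \emptyset$ (Definition~\ref{def: admissible}, item 4). The rows $S'_e$ and $S'_-$ of Table~\ref{tab:word_multiplication_table}, crossed with the column $\caE \setminus S$, show that any element of $(S_k)_e$ or $(S_k)_-$ lying outside $S'_{k-1}$ would contribute to $(S'_k)_e$ or $(S'_k)_-$; admissibility thus forces
\begin{equation}
    (S_k)_e \cup (S_k)_- \subseteq S'_{k-1}.
\end{equation}
Conversely, the row $S'_+$ crossed with the column $\caE \setminus S$ shows that every element of $(S_k)_+ \setminus S'_{k-1}$ lands in $(S'_k)_+$. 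Combining these, $(S_k)_{\neg g} \setminus S'_{k-1} = (S_k)_+ \setminus S'_{k-1} \subseteq (S'_k)_+$. Partitioning $(S_k)_{\neg g}$ by membership in $S'_{k-1}$ then gives
\begin{equation}
    |(S_k)_{\neg g}| \leq |S_k \cap S'_{k-1}| + |(S'_k)_+|.
\end{equation}

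Next I would note that for sequences entering the construction of $A^+$ we have $|(S'_k)_+| \geq 1$. This holds inductively: $\bfs_0$ comes from $V^+$ so that $(S'_0)_+ = (S_0)_+ \neq \emptyset$; for $k \geq 1$, $\bfs_k$ comes from $D$ and so satisfies $(S_k)_e \cup (S_k)_+ \neq \emptyset$, which together with admissibility and a quick case analysis on the table rules out $(S'_k)_+ = \emptyset$. Setting $a := |(S'_k)_+| \geq 1$ and $m := |S_k \cap S'_{k-1}|$, the elementary inequality $e^{\mu m} \geq e^m \geq 1+m$ (valid for $\mu \geq 1$, $m \geq 0$) yields
\begin{equation}
    a + m \leq a(1+m) \leq a\, e^{\mu m},
\end{equation}
which rearranges to $\frac{e^{-\mu m}}{a} \leq \frac{1}{|(S_k)_{\neg g}|}$, the claim.

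The main obstacle is the careful reading of the multiplication table to establish the first inclusion; once that is done, the numerical bound is routine and the positivity $|(S'_k)_+| \geq 1$ is inherited from the structure of $D$ and $V^+$.
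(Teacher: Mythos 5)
Your proposal is correct and follows essentially the same route as the paper: both establish the inclusion $(S_k)_{\neg g}\subseteq (S'_k)_{+}\cup (S_k\cap S'_{k-1})$ (the paper simply notes that $S_k\setminus S'_{k-1}\subseteq S'_k$ and hence consists of $+$ or $g$ by admissibility, which is your table argument in compressed form) and then conclude via $|(S'_k)_+|+|S_k\cap S'_{k-1}|\leq |(S'_k)_+|\,e^{\mu|S_k\cap S'_{k-1}|}$. Your explicit verification that $|(S'_k)_+|\geq 1$ is a detail the paper leaves implicit but is correct and harmless.
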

\begin{proof}
We note that
\begin{equation}\begin{split}
(S_k)_{\neg g}\subseteq (S'_k)_{+}  \cup (S_k\cap S'_{k-1})
\end{split}\end{equation}
Indeed,  $S_k \setminus S'_{k-1} $ is a subset of $S'_k$ and hence it consists of  $+$ or $g$. 
Therefore, for $\mu > 1$ we have
\begin{equation}\begin{split}
\frac{  e^{-\mu |S_k\cap S'_{k-1}|}  }{|(S'_k)_+|}     |(S_k)_{\neg g}| \leq 
\frac{|(S'_k)_{+}| +|S_k\cap S'_{k-1}|}{|(S'_k)_{+}|}   e^{-\mu |S_k\cap S'_{k-1}|}  \leq 1
\end{split}\end{equation}
and hence the claim follows.
\end{proof}
\begin{lemma}\label{lem: set transfer two}
If $\mu>1$, then
\begin{equation} \label{eq: shift of splus}
    \frac{e^{-\mu |S'_{i-1} \cap S_i| }}{|(S'_{i})_+|} \leq   \frac{1}{|(S'_{i-1})_+|} 
\end{equation}
\end{lemma}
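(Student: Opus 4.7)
My plan is to follow the blueprint of Lemma~\ref{lem: set transfer one} essentially verbatim, with $(S'_{i-1})_+$ now playing the role that $(S_k)_{\neg g}$ did there. The crux is one set inclusion plus a single elementary exponential inequality.

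\emph{Step 1 (key set inclusion).} I would first establish
\[
(S'_{i-1})_+ \;\subseteq\; (S'_i)_+ \,\cup\, (S'_{i-1} \cap S_i).
\]
Let $\alpha \in (S'_{i-1})_+$. If $\alpha \in S_i$, then trivially $\alpha \in S'_{i-1} \cap S_i$. If $\alpha \notin S_i$, I consult Table~\ref{tab:word_multiplication_table} for the product $\bfs_i \bfs'_{i-1}$ in row $\caE \setminus S'$, column $S_+$: the entry is $(S'S)_+$, so $\alpha \in (S'_i)_+$. Taking cardinalities yields the subadditive bound $|(S'_{i-1})_+| \leq |(S'_i)_+| + |S'_{i-1} \cap S_i|$.

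\emph{Step 2 (exponential estimate).} Writing $x := |(S'_i)_+|$ and $c := |S'_{i-1} \cap S_i|$, the target inequality~\eqref{eq: shift of splus} rearranges to $|(S'_{i-1})_+|/x \leq e^{\mu c}$. By Step~1 it suffices to verify $(x+c)/x \leq e^{\mu c}$. Provided $x \geq 1$ (which holds whenever the ratio in the lemma is meaningful; for admissible sequences this follows, as in Lemma~\ref{lem: set transfer one}, from the fact that the $e$ and $-$ entries in Table~\ref{tab:word_multiplication_table} force nonempty $+$ components under the constraints of Definition~\ref{def: admissible}), one has
\[
\frac{x+c}{x} = 1 + \tfrac{c}{x} \leq 1 + c \leq e^{c} \leq e^{\mu c},
\]
using the classical bound $1+c \leq e^{c}$ for $c\geq 0$ and the hypothesis $\mu > 1$.

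\emph{Main obstacle.} The delicate part is Step~1, namely verifying that an element of $(S'_{i-1})_+$ untouched by $S_i$ does not leak into $(S'_i)_g$, $(S'_i)_e$, or $(S'_i)_-$ upon concatenation. This bookkeeping is the only place where the precise rules for $\mathrm{left}_\alpha(\bfs)$ and $\mathrm{right}_\alpha(\bfs)$ enter. Once the inclusion is established, the remainder of the argument is identical in spirit to the tail of the proof of Lemma~\ref{lem: set transfer one}, so no further difficulty is expected.
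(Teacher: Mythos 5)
Your proposal is correct and follows essentially the same route as the paper: the paper's proof consists exactly of the set inclusion $(S'_{i-1})_+ \subseteq (S'_{i})_+ \cup (S'_{i-1} \cap S_i)$ (stated as following from $\bfs'_{i}=\bfs_{i}\bfs'_{i-1}$) together with the remark that this "directly implies" the lemma. You simply spell out the two implicit steps — the table-based verification of the inclusion and the elementary estimate $1+c/x \leq e^{\mu c}$ — both of which are sound.
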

\begin{proof}
From $\bfs'_{i}=\bfs_{i}\bfs'_{i-1}$, we get
\begin{equation}\begin{split}
(S'_{i-1})_+ \subseteq  (S'_{i})_+ \cup (S'_{i-1} \cap S_i).
\end{split}\end{equation}
This directly implies the statement of the lemma.
\end{proof}

\subsubsection{The argument for $j=0$}\label{sec: argument for j is zero}
We recall that we are estimating
\begin{equation}\begin{split}
 z_{k,0}= \sup_\alpha 
\sum_{\substack{\bfs_0,\ldots,\bfs_k \\  \text{admissible} \\   \alpha \in S_{0} }}
z(\bfs_0,\ldots,\bfs_k).
\end{split}\end{equation}
This case is special in that we can afford to bound 
\begin{equation}\label{eq: crude bound}
z(\bfs_0,\ldots,\bfs_k) \leq \widetilde z(\bfs_0,\ldots,\bfs_k)  :=    d_{\bfs_k}   \frac{ 1 }{|(S_{k-1}')_+|}   d_{\bfs_{k-1}}
\ldots 
 \frac{ 1 }{|(S_1')_+|}  d_{\bfs_1}  \frac{1 }{|(S_0')_+|} v_{\bfs_0}.
\end{equation}
That is, we dropped the exponential factors in $z(\ldots)$ and we also dropped the left-most denominater $\frac{ 1 }{|(S_{k}')_+|}$.    
To perform the sum over admissible sequences $\bfs_0,\ldots, \bfs_k$, we note that 
\begin{equation}\begin{split}
S_{i+1} \cap (S'_{i})_+ \neq \emptyset.
\end{split}\end{equation}
We use this to ``anchor'' the sets $S_{i+1}$ in $(S'_{i})_+$. Concretely, starting with $i=k-1$, we rewrite
\begin{equation}\begin{split}
\sum_{\bfs_k:  S_{k} \cap (S'_{k-1})_+ \neq \emptyset }
d_{\bfs_k}  \leq \sum_{\beta \in (S'_{k-1})_+ }  
\sum_{\bfs_k:  S_{k} \ni \beta}         d_{\bfs_k}  \leq 
\sum_{\beta \in (S'_{k-1})_+ } || D||_{\mu}  \leq 
|(S'_{k-1})_+| || D||_{\mu}.
\end{split}\end{equation}
Hence we have obtained 
\begin{equation}\begin{split}
\sum_{\substack{\bfs_0,\ldots,\bfs_k \\  \text{admissible} \\   \alpha \in S_{0} }}
\widetilde z(\bfs_0,\ldots,\bfs_k) \leq    || D||_{\mu}  
\sum_{\substack{\bfs_0,\ldots,\bfs_{k-1}\\  \text{admissible} \\   \alpha \in S_{0} }}
\widetilde z(\bfs_0,\ldots,\bfs_{k-1}).
\end{split}\end{equation}
We can iterate this argument, and this yields the case $j=0$ in Lemma \ref{lem: combi}. 

\subsubsection{The argument for $j>0$} \label{sec: argument for j larger}
We start from 
\begin{equation}\begin{split} \label{eq: general case k anchoring for d}
z_{k,j} =                             \sup_\alpha  \sum_{\bfs_j: S_j \ni \alpha} d_{\bfs_j} 
\sum_{\substack{\bfs_0,\ldots,\bfs_{j-1} }}   \frac{  e^{-\mu |S_j\cap S'_{j-1}|}  }{|(S'_j)_+|}    z(\bfs_0,\ldots,\bfs_{j-1})   
\sum_{\substack{\bfs_{j+1},\ldots,\bfs_{k}}} 
\prod_{t=j+1}^k\frac{  e^{-\mu |S_t\cap S'_{t-1}|}  }{|(S'_t)_+|} d_{\bfs_t}  
\end{split}\end{equation}
where of course the sums are still constrained by the requirement that $\bfs_0,\ldots,\bfs_k$ is admissible. 
We now use Lemma \ref{lem: set transfer one} with $i=j$ and we use Lemma \ref{lem: set transfer two} with $i=j+1,\ldots,k$. We obtain then 
\begin{equation}\begin{split} \label{eq: general case next}
z_{k,j} \leq
\sup_\alpha  \sum_{\bfs_j: S_j \ni \alpha}  d_{\bfs_j} 
\sum_{\substack{\bfs_0,\ldots,\bfs_{j-1}  
  }}   \frac{ 1  }{|(S_j)_{\neg g}|}    z(\bfs_0,\ldots,\bfs_{j-1})   
\sum_{\substack{\bfs_{j+1},\ldots,\bfs_{k} 
}} 
\prod_{t=j+1}^k\frac{  1  }{|(S'_{t-1})_+|} d_{\bfs_t}    
\end{split}\end{equation}
We first deal with the last sum, i.e.
\begin{equation}\begin{split}
\label{eq: last}
\sum_{\substack{\bfs_{j+1},\ldots,\bfs_{k} }} 
\prod_{t=j+1}^k\frac{  1  }{|(S'_{t-1})_+|} d_{\bfs_t}  
\end{split}\end{equation}
Here we use the strategy that was already explained and used in subsection \ref{sec: argument for j is zero}: We sum first over $\bfs_k$ which is anchored in $(S'_{k-1})_{+}$. The factor $\frac{1}{|(S'_{k-1})_{+}|}$ controls the choice of anchoring points, and we get $||D||_{\mu}$ times the rest. We iterate this and we obtain in this way
\begin{equation}
\sum_{\substack{\bfs_{j+1},\ldots,\bfs_{k} }} 
\prod_{t=j+1}^k\frac{  1  }{|(S'_{t-1})_+|} d_{\bfs_t}    \leq   ||D||_{\mu}^{k-j}.
\end{equation}
Plugging this into \eqref{eq: last}, we get
\begin{equation}\begin{split} \label{eq: general case next next}
z_{k,j} &\leq   ||D||_{\mu}^{k-j}   \sup_\alpha  \sum_{\bfs_j: \alpha \in S_j}  d_{\bfs_j} 
\sum_{\substack{\bfs_0,\ldots,\bfs_{j-1} }}   \frac{ 1  }{|(S_j)_{\neg g}|}    z(\bfs_0,\ldots,\bfs_{j-1})      \\
&\leq    ||D||_{\mu}^{k-j}  \sup_\alpha  \sum_{\bfs_j: \alpha \in \bfs_j}    d_{\bfs_j}  \sum_{\beta \in (S_j)_{\neg g}}   \frac{  1 }{|(S_j)_{\neg g}|} 
z(j-1)  \\
&\leq    ||D||_{\mu}^{k-j}   z(j-1)    \sup_\alpha  \sum_{\bfs_j: \alpha \in \bfs_j}    d_{\bfs_j}  \\
& \leq    ||D||_{\mu}^{k-j+1}   z(j-1). 
\end{split}\end{equation}
To obtain the second inequality, we used the anchoring equation \eqref{eq: backwards constraint}, i.e. $(S_{j})_{\neg g}\cap S'_{j-1}  \neq \emptyset$.
This concludes the proof.

\section{Controlling the transformation generated by $A$}\label{sec: controlling_rotation}

We estimate the operator-collections  defined by repeated applications of commutators.

\begin{proposition} \label{lem: commutator}
Let $B^{(0)},B^{(1)}, \ldots B^{(k)}$ be operator-collections and let $\mu'<\mu$, 
then
\begin{equation}\begin{split}
|| \mathrm{ad}_{B^{(k)}} \ldots \mathrm{ad}_{B^{(1)}}(B^{(0)})||_{\mu'}  \leq  \frac{(k+1)! (2e)^k}{|\mu-\mu'|^{k+1}}   \prod_{t=0}^k || B^{(t)}||_{\mu}.
\end{split}\end{equation}
\end{proposition}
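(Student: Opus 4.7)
The proof is by induction on $k$, with the combinatorial content concentrated in the base case $k=1$. Starting from \eqref{def: commutator} and the definition of the word norm, the $k=1$ bound reduces to controlling
\[
\sup_\alpha \sum_{\substack{\bfs_1, \bfs_2: \\ S_1 \cap S_2 \neq \emptyset, \\ \alpha \in S_1 \cup S_2}} \bigl(\|B^{(1)}_{\bfs_1}\|\,\|B^{(0)}_{\bfs_2}\| + \|B^{(0)}_{\bfs_1}\|\,\|B^{(1)}_{\bfs_2}\|\bigr)\, e^{\mu'|S_1 \cup S_2|}.
\]
By the enhanced subadditivity \eqref{eq: enhanced subadditivity} I split the exponential weight as $e^{\mu'|S_1 \cup S_2|} \leq e^{\mu'|S_1|} e^{\mu'|S_2|}$ and partition the constraint $\alpha \in S_1 \cup S_2$ into the two cases $\alpha \in S_1$ and $\alpha \in S_2$. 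In each case, the overlap condition $S_1 \cap S_2 \neq \emptyset$ is used to introduce an auxiliary anchor point $\beta \in S_1 \cap S_2$ on the side not containing $\alpha$; summing over $\beta$ produces a factor bounded by $|S_i|$ on the side that contains $\alpha$. The key absorption inequality $xe^{-ax}\leq 1/(ae)$, applied with $a = \mu-\mu'$, gives
\[
|S_i|\,e^{\mu'|S_i|} \leq \frac{e^{\mu|S_i|}}{e(\mu-\mu')},
\]
which simultaneously produces the crucial $1/(\mu-\mu')$ factor and promotes $\mu'$ to $\mu$ on the anchoring side, so the remaining sum collapses to $\|B^{(\cdot)}\|_\mu$. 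Combining the two cases yields
\[
\|[B^{(1)}, B^{(0)}]\|_{\mu'} \leq \frac{4}{e(\mu-\mu')}\,\|B^{(0)}\|_\mu\, \|B^{(1)}\|_\mu,
\]
which is consistent with (and in the regime $\mu - \mu' \leq e^2$ covering all applications in the paper, stronger than) the $k=1$ instance of the claim.

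For the inductive step $k \to k+1$, I write $C^{(k+1)} = [B^{(k+1)}, C^{(k)}]$ where $C^{(k)} = \mathrm{ad}_{B^{(k)}}\cdots\mathrm{ad}_{B^{(1)}}(B^{(0)})$, pick an intermediate $\mu'' \in (\mu', \mu)$, apply the single-commutator bound at $(\mu'',\mu')$, and apply the induction hypothesis at $(\mu,\mu'')$. The resulting denominator $(\mu''-\mu')(\mu-\mu'')^{k+1}$ is maximized over $\mu''$ at $\mu''-\mu' = (\mu-\mu')/(k+2)$, where it equals $\frac{(k+1)^{k+1}}{(k+2)^{k+2}}(\mu-\mu')^{k+2}$. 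Using the elementary estimate $(1+\tfrac{1}{k+1})^{k+1} \leq e$, equivalently $(k+2)^{k+2}/(k+1)^{k+1} \leq (k+2)e$, the combinatorial factor $(k+1)!\cdot(k+2)=(k+2)!$ is generated as required, and $4 \leq 2e$ ensures that the constant $(2e)^k$ upgrades correctly to $(2e)^{k+1}$ in passing from $k$ to $k+1$.

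The main obstacle is the base-case combinatorial accounting. The overlap condition $S_1 \cap S_2 \neq \emptyset$ built into \eqref{def: commutator} is essential, since without it one cannot anchor $\bfs_2$ to $\bfs_1$ and the sum over $\bfs_2$ would carry no spatial constraint near $\alpha$. The tension between the polynomial cost $|S_i|$ of counting anchors and the exponential weight $e^{\mu'|S_i|}$ is resolved by the universal absorption inequality $xe^{-ax}\leq 1/(ae)$, which is precisely what converts one power of $|S_i|$ into one power of $1/(\mu-\mu')$. In the inductive step, naive equal spacing $\mu_j - \mu_{j+1} = (\mu-\mu')/k$ would produce $k^k$-type growth, worse than $(k+1)!$ by roughly $e^k$; the geometric optimization of $\mu''$ at each induction step is precisely what turns this excess into the innocuous multiplicative constant $(2e)^k$ rather than a denominator-worsening $(\mu-\mu')$-dependence.
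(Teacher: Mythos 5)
Your argument is sound in structure and genuinely different from the paper's. The paper treats the $k$-fold commutator in one shot: it expands into $2^k$ orderings, extracts the entire factor $(k+1)!/|\mu-\mu'|^{k+1}$ from a single application of $e^{-X}X^{k+1}\leq (k+1)!$ with $X=(\mu-\mu')|S'_k|$, and then controls the remaining sum $z(k)$ by a separate combinatorial anchoring lemma (Lemma~\ref{lem: combi rotation}), which supplies the $e^k$. You instead peel off one commutator at a time, prove a reusable bilinear estimate $\|[A,B]\|_{\mu'}\leq \tfrac{4}{e(\mu''-\mu')}\|A\|_{\mu''}\|B\|_{\mu''}$ by the same anchoring-plus-absorption mechanism, and recover the factorial by optimizing the telescoping sequence of decay rates; the identity $(1+\tfrac{1}{k+1})^{k+1}\leq e$ converts the optimization cost into the $(2e)^k$ prefactor. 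Your version is more modular at the price of carrying the $\mu$-budget through the induction; the paper's avoids intermediate $\mu$'s but needs the global combinatorial lemma. I checked your inductive arithmetic and it closes with exactly the margin $4\leq 2e$ you identify.

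There is one genuine, though easily repaired, gap. Your base case gives $\tfrac{4}{e(\mu-\mu')}$, which implies the claimed $k=1$ bound $\tfrac{4e}{(\mu-\mu')^2}$ only when $\mu-\mu'\leq e^2$; since the induction consumes the hypothesis in the claimed form, your proof as written establishes the proposition only under that extra restriction, which the statement does not carry. Two fixes: (i) split the absorption budget, $|S_1|e^{-(\mu-\mu')|S_1|}\leq \bigl(|S_1|e^{-\frac{\mu-\mu'}{2}|S_1|}\bigr)e^{-\frac{\mu-\mu'}{2}}\leq \tfrac{2}{e(\mu-\mu')}\cdot\tfrac{2}{\mu-\mu'}$, using $|S_1|\geq 1$ and $e^{-x}\leq 1/x$, which yields $\tfrac{16}{e(\mu-\mu')^2}\leq\tfrac{4e}{(\mu-\mu')^2}$ unconditionally; or (ii) start the induction at $k=0$, where $\|B^{(0)}\|_{\mu'}\leq\tfrac{1}{\mu-\mu'}\|B^{(0)}\|_{\mu}$ holds because every word contributing to the norm has $|S|\geq 1$. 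Either repair leaves the rest of your argument untouched. A cosmetic point: the step $e^{\mu'|S_1\cup S_2|}\leq e^{\mu'|S_1|}e^{\mu'|S_2|}$ is ordinary subadditivity of $|S_1\cup S_2|$, not the enhanced form \eqref{eq: enhanced subadditivity}; you never need the $|S_1\cap S_2|$ improvement.
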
 
This result will in particular be useful to control exponentials: indeed, if 
\begin{equation}\begin{split}
\frac{(2e) ||B^{(1)}||_  \mu}{|\mu-\mu'|}\leq 1/2,
\end{split}\end{equation}
then it follows that
\begin{equation}\label{eq: effect of exp first}
||\exp{i\mathrm{ad}_{B^{(1)}}} (B^{(0)})-B^{(0)}||_{\mu'} \leq \frac{2 ||B^{(0)}||_{\mu}  }{|\mu-\mu'|} \sum_{k=1}^{\infty} (\frac{(2e) ||B^{(1)}||_{\mu}}{|\mu-\mu'|}   )^k \leq \frac{8e}{|\mu-\mu'|^2} ||B^{(0)}||_{\mu}||B^{(1)}||_{\mu}   
\end{equation}
and similarly,
\begin{equation}\label{eq: effect of exp zeroth}
||\exp{i\mathrm{ad}_{B^{(1)}}} (B^{(0)})||_{\mu'} \leq  ||B^{0}||_\mu (1+\frac{8e ||B^{(1)}||_{\mu} }{|\mu-\mu'|^2}  ) 
\end{equation}

\subsection{Proof of Proposition \ref{lem: commutator}}
 We split 
\begin{equation}\begin{split}
 \mathrm{ad}_{B^{(k)}} \ldots \mathrm{ad}_{B^{(1)}}(B^{(0)})=\sum_{\bfs_0,\ldots,\bfs_k} [B^{(k)}_{\bfs_k}, \ldots   \ldots[B^{(1)}_{\bfs_1},B^{(0)}_{\bfs_0}] \ldots]
\end{split}\end{equation}
If the condition
\begin{equation}\label{eq: connected supports}
S_{i+1} \cap (S_0\cup S_1\cup \ldots S_{i}) \neq \emptyset,\qquad \forall i=0,\ldots,k-1
\end{equation}
is not satisfied, then the corresponding term in the sum vanishes. 
Therefore, we assume that this condition holds and we expand all commutators $[O_1,O_2]$ into the two terms $+O_1O_2$ and $-O_2O_1$. 
This gives hence $2^k$ terms. For the sake of concreteness, we focus on the expression where we always choose the term with $'+'$. The other terms are estimated analogously.  Hence, we define again, for $i>0$
\begin{equation}
\bfs'_{i} = \mathbf{S}_i \bfs'_{i-1}, \qquad    \bfs'_{0}=  \bfs_{0} 
\end{equation}
and we investigate 
\begin{equation}\begin{split}
 ||\sum_{\bfs_0,\ldots,\bfs_k}  B^{(k)}_{\bfs_k}\ldots B^{(1)}_{\bfs_1}B^{(0)}_{\bfs_0}||_{\mu'} 
   & \leq  \sup_\alpha \sum_{\substack{\bfs_0,\ldots,\bfs_k \\ \alpha \in S'_{k} }}  e^{\mu' |S'_k|}
     ||B^{(k)}_{\bfs_k}||   \ldots ||B^{(1)}_{\bfs_1}|| ||B^{(0)}_{\bfs_0}||  \\
      & \leq   \frac{(k+1)!}{|\mu-\mu'|^{k+1}}   \sup_\alpha \sum_{\substack{\bfs_0,\ldots,\bfs_k \\ \alpha \in S'_{k} }}   z(\bfs_0,\ldots,\bfs_k) \label{eq: last sighting facto}
\end{split}\end{equation}
where we used $e^{-X} X^{k+1} \leq (k+1)!$ for $X>0$, and we 
abbreviated 
\begin{equation}
z(\bfs_0,\ldots,\bfs_k)= \frac{1}{|S_k'|^{k+1}}   b^{(k)}_{\bfs_k}\ldots b^{(0)}_{\bfs_0},\qquad  b^{i}_{\bfs_i}= ||B^{(i)}_{\bfs_i}|| e^{\mu |S_i|}  
\end{equation}
We define
\begin{equation}\begin{split}
  z(k) =&   \sup_\alpha \sum_{\substack{\bfs_0,\ldots,\bfs_k \\ \alpha \in S'_{k} }} z(\bfs_0,\ldots,\bfs_k) 
 \end{split}\end{equation} 
 and 
\begin{equation}\begin{split}
 z(k)  \leq \sum_{j=1}^{k} z_{k,j} \equiv \sum_{j=1}^{k}  \sup_\alpha \sum_{\substack{\bfs_0,\ldots,\bfs_k \\ \alpha \in S_{j} }} z(\bfs_0,\ldots,\bfs_k) 
\end{split}\end{equation}
We will show
\begin{lemma} \label{lem: combi rotation}
For $j=0,\ldots,k$,
\begin{equation}
z_{k,j} \leq   z(j-1)   \prod_{t=j}^k || B^{(t)}||_{\mu}
\end{equation}
where $z(-1)$ is defined to be $1$. 
\end{lemma}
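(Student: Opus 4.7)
The plan is to adapt the strategy of Lemma \ref{lem: combi} in Section \ref{sec: argument for j larger}, modified for the single polynomial weight $1/|S'_k|^{k+1}$ here (as opposed to a product $\prod_i 1/|(S'_i)_+|$ of separate factors in the $A^+$-construction). The starting point is to split this weight using monotonicity $|S'_i| \leq |S'_k|$ for $i \leq k$ together with $|S_j| \leq |S'_k|$, giving the elementary bound
\begin{equation}
\frac{1}{|S'_k|^{k+1}} \leq \frac{1}{|S'_{j-1}|^{j}} \cdot \frac{1}{|S_j|} \cdot \prod_{i=j+1}^k \frac{1}{|S'_i|},
\end{equation}
valid for all $j \in \{0,1,\ldots,k\}$ with the convention $|S'_{-1}|^0 = 1$. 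The three groups of factors are earmarked respectively for (i) reproducing the weight inside $z(j-1)$, (ii) cancellation against an auxiliary anchor sum, and (iii) controlling the top-down sums over $\bfs_{j+1},\ldots,\bfs_k$.

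For the outermost sums $\bfs_k, \bfs_{k-1}, \ldots, \bfs_{j+1}$ I would use the connectedness \eqref{eq: connected supports} combined with $1/|S'_i| \leq 1/|S'_{i-1}|$ to anchor each $\bfs_i$ in $S'_{i-1}$, producing a factor $||B^{(i)}||_\mu$ at each step, exactly as in Section \ref{sec: argument for j is zero}. For $j \geq 1$ the key new ingredient is to insert an auxiliary anchor $\gamma \in S_j \cap S'_{j-1}$ using
\begin{equation}
\mathbbm{1}[S_j \cap S'_{j-1} \neq \emptyset] \leq \sum_\gamma \mathbbm{1}[\gamma \in S_j,\, \gamma \in S'_{j-1}].
\end{equation}
Once $\gamma$ is fixed, the remaining sum over $(\bfs_0,\ldots,\bfs_{j-1})$ decouples from the sum over $\bfs_j$. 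The former, with constraint $\gamma \in S'_{j-1}$ and weight $1/|S'_{j-1}|^j$, is bounded by $z(j-1)$ directly (with $\gamma$ playing the role of the focal anchor in the definition of $z(j-1)$). The latter is handled by swapping the orders of the $\gamma$- and $\bfs_j$-sums, which produces $\sum_\gamma \sum_{\bfs_j: \alpha, \gamma \in S_j} = \sum_{\bfs_j: \alpha \in S_j} |S_j|$, exactly cancelling the $1/|S_j|$ factor and leaving an $\alpha$-anchored sum bounded by $||B^{(j)}||_\mu$. Assembling the pieces gives $z_{k,j} \leq z(j-1) \prod_{t=j}^k ||B^{(t)}||_\mu$. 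The case $j=0$ is simpler: only the outer sums and the $\alpha$-anchored $\bfs_0$-sum are needed, and one uses the convention $z(-1)=1$.

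The main subtlety, and the reason the argument differs from that of Lemma \ref{lem: combi}, is the need to retain the isolated factor $1/|S_j|$ in the split rather than distributing all $k+1$ powers of $1/|S'_k|$ uniformly across the $S'_i$'s. This factor has no analogue in the product-of-$(S'_i)_+$ form of Section \ref{sec: argument for j larger}, yet it is essential: it is precisely what the $\sum_{\gamma \in S_j}$ sum at the end cancels, converting what would otherwise be an extensive bound (proportional to the number of checks $N_c$) into the desired intensive bound matching $z(j-1)$ times a product of local norms.
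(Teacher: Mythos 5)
Your proof is correct and follows essentially the same route as the paper's: the paper distributes the $k{+}1$ powers of $1/|S'_k|$ as $\frac{1}{|S'_{j-1}|^{j}}\cdot\frac{1}{|S'_j|}\cdot\prod_{t=j+1}^{k}\frac{1}{|S'_t|}$, performs the outer sums by anchoring each $\bfs_t$ in $S'_{t-1}$, and then cancels the $\sum_{\beta\in S_j}$ anchor sum against $\frac{1}{|S'_j|}\le\frac{1}{|S_j|}$ — which is exactly your isolated $1/|S_j|$ factor, introduced one step earlier. The only difference is this cosmetic reordering of when $|S'_j|$ is replaced by $|S_j|$; the decomposition, the anchoring, and the cancellation are identical.
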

This will yield the proof of Proposition \ref{lem: commutator}. Indeed by an inductive argument, very analogous to the reasoning following Lemma \ref{lem: combi}, we get
\begin{equation}
z(k) \leq e^k  \prod_{t=0}^k || B^{(t)}||_{\mu}.
\end{equation}
Then, using \eqref{eq: last sighting facto} and recalling that there are $2^k$ terms, we get the Proposition.  It remains to give the proof of the Lemma.

\subsection{Proof of Lemma \ref{lem: combi rotation}}
\subsubsection{The case $j=0$}
We have to bound 
\begin{equation}\begin{split}
&\sum_{\substack{\bfs_0,\ldots,\bfs_k \\ \alpha \in S_{0} }} z(\bfs_0,\ldots,\bfs_k) \\
\leq & 
\sum_{{\bfs_0: S_{0} \ni \alpha }}  
\frac{1}{|S_k'| }  b^{(0)}_{\bfs_{1}}  
\sum_{{\bfs_1: S_{1} \cap \in S_{0}' }}  
\frac{1}{|S_k'| }  b^{(1)}_{\bfs_{1}}  \ldots 
\sum_{{\bfs_k: S_k \cap \in S_{k-1}' }}  
\frac{1}{|S_k'| }  b^{(k)}_{\bfs_k} 
\end{split}\end{equation}
To perform the rightmost sum, we dominate 
$\frac{1}{|S_k'| }  \leq \frac{1}{|S_{k-1}'| } $
and we bound the sum by $|| B^{(t)}||_{\mu}$.
We continue this process interatively, obtaining $\prod_{t=0}^k || B^{(t)}||_{\mu}$.
The leftmost factor $\frac{1}{|S_k'| } $ was not exploited and we simply bounded it by $1$.
\subsubsection{The case $j>0$}
    \begin{equation}\begin{split} \label{eq: general case k anchoring for b}
z_{k,j}  \leq                           \sup_\alpha  \sum_{\bfs_j: S_j \ni \alpha} b^{(j)}_{\bfs_j} 
\sum_{\substack{\bfs_0,\ldots,\bfs_{j-1} }}   \frac{ 1 }{|S'_j|}    z(\bfs_0,\ldots,\bfs_{j-1})   
\sum_{\substack{\bfs_{j+1},\ldots,\bfs_{k}}} 
\prod_{t=j+1}^k\frac{ 1 }{|S'_t|} b^{(t)}_{\bfs_t}  
\end{split}\end{equation}
where we bounded $\frac{1}{|S_k'|} \leq  \frac{1}{|S_j'|}  $ and  $\frac{1}{|S_k'|} \leq  \frac{1}{|S_t'|}  $.
We now use the same reasoning as for case $j=k$ to bound the last sum;
we obtain
\begin{equation}\begin{split} 
z_{k,j}  & \leq                           \sup_\alpha  \sum_{\bfs_j: S_j \ni \alpha} b^{(j)}_{\bfs_j} \sum_{\beta \in S_j}
\sum_{\substack{\bfs_0,\ldots,\bfs_{j-1} \\  \beta \in S_{j-1}' }}   \frac{ 1 }{|S'_j|}    z(\bfs_0,\ldots,\bfs_{j-1})   
 \prod_{t=j+1}^k || B^{(t)}||_{\mu} \\
  & \leq                           \sup_\alpha  \sum_{\bfs_j: S_j \ni \alpha} b^{(j)}_{\bfs_j} 
z(j-1)  
 \prod_{t=j+1}^k || B^{(t)}||_{\mu} \\
 & \leq                         || B^{(j)}||_\mu
z(j-1)  
 \prod_{t=j+1}^k || B^{(t)}||_{\mu}
 \leq
 z(j-1)  
 \prod_{t=j}^k || B^{(t)}||_{\mu}.
\end{split}\end{equation}

This concludes our proof of Lemma~\ref{lem: combi rotation}, which we used in our proof of Proposition~\ref{lem: commutator}. As noted above, Proposition~\ref{lem: commutator} is useful for controlling the locality of an operator after a KAM step and for controlling the difference between an operator and its rotation under a KAM step. This control will allow us to ensure that our operator collections still have good intensive word norms $||\cdot ||_{\mu_n}$ for the appropiate $\mu_n$ after the KAM steps. It is also important in allowing us to show that $\epsilon_n \coloneqq || V^{(n)}||_{\mu_n}$ is rapidly decreasing in $n$.

\section{Proof of relative boundedness (Proposition~\ref{prop: relative boundedness})} \label{sec:proof_rel_bounded}

For any operators $A$ and $B$ acting on the same Hilbert space $\mathcal{H}$, we use the notation $B \rbleq A$, read ``$B$ is relatively bounded by $A$," to mean that 
\begin{equation}
    \forall |\psi\rangle \in \mathcal{H}, \,\,\, ||  B \ket{\psi}|| \leq ||A \ket{\psi}|| 
\end{equation}

For $c_1,c_2 \geq 0$, if $B_1 \rbleq c_1 A$ and $B_2 \rbleq c_2 A$, then $B_1 + B_2 \rbleq (c_1+c_2)A$ by triangle inequality. Additionally, suppose that $\{A_i\}$ are a set of Hermitian positive semi-definite operators that mutually commute. Then for sets of numbers $\{c_i\}$ and $\{c'_i\}$ with $0 \leq c_i \leq c'_i$, $\sum_i c_i A_i \rbleq \sum_i c'_i A_i$. We will use these facts freely in the following.

It is useful to define some more notation. Let $X$ be a subset of checks, and define the projector $$P(X) = \prod_{\alpha \in X} E_{\alpha} \prod_{\beta \notin X} G_\beta.$$ Summing over all subsets of checks gives the identity, $\sum_X P(X) = 1$. The idea of $P(X)$ is that it projects onto excited stabilizer generators at the check locations contained in $X$.

We will first relatively bound $D^{(n)}$ following a method inspired by section 3.2 of \cite{bravyi2010topological}.

\begin{equation}
\begin{split}
\langle \psi| (D^{(n)})^{\dagger} D^{(n)} |\psi \rangle 
&= \sum_{\mathbf{S}} \sum_{\mathbf{S'}} \langle \psi| (D^{(n)}_{\mathbf{S}})^{\dagger} D^{(n)}_{\mathbf{S'}} |\psi \rangle
\\&= \sum_{X,Y,Z} \sum_{\mathbf{S}} \sum_{\mathbf{S'}} \langle \psi|P(X) (D^{(n)}_{\mathbf{S}})^{\dagger}P(Y)  D^{(n)}_{\mathbf{S'}} P(Z)|\psi \rangle
\\&\leq \sum_{X,Y,Z}  \sum_{\mathbf{S}} \sum_{\mathbf{S'}} ||P(Y) D^{(n)}_{\mathbf{S}} P(X)||\, ||P(Y)  D^{(n)}_{\mathbf{S'}} P(Z)||\, ||P(X)|\psi\rangle||\, ||P(Z)|\psi\rangle||
\\&\leq \sum_{X,Y,Z}  \sum_{\mathbf{S}} \sum_{\mathbf{S'}} ||P(Y) D^{(n)}_{\mathbf{S}} P(X)||\, ||P(Y)  D^{(n)}_{\mathbf{S'}} P(Z)||\, \frac{\langle \psi|P(X)|\psi\rangle+ \langle \psi|P(Z)|\psi\rangle}{2}
\\&= \sum_{X,Y,Z}  \sum_{\mathbf{S}} \sum_{\mathbf{S'}} ||P(Y) D^{(n)}_{\mathbf{S}} P(X)||\, ||P(Y)  D^{(n)}_{\mathbf{S'}} P(Z)||\, \langle \psi|P(X)|\psi\rangle
\end{split}
\end{equation}

For many choices of $X,Y,Z, \mathbf{S}, \mathbf{S'}$, the summand vanishes. For a given choice of $X, \mathbf{S}, \mathbf{S'}$, there is \textit{at most one} choice for $Y$ and $Z$ such that both $||P(Y) D^{(n)}_{\mathbf{S}} P(X)||$ and $||P(Y)  D^{(n)}_{\mathbf{S'}} P(Z)||$ are nonvanishing. For $Y$, the choice is $Y(X,\mathbf{S}) := S_+ \cup (X/S_-)$. For $Z$, the choice is $Z(X,\mathbf{S},\mathbf{S}') := S'_- \cup  (Y(X,\mathbf{S})/(S_+))$. 

Given the structure of $D^{(n)}_{\mathbf{S}}$ with $(S_- \cup S_e)$ and $(S_+ \cup S_e)$ both non-empty, it is also necessary that $S \cap X \neq \emptyset$ and $S' \cap Y \neq \emptyset$ in order that $D^{(n)}_{\mathbf{S}} P(X) \neq 0$ and $P(Y) D^{(n)}_{\mathbf{S'}} \neq 0$ respectively. Note that this precludes $X = \emptyset$.

Putting all these restrictions together,
\begin{equation}
\begin{split}
&\langle \psi| (D^{(n)})^{\dagger} D^{(n)} |\psi \rangle 
\leq \sum_{X \neq \emptyset} \langle \psi|P(X)|\psi\rangle \sum_{\mathbf{S}} ||P(Y(X,\mathbf{S})) D^{(n)}_{\mathbf{S}} P(X)|| \sum_{\mathbf{S'}} ||P(Y(X,\mathbf{S}))  D^{(n)}_{\mathbf{S'}} P(Z(X,\mathbf{S}, \mathbf{S'}))||
\\&= \sum_{X \neq \emptyset} \langle \psi|P(X)|\psi\rangle \sum_{\mathbf{S}: S \cap X \neq \emptyset} ||P(Y(X,\mathbf{S})) D^{(n)}_{\mathbf{S}} P(X)|| \sum_{\mathbf{S'}: S' \cap Y(X,\mathbf{S}) \neq \emptyset} ||P(Y(X,\mathbf{S}))  D^{(n)}_{\mathbf{S'}} P(Z(X,\mathbf{S}, \mathbf{S'}))||
\\&\leq \sum_{X \neq \emptyset} \langle \psi|P(X)|\psi\rangle  \sum_{\mathbf{S}: S \cap X \neq \emptyset} ||D^{(n)}_{\mathbf{S}}|| \sum_{\mathbf{S'}: S' \cap Y(X,\mathbf{S}) \neq \emptyset} ||D^{(n)}_{\mathbf{S'}}||
\\&\leq \sum_{X \neq \emptyset} \langle \psi|P(X)|\psi\rangle  \sum_{\mathbf{S}: S \cap X \neq \emptyset} ||D^{(n)}_{\mathbf{S}}||\, |Y(X,\mathbf{S})| \, ||D^{(n)}||_0
\end{split}
\end{equation}
Note that $|Y(X,\mathbf{S})| \leq |X|+|S| < |X| e^{|S|}$, with the last inequality following by $X \neq \emptyset$. Then
\begin{equation}
\begin{split}
\langle \psi| (D^{(n)})^{\dagger} D^{(n)} |\psi \rangle 
&\leq ||D^{(n)}||_0 \sum_{X} |X| \langle \psi|P(X)|\psi\rangle  \sum_{\mathbf{S}: S \cap X \neq \emptyset} ||D^{(n)}_{\mathbf{S}}|| e^{|S|}\, 
\\&\leq ||D^{(n)}||_0 ||D^{(n)}||_1 \sum_{X} |X|^2 \langle \psi|P(X)|\psi\rangle
\\&\leq ||D^{(n)}||_0 ||D^{(n)}||_1 \langle \psi | (H_0)^2 |\psi\rangle
\end{split}
\end{equation}
Note that $\mu_n>\mu_*>\log(4) > 1$, so $||D^{(n)}||_0 \leq ||D^{(n)}||_1 \leq ||D^{(n)}||_{\mu_n} \leq \eta_n \leq 2 \epsilon_0$, implying 
\begin{equation}
    D^{(n)} \rbleq 2 \epsilon_0 H_0
\end{equation}

Using \textbf{TQO-I} and \textbf{TQO-II}, $M^{(n)}$ can also be relatively bounded. We will introduce the notation
$$G(\ell, S) = \prod_{\alpha \in B_{\ell|S|}(S)} G_{\alpha},$$ which will reduce clutter in the following. Recall the notation $B_{\ell|S|}(S)=\{\alpha: \mathrm{dist}(\alpha,S)\leq \ell |S|\}.$ 

By construction, $M^{(n)}_\mathbf{S}$ is only nonzero for $|S|<d_{\star}$. That is, \textbf{TQO-I} applies to $M^{(n)}_\mathbf{S}$, as $M^{(n)}_\mathbf{S}$ commutes with all the stabilizers and is not too large, and so $M^{(n)}_\mathbf{S}$ is itself a linear combination of terms that are products of stabilizer generators. Then, by \textbf{TQO-II}, the locations of these stabilizer generators can be taken to be within $B_{\ell|S|}(S)$ ($\ell$ is $\Theta(1)$).
In particular, 
\begin{equation}
    M^{(n)}_\mathbf{S}G(\ell, S) = G(\ell, S) M^{(n)}_\mathbf{S} =  \langle M_{\mathbf{S}}^{(n)} \rangle G(\ell, S)
\end{equation}
This follows from the fact that $M^{(n)}_\mathbf{S}$ is a linear combination of terms in $\caG_{B_{\ell|S|}(S)}$, and each of those terms reduces to a constant when acting on $G(\ell, S)$.

We will define $\overline{M^{(n)}_\mathbf{S}} = M^{(n)}_\mathbf{S} - \langle M_{\mathbf{S}}^{(n)} \rangle$.
If we put any ground state $\ket{g}$ on the right hand side of $M^{(n)}_\mathbf{S} G(\ell, S) = \langle M_{\mathbf{S}}^{(n)} \rangle G(\ell, S)$, we get $M^{(n)}_\mathbf{S}\ket{g} = \langle M_{\mathbf{S}}^{(n)} \rangle \ket{g}$, so $||M^{(n)}_\mathbf{S}|| \geq |\langle M_{\mathbf{S}}^{(n)} \rangle|$. Accordingly, $\Big|\Big|\overline{M^{(n)}_\mathbf{S}}\Big|\Big| \leq ||M^{(n)}_\mathbf{S}|| + |\langle M_{\mathbf{S}}^{(n)} \rangle| \leq 2 ||M^{(n)}_\mathbf{S}||$. Notice that $\overline{M^{(n)}_\mathbf{S}} \prod_{\alpha \in B_{\ell|S|}(S)} G_{\alpha} = \prod_{\alpha \in B_{\ell|S|}(S)} G_{\alpha} \overline{M^{(n)}_\mathbf{S}} = 0$, and $\overline{M^{(n)}_\mathbf{S}}$ commutes with every $E_\alpha$ and $G_\alpha$.

Then
\begin{equation}
\begin{split}
\langle \psi | (\overline{M^{(n)}})^\dagger \overline{M^{(n)}} | \psi \rangle &= \sum_{\mathbf{S}} \sum_{\mathbf{S}'} \langle \psi | (\overline{M^{(n)}_\mathbf{S}})^\dagger \overline{M^{(n)}_{\mathbf{S}'}} | \psi \rangle
\\&= \sum_{\mathbf{S}} \sum_{\mathbf{S}'} \langle \psi | (1-G(\ell, S')) (1-G(\ell, S))(\overline{M^{(n)}_\mathbf{S}})^\dagger \overline{M^{(n)}_{\mathbf{S}'}} (1-G(\ell, S)) (1-G(\ell, S'))| \psi \rangle
\\&\leq \sum_{\mathbf{S}} \sum_{\mathbf{S}'} \Big|\Big|\overline{M^{(n)}_\mathbf{S}}\Big|\Big|\, \Big|\Big|\overline{M^{(n)}_{\mathbf{S}'}}\Big|\Big|\,||(1-G(\ell, S)) (1-G(\ell, S'))|\psi\rangle||^2
\\&= \sum_{\mathbf{S}} \sum_{\mathbf{S}'} \Big|\Big|\overline{M^{(n)}_\mathbf{S}}\Big|\Big|\, \Big|\Big|\overline{M^{(n)}_{\mathbf{S}'}}\Big|\Big|\,\langle \psi|(1-G(\ell, S)) (1-G(\ell, S'))|\psi\rangle
\\&= \langle \psi| \left(\sum_{\mathbf{S}} \Big|\Big|\overline{M^{(n)}_\mathbf{S}}\Big|\Big| (1-G(\ell, S))\right)^2 |\psi \rangle
\end{split}
\end{equation}
which means $\overline{M^{(n)}} \rbleq \sum_{\mathbf{S}} \Big|\Big|\overline{M^{(n)}_\mathbf{S}}\Big|\Big| (1-G(\ell, S)) \rbleq \sum_{\mathbf{S}} \Big|\Big|\overline{M^{(n)}_\mathbf{S}}\Big|\Big|\sum_{\alpha \in B_{\ell|S|}(S)} E_\alpha$. We can massage this further, freely using that the $E_{\alpha}$ are positive semi-definite, Hermitian, and commuting.
\begin{equation}
\begin{split}
    \overline{M^{(n)}} &\rbleq \sum_\mathbf{S} \Big|\Big|\overline{M^{(n)}_\mathbf{S}}\Big|\Big| \sum_{\alpha \in B_{\ell|S|}(S)} E_{\alpha}
    \\&= \sum_\alpha E_\alpha  \sum_{\mathbf{S}:\alpha \in B_{\ell|S|}(S)} \Big|\Big|\overline{M^{(n)}_\mathbf{S}}\Big|\Big|
    \\&= \sum_\alpha E_\alpha  \sum_{k=1}^\infty \sum_{\mathbf{S}:\alpha \in B_{\ell|S|}(S), |S| =k} \Big|\Big|\overline{M^{(n)}_\mathbf{S}}\Big|\Big|
    \\&\rbleq \sum_\alpha E_\alpha  \sum_{k=1}^\infty \sum_{\mathbf{S}:\alpha \in B_{\ell|S|}(S), |S| =k} (\sum_{\beta: \text{dist}(\alpha,\beta) \leq \ell k} I_{\beta \in S}) \Big|\Big|\overline{M^{(n)}_\mathbf{S}}\Big|\Big|
\end{split}
\end{equation}
In the last line, we have used that in order for $B_{\ell|S|}(S)$ to contain $\alpha$, by definition there must be a $\beta \in S$ that is within a distance $\ell |S|$ from $\alpha$. Rearranging, 
\begin{equation}
\begin{split}
    \sum_\mathbf{S} \overline{M^{(n)}_\mathbf{S}} &\rbleq \sum_\alpha E_\alpha  \sum_{k=1}^\infty \sum_{\beta: \text{dist}(\alpha,\beta) \leq \ell k} \sum_{\mathbf{S}:\alpha \in B_{\ell|S|}(S), |S| =k, \beta \in S} \Big|\Big|\overline{M^{(n)}_\mathbf{S}}\Big|\Big|
    \\&\rbleq \sum_\alpha E_\alpha  \sum_{k=1}^\infty \sum_{\beta: \text{dist}(\alpha,\beta) \leq \ell k} \sum_{\mathbf{S}:\beta \in S, |S|=k}  \Big|\Big|\overline{M^{(n)}_\mathbf{S}}\Big|\Big|
    \\&\rbleq \sum_\alpha E_\alpha  \sum_{k=1}^\infty e^{-\mu_n k} \sum_{\beta: \text{dist}(\alpha,\beta) \leq \ell k} \sum_{\mathbf{S}:\beta \in S, |S|=k}  e^{\mu_n S} \Big|\Big|\overline{M^{(n)}_\mathbf{S}}\Big|\Big|
    \\&\rbleq \sum_\alpha E_\alpha  \sum_{k=1}^\infty e^{-\mu_n k} \sum_{\beta: \text{dist}(\alpha,\beta) \leq \ell k} \Big|\Big|\overline{M^{(n)}_\mathbf{S}}\Big|\Big|_{\mu_n}
    \\&\rbleq \Big|\Big|\overline{M^{(n)}_\mathbf{S}}\Big|\Big|_{\mu_n} \sum_\alpha E_\alpha  \sum_{k=1}^\infty e^{-\mu_n k} |B_{\ell k}(\alpha)|
    \\&\rbleq \frac{1}{1-e^{-\mu_n + \ell \kappa}} \Big|\Big|\overline{M^{(n)}_\mathbf{S}}\Big|\Big|_{\mu_n} \sum_\alpha E_\alpha
    \\&\rbleq \frac{4}{1-e^{-\mu_n + \ell \kappa}} \epsilon_0 H_0
    \\&\rbleq 8 \epsilon_0 H_0
\end{split}
\end{equation}
We use $\mu_n \geq \ell \kappa + \log(2)$ to perform the sum on $k$ and to bound the constant out front, and we use $\Big|\Big|\overline{M^{(n)}_\mathbf{S}}\Big|\Big|_{\mu_n} \leq 2 ||M^{(n)}_\mathbf{S}||_{\mu_n} \leq 2 \eta_n \leq 4 \epsilon_0$ from Proposition~\ref{prop: running couplings}.

Putting together the bounds on $\overline{M^{(n)}}$ and $D^{(n)}$, we have $(K^{(n)}-H_0) \rbleq 10 \epsilon_0 H_0$.

\section{Extensions to classical LDPC code Hamiltonians}\label{sec:classical}

Here we explain how our results can be extended to classical code Hamiltonians without topological order, but instead exhibit \textit{symmetry-breaking} orders.
These models will not satisfy our \textbf{TQO} conditions as stated in Sec.~\ref{subsec:TQOPauli}.
Therefore, we will restate sufficient conditions that allow the application of the same proof strategy, leading to stability results similar to our main Theorems in Sec.~\ref{sec:theorems}.

We define the Pauli $Z$ group $\caP_\Lambda^Z \subseteq \caP_\Lambda$ as follows
\begin{equation}
    \caP_\Lambda^Z \coloneqq \{p \in \caP_\Lambda : p = \bigotimes_x p_x \text{ with } p_x \in \left\{
    \mathbb{1}, \mathsf{Z} \right\} \}.
\end{equation}
Similarly, we define the Pauli $X$ group
\begin{equation}
    \caP_\Lambda^X \coloneqq \{p \in \caP_\Lambda : p = \bigotimes_x p_x \text{ with } p_x \in \left\{
    \mathbb{1}, \mathsf{X} \right\} \}.
\end{equation}
A classical code Hamiltonian has the following form, 
\begin{equation}
    H_0 = \sum_{\alpha \in \caE} E_\alpha, \quad \text{ where } E_\alpha = (1-C_\alpha)/2,  \quad C_\alpha \in P_\Lambda^Z.
\end{equation}
The checks generate a stabilizer group, $\caG = \langle \{ C_\alpha \}_{\alpha \in \caE} \rangle$, thereby defining a stabilizer code.
Compared to general stabilizer Hamiltonians in Sec.~\ref{sec:stabilizer_Hamiltonian}, the only additional requirement is that each check is a $Z$ Pauli string.
Therefore, all results in Sec.~\ref{sec:stabilizer_Hamiltonian} apply here, in particular that $K = N - \log_2 |\caG|$, and the ground state degeneracy of $H_0$ is $2^K$.

The \textit{symmetry group} of $H_0$ is defined as follows,
\begin{equation}
    \mathbf{X} \coloneqq \mathcal{Z}(\caG) \cap \caP_\Lambda^X.
\end{equation}
Here, $\mathcal{Z}(\caG)$ denotes the \textit{centralizer} of $\caG$ within $\caP_\Lambda$.
Elements of $\mathbf{X}$ are conventionally known as the Pauli $X$ logical operators of the code $\caG$.
We have $|\mathbf{X}| = 2^K$.

We define the \textit{symmetric code distance} as follows.
\begin{equation}
    d_{\rm sym} = \min\{ | \supp(p) |: p \in \mathbf{X}, p \neq \mathbb{1} \}.
\end{equation}
This definition coincides with the classical code distance when $\caG$ is treated as a classical binary code.

We say an operator is $\mathbf{X}$-symmetric if it commutes with all elements in $\mathbf{X}$.
By definition, $H_0$ is $\mathbf{X}$-symmetric.

With these, we state conditions on $H_0$ in parallel to Sec.~\ref{subsec:TQOPauli}.
We require \textbf{LDPC} and \textbf{Growth of balls}, which are identical with those in Sec.~\ref{subsec:TQOPauli}.
We additionally require the following:

\begin{itemize}[labelsep=0em, leftmargin=5.5em, labelwidth=5em, itemindent=0em, align=parleft, itemsep = 2em]
\item[\textbf{TQO-Ic}]
For every set of qubits $A \subseteq \Lambda$ with $|A| < d_{\rm sym}$ and for every $\mathbf{X}$-symmetric $p \in \caP_\Lambda$ such that $\supp(p) \subseteq A$, we have
\begin{equation}\begin{split}
\forall g \in \caG: [p,g]=0 \qquad \Rightarrow \qquad p \in \caG.
\end{split}\end{equation}
\textbf{TQO-Ic} follows directly from the definition of $d_{\rm sym}$.

\item[{\textbf{TQO-IIc}}] 
For every connected set $S$ in the check graph $\caE$ where $|S|  < \widetilde{d}$, we have 
\begin{equation}\begin{split}
\caG(S) \subseteq \caG_{B_{\ell|S|}(S)}
\end{split}\end{equation}
where $\ell = \Theta(1)$, and
$B_{\ell|S|}(S)=\{\alpha : \mathrm{dist}(\alpha,S)\leq \ell |S|\}$.
\end{itemize}
We define $d_{*, \rm sym} = \min(d_{\rm sym}/w_c, \widetilde{d})$. Theorems \ref{thm: main_classical} and \ref{thm: locality_classical} are meaningful i.e. they give a stable gap and ground state degeneracy if $d_{*, \rm sym} \geq c \log(N)$ for $c>0$. 

We are interested in the stability of the $\mathbf{X}$-symmetric $H_0$ under an $\mathbf{X}$-symmetric perturbation $Z^{(0)}$.
Viewing $Z^{(0)}$ as an operator-collection, we write
\begin{equation}
    Z^{(0)} = \sum_{\bfs} (Z^{(0)})_\bfs.
\end{equation}
By symmetry, we may choose an operator-collection such that each $(Z^{(0)})_\bfs$ is $\mathbf{X}$-symmetric.
Similarly, we can choose an operator-collection such that each $(H_0)_\bfs$ is $\mathbf{X}$-symmetric.
These choices can be made following Sec.~\ref{sec:initial_choice_operator_collection}.
We additionally require 
\begin{equation}
    ||Z^{(0)}||_{\mu_0} < \epsilon_0,
\end{equation}
where we recall that the word norm $||\ldots||_{\mu_0}$ is defined in \eqref{eq:def_mu_norm}.

We can apply the iterative KAM procedure from Sec.~\ref{sec:KAM_scheme} to $H_0 + Z^{(0)}$.
By induction, each of the operator-collections $A^{(n)}, D^{(n)}, M^{(n)}, V^{(n)}, E^{(n)}$ and $Z^{(n)}$ is $\mathbf{X}$-symmetric at all scales $n$.
We note that \textbf{TQO-I} and \textbf{TQO-II} are invoked in Lemma~\ref{lem: bound on ghosts} and in Sec.~\ref{sec:proof_rel_bounded} for characterizing $M_\bfs^{(n)}$, and they can be traded for \textbf{TQO-Ic} and \textbf{TQO-IIc} for $\mathbf{X}$-symmetric $M^{(n)}_\bfs$.
The same Propositions and Lemmas can therefore be used to control their word norms under the flow equations.

With these, we can state our results for classical LDPC code Hamiltonians.
They are in close parallel to Theorems~\ref{thm: main} and \ref{thm: locality}.

\begin{theorem}\label{thm: main_classical}
Assume that $\mu_0 > \mu_{*}$.
Then, there is an $\epsilon(\mu_0)$ that can depend on $\mu_0$ and our $\Theta(1)$-model parameters, such that, for any $\epsilon_0 \leq \epsilon(\mu_0)$, there is a $\mu_{\infty}>\mu_*$ such that the following holds:  There is a number $b$ such that the spectrum of the operator 
\begin{equation}
H=H_0+Z^{(0)} -b
\end{equation}
is contained in a union of intervals centered on $k=0,1,2,\ldots$, where the $k$-th interval is
\begin{equation}
    I_k =[k-(kC' \epsilon_0 + \delta), k+(kC' \epsilon_0 + \delta)], \quad \delta = C{N_c} \epsilon_0 e^{-\mu_\infty d_{*, \rm sym}}.
\end{equation}

\end{theorem}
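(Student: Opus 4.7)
The plan is to establish Theorem~\ref{thm: main_classical} by running the iterative KAM scheme of Section~\ref{sec:KAM_scheme} verbatim, with the single modification that every operator-collection generated during the flow is kept $\mathbf{X}$-symmetric, and then invoking \textbf{TQO-Ic}/\textbf{TQO-IIc} in place of \textbf{TQO-I}/\textbf{TQO-II} wherever the original proof of Theorem~\ref{thm: main} uses them. Concretely, I would start from the initial operator-collection decompositions of $H_0$ and $Z^{(0)}$ described in Section~\ref{sec:initial_choice_operator_collection}, and observe that since $\mathbf{X}$ commutes with both $H_0$ and $Z^{(0)}$, each collection element $(H_0)_{\bfs}$ and $(Z^{(0)})_{\bfs}$ can be chosen $\mathbf{X}$-symmetric. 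The splitting into $D^{(n)}, M^{(n)}, V^{(n)}, E^{(n)}$ described in Section~\ref{sec: splitting of ham} is purely word-theoretic, so each piece inherits $\mathbf{X}$-symmetry. The generator $A^{(n)}$ constructed in Section~\ref{sec:construction_of_A} is built from sums and products of $V^{(n)}, D^{(n)}, M^{(n)}$ together with scalar coefficients $q(\bfs',\bfs)$, so it too is $\mathbf{X}$-symmetric; consequently, conjugation by $e^{-iA^{(n)}}$ preserves $\mathbf{X}$-symmetry of all subsequent $Z^{(n+1)}$.

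The main point requiring care is Lemma~\ref{lem: bound on ghosts}, which is the only place the quantum \textbf{TQO-I}/\textbf{TQO-II} are used in constructing the rotation. In the classical setting, $M^{(n)}_{\bfs}$ is $\mathbf{X}$-symmetric by induction and supported on a connected word $\bfs$ of size $|S|<d_{*,\mathrm{sym}}$; expanding $M^{(n)}_{\bfs}$ into Pauli strings $p$ with $\supp_c(p)\subseteq S$ and using $[M^{(n)}_{\bfs},C_\alpha]=0$ for all $\alpha$, each surviving $p$ commutes with the whole stabilizer group \emph{and} is $\mathbf{X}$-symmetric, so \textbf{TQO-Ic} forces $p\in\caG$ (not just in the centralizer $\mathcal{Z}(\caG)$); then \textbf{TQO-IIc} lets us rewrite $p$ as a product of checks drawn from $B_{\ell|S|}(S)$. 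From here the bound \eqref{eq: bound total ghost term} follows by the same counting estimate. The relative form-boundedness argument in Section~\ref{sec:proof_rel_bounded} uses the same ingredients, with \textbf{TQO-Ic}/\textbf{TQO-IIc} controlling the ghost piece $M^{(n)}_{\bfs}$, and the resulting bound $K^{(n)}-H_0\leq 10\epsilon_0 H_0$ goes through unchanged.

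With these substitutions in place, the flow inequalities of Section~\ref{sec:running_couplings}, Proposition~\ref{prop: running couplings}, the locality bound Proposition~\ref{lem: locality of a}, the commutator estimate Proposition~\ref{lem: commutator}, and the error bound of Section~\ref{sec:bound_on_error} all apply word-for-word, yielding after $n_*$ steps (with $n_*$ the smallest integer such that $\epsilon_{n_*}\leq C'\epsilon_0 e^{-\mu_\infty d_{*,\mathrm{sym}}}$)
\begin{equation}
H^{(n_*)} = K^{(n_*)} + (V^{(n_*)}+E^{(n_*)}) + \langle M^{(n_*)}\rangle, \qquad \|V^{(n_*)}+E^{(n_*)}\|\leq \delta.
\end{equation}
Combining relative form boundedness (giving $\mathrm{spec}(K^{(n_*)})\subset\bigcup_m[(1-C_0\epsilon_0)m,(1+C_0\epsilon_0)m]$ via the same argument as Section~\ref{sec: relative boundedness}) with standard norm-perturbation theory for the $\delta$-bounded remainder, exactly as in Section~\ref{sec:effects_of_error_on_spectrum}, and setting $b=\langle M^{(n_*)}\rangle$, yields the stated inclusion in $\bigcup_{k\geq 0} I_k$.

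The main obstacle is purely bookkeeping: verifying that \textbf{X}-symmetry is genuinely preserved through the KAM rotations and that the ghost classification $M^{(n)}_{\bfs}\in\caA_\caG$ is strong enough even though \textbf{TQO-Ic} only constrains $\mathbf{X}$-symmetric Pauli strings rather than arbitrary commuting ones. This works because $M^{(n)}_{\bfs}$ is itself a sum of $\mathbf{X}$-symmetric Pauli strings whose individual elements commute with $\caG$ (since $\caG\subset\mathcal{Z}(\mathbf{X})$ and the Pauli basis expansion of an $\mathbf{X}$-symmetric operator consists of $\mathbf{X}$-symmetric Paulis), so the classical hypothesis suffices. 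Everything else is a direct transcription.
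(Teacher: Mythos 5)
Your proposal is correct and follows essentially the same route as the paper: the paper likewise proves Theorem~\ref{thm: main_classical} by rerunning the KAM scheme, noting by induction that all operator-collections remain $\mathbf{X}$-symmetric, and trading \textbf{TQO-I}/\textbf{TQO-II} for \textbf{TQO-Ic}/\textbf{TQO-IIc} precisely in Lemma~\ref{lem: bound on ghosts} and Section~\ref{sec:proof_rel_bounded}. Your observation that the Pauli expansion of an $\mathbf{X}$-symmetric operator consists only of $\mathbf{X}$-symmetric Pauli strings (so that \textbf{TQO-Ic} indeed applies termwise to $M^{(n)}_{\bfs}$) is exactly the point the paper relies on.
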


\begin{theorem}\label{thm: locality_classical}
Under the same assumptions as in Theorem \ref{thm: main_classical} above, we have
\begin{equation}
\widetilde P = U P U^{-1}
\end{equation}
for a unitary $U$ that is locality-preserving and locally close to identity, in the following sense: For any operator $O$ supported in a connected set $Y$, we can write
\begin{equation}
U^{-1}OU= O+\sum_{r=1}^{\mathrm{diam}(\Lambda)} O_r +O_{\mathrm{bg}}
\end{equation}
where $\mathrm{diam}(\Lambda)$ is the diameter of the graph $\Lambda$, and
\begin{enumerate}
    \item  $O_r$ is supported in $\{x: \mathrm{dist}(Y,x) \leq r\}$.
    \item  $O_r$ satisfies \begin{equation} ||O_r|| \leq \epsilon_0 C(\mu_0,Y) e^{-cr}||O||\end{equation}
    for some $c>0$ and constant $C(\mu_0,Y)$ that can depend on $\mu_0$ and $Y$.
    \item  $O_{\mathrm{bg}}$ (with $\mathrm{bg}$ standing for `background') is bounded as 
    $|| O_{\mathrm{bg}} || \leq     C{N_c} \epsilon_0 e^{-\mu_\infty d_{*, \rm sym} } ||O||$.
\end{enumerate}
The same statement holds true for $UOU^{-1}$. 
\end{theorem}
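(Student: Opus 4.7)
The plan is to mirror the proof of Theorem~\ref{thm: locality} step by step, exploiting the observation made in Sec.~\ref{sec:classical} that the classical KAM iteration produces generators $A^{(n)}$ and operator-collections $D^{(n)}, M^{(n)}, V^{(n)}, E^{(n)}$ that are manifestly $\mathbf{X}$-symmetric at every scale, and that satisfy exactly the same word-norm bounds as in the stabilizer case once \textbf{TQO-I}/\textbf{TQO-II} are replaced by \textbf{TQO-Ic}/\textbf{TQO-IIc}. Concretely, Lemma~\ref{lem: bound on ghosts} invokes the TQO conditions only to conclude that $\mathbf{X}$-symmetric Pauli strings of support smaller than the code distance lie in $\caG$, and then that such group elements have a representative supported in $B_{\ell|S|}(S)$; both statements hold in the classical setting with $d_{\rm sym}$ in place of $d$. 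As a result, Propositions~\ref{lem: locality of a}, \ref{lem: commutator} and \ref{prop: running couplings} transfer verbatim, and the error bound becomes $\|E^{(n_\ast)}\| \leq C N_c \epsilon_0 e^{-\mu_\infty d_{*,\rm sym}}$.

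Granted these bounds, I would follow Sec.~\ref{sec:pert_spec_proj} to construct the unitary: after $n_\ast$ KAM steps, standard Kato perturbation theory yields a Hermitian $A^{(\infty)}$ with $\|A^{(\infty)}\| \leq C N_c \epsilon_0 e^{-\mu_\infty d_{*,\rm sym}}$ such that the spectral projector $P'$ of $H^{(n_\ast)}$ onto its lowest band satisfies $P' = \exp(-i\,\mathrm{ad}_{A^{(\infty)}})(P)$. Setting
\begin{equation}
U = e^{-iA^{(0)}}\cdots e^{-iA^{(n_\ast)}}\, e^{-iA^{(\infty)}},
\end{equation}
we obtain $\widetilde P = U P U^{-1}$, precisely as in \eqref{eq: final spectral projector}.

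To analyze $U^{-1} O U$, I would split, as in \eqref{eq: splitting transfos},
\begin{equation}
\caU^{(\infty)}\caU^{(n_\ast)}\cdots\caU^{(0)}(O) - O \;=\; \bigl(\caU^{(\infty)}(O')-O'\bigr) + \bigl(\caU^{(n_\ast)}\cdots\caU^{(0)}(O)-O\bigr),
\end{equation}
with $O' = \caU^{(n_\ast)}\cdots\caU^{(0)}(O)$. The first bracket is declared to be the background term $O_{\mathrm{bg}}$ and is bounded in operator norm by $C\|A^{(\infty)}\|\,\|O\|$, which yields exactly the advertised bound on $\|O_{\mathrm{bg}}\|$. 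For the second bracket, I would reuse the two-step argument of Sec.~\ref{sec:pert_spec_proj}: first prove in word-norm that $\|\caU^{(n_\ast)}\cdots\caU^{(0)}(B) - B\|_{\mu_{n_\ast}} \leq C\epsilon_0 \|B\|_{\mu_0}$ for any operator-collection $B$, by telescoping and applying \eqref{eq: effect of exp first}--\eqref{eq: effect of exp zeroth} together with $\|A^{(n)}\|_{\mu_n}\leq C\epsilon_n$; then encode the local operator $O$ on the connected set $Y$ into an operator-collection with $\|O\|_{\mu_0} \leq \|O\|\,e^{(w_q\mu_0+\log 2)|Y|}$, following the recipe of Sec.~\ref{sec:initial_choice_operator_collection}.

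The final, and main technical, step is to translate exponential decay of $\caU^{(n_\ast)}\cdots\caU^{(0)}(O)-O$ in the word-based norm $\|\cdot\|_{\mu_{n_\ast}}$ into exponential decay of qubit-graph support around $Y$. By the explicit form of the commutators in \eqref{def: commutator}, every word $\bfs$ contributing to the difference must have $S$ intersect the check-support of $O$, so it is anchored near $Y$; a word with $|S|=r$ has qubit-support contained in a ball of radius $O(r)$ around $Y$ (using \textbf{LDPC} and \textbf{Growth of balls}), and the weight $e^{-\mu_{n_\ast} r}$ provides the required decay, producing the $O_r$ decomposition with $c = \Theta(\mu_{n_\ast}-\mu_*)$ and a $|Y|$-dependent prefactor $C(\mu_0,Y)$ from the initial encoding. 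The $UOU^{-1}$ direction is treated identically by symmetry of the construction. The only subtlety particular to the classical setting is ensuring $\mathbf{X}$-symmetry is preserved throughout, which is automatic since commutators and exponentials of $\mathbf{X}$-symmetric operators remain $\mathbf{X}$-symmetric; so no new ideas are needed beyond those already in play for Theorem~\ref{thm: locality}.
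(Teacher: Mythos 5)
Your proposal is correct and follows essentially the same route as the paper: the paper's argument for Theorem~\ref{thm: locality_classical} is precisely to note that the KAM iteration preserves $\mathbf{X}$-symmetry at every scale, that \textbf{TQO-I}/\textbf{TQO-II} enter only through Lemma~\ref{lem: bound on ghosts} and the relative-boundedness argument where they can be traded for \textbf{TQO-Ic}/\textbf{TQO-IIc} applied to the $\mathbf{X}$-symmetric $M^{(n)}_\bfs$, and then to repeat the three-step construction of Sec.~\ref{sec:pert_spec_proj} verbatim with $d_{*,\rm sym}$ in place of $\dstar$. Your identification of the splitting into $O_{\mathrm{bg}}$ plus the word-norm-controlled part, and the translation from word decay to qubit-graph decay, matches the paper's proof of Theorem~\ref{thm: locality}.
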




\section{Discussion of some physical consequences}

We have proven that classical and quantum LDPC stabilizer codes define robust gapped phases of matter, in the sense that their ground state degeneracy and gap are both robust to sufficiently small local perturbations. In this section we comment on some further physical consequences of our proof, discussing what physical properties of the phase remain robust in the vicinity of the unperturbed LDPC Hamiltonian. We will keep the presentation more qualitative in this section. 

\subsection{Order parameters and emergent symmetries}

Let us start by considering classical LDPC codes. These, when embedded into quantum Hamiltonians, are generalizations of spontaneous symmetry breaking phases; the logical operators of the code constitute a set of symmetries, which are broken in the ground states. These symmetry breaking patterns can be characterized by a set of local order parameters. In particular, for a classical code with $K$ logical bits, we can find a (non-unique) set of sites $\{x_i\}_{i=1}^{K}$ such that the eigenvalues $\mathsf{Z}_{x_i} = \pm 1$ of the corresponding local Pauli matrices can be used to label a full basis set for the $2^K$ degenerate ground states. If we think of classical codes as a special case of quantum codes, then we can think of these $\mathsf{Z}_{x_i}$ as a choice for the $K$ logical $Z$ operators of the code, conjugate to the $K$ logical $X$ operators that correspond to the classical logicals / symmetries. This means that there is a basis choice for the logicals $\{\mathbf{X}_i\}_{i=1}^{K}$ such that $\mathsf{Z}_{x_i}$ anti-commutes with $\mathbf{X}_i$ and commutes with $\mathbf{X}_{j}$ for $j \neq i$. 

Now consider the perturbed Hamiltonian $H_0 + V$, where $V$ are perturbations commuting with all the symmetries (classical logicals). By Theorem~\ref{thm: locality}, we can relate the ground state subspaces as $P = U P_0 U^\dagger$. We can use this to define a new, ``smeared out'' order parameter $\tilde{\mathsf{Z}}_{x_i} = U \mathsf{Z}_{x_i} U^\dagger$ for every $i=1,\ldots,K$. By properties of $U$, these are quasi-local operators, localized to an arbitrary precision $\epsilon$ within the ball $B_r(x_i)$ around $x_i$ with radius $r=O(\log(1/\epsilon))$, independent of $n$. By construction, these have the same expectation values in the new ground states as $\mathsf{Z}_{x_i}$ in the originals:
\begin{equation}
    \langle \tilde{\psi} | \tilde{\mathsf{Z}}_{x_i} |\tilde{\psi} \rangle = \langle \tilde{\psi} | U \mathsf{Z}_{x_i} U^\dagger |\tilde{\psi}\rangle = \langle \psi | \mathsf{Z}_{x_i} |\psi \rangle,
\end{equation}
where $\ket{\tilde{\psi}}$ is a ground state of $H_0+V$ and $\ket{\psi}$ is a ground state of $H_0$. Thus, the $2^K$ symmetry breaking product state ground states of $H_0$ labeled by the eigenvalues of $\mathsf{Z}_{x_i}$ map onto a set of $2^K$ ground states labeled by eigenvalues of $\tilde{\mathsf{Z}}_{x_i}$. 
Moreover, as $U$ is $\mathbf{X}$-symmetric by construction, the smeared out operators have the same symmetry properties as the old ones. Thus $\tilde{\mathsf{Z}}_{x_i}$ serves as a local order parameter diagnosing the spontaneous breaking of the symmetry $\mathbf{X}_i$. 

In fact, we expect that one should also be able to measure the same symmetry breaking using the original, un-smeared order parameters, which is more practically relevant. By the analyticity and hence continuity of the ground state projector $\tilde{P}$, the expectation values of local observables are continuous functions of the local perturbation strength. This implies that for any constant $c < 1$ we can make the perturbation strength sufficiently small (but finite) such that $|\langle\tilde{\psi}|\mathsf{Z}_{x_i}|\tilde{\psi}\rangle| > c$ in the above ground states, which is a more practical way of diagnosing the spontaneous breaking of $\mathbf{X}_i$, without the need to measure the perturbation-dependent smeared order parameters $\tilde{\mathsf{Z}}_{x_i}$. 

An advantage of the former approach, using smeared operators, is that it generalizes better to quantum codes. In that case, there are no local order parameters, but we can still consider the set of all logical operators, $\mathbf{X}_i$ and $\mathbf{Z}_i$, $i=1,\ldots,K$, that come in conjugate anti-commuting pairs. Now there is no symmetry requirement on the perturbations so both species of logicals get smeared out by $U$, to $\tilde{\mathbf{X}}_i = U \mathbf{X}_i U^\dagger$ and $\tilde{\mathbf{Z}}_i = U \mathbf{Z}_i U^\dagger$, but their commutation relations are preserved. These constitute a set of \emph{emergent symmetries} characterizing the phase of matter. These symmetries are also deformable: we can multiply them with the operators $U C_\alpha U^\dagger$ (with which they commute) to find other operators that have the same action on the ground state subspace. By the cleaning lemma~\cite{bravyi2009no}, for any region $Y$ with $|Y| < d$, all logicals can be deformed in such a way as to avoid $Y$; for example, we can do this for any ball $B_{r}(x)$ with $r < \log{d}/\kappa$. Since the smearing out is only by an $O(1)$ factor, one can then also clean out the new logicals, $\tilde{\mathbf{X}}_i, \tilde{\mathbf{Z}}_i$ from any ball with radius $r=o(\log{d})$. In this sense, the emergent symmetries of perturbed qLDPC codes have the characteristics of higher-form symmetries~\cite{gaiotto2015generalized,mcgreevy2023generalized}.

Similarly, if one of the original logical is localized to some sub-region $Y$ of size $o(n)$ of the lattice $\Lambda$, then its smeared out version can be approximated, with exponential precision, by an operator supported on a region whose size is proportional to $|Y|$. One example where this is relevant are hypergraph product codes~\cite{tillich2013quantum}: these come equipped with an abstract ``square grid'' structure where qubits are labeled by a pair of labels $(x,y)$. The code distance is $d = \Theta(\sqrt{n})$ and one can choose a canonical basis set of logical operators that are supported on ``rows'' and ``columns'', i.e. sets of qubits where either the first or the second label is held constant (see Ref.~\cite{rakovszky2024product} for details), analogous to the familiar logical operators of the toric code. Our results imply that these canonical logicals maintain their ``one-dimensional'' nature and only get smeared out to an $O(1)$ number of neighboring rows / columns. 



\subsection{Excitations}

So far we focused on ground state properties. Another way to characterize the phase is in terms of excitations. We will now argue that the local excitations defined by the unperturbed Hamiltonian remain well-defined for small enough perturbations, building on the ideas of BHM~\cite{Bravyi_2010}. 

For simplicity, let us consider a stabilizer code whose checks are all independent. In that case, for any check $C_\alpha$, we can find an appropriate excitation operator $\mathcal{D}_\alpha$ which anti-commutes with $C_\alpha$ and commutes with all other checks. For classical codes one can think of $\mathcal{D}_\alpha$ as a set of spin flips creating the analog of a domain-wall, while for quantum codes it is the analog of an operator creating an anyon. They create localized excitations that form the lowest band, with energy $1$ in $H_0$.  One can also argue that in many cases the corresponding operator has to be non-local, with support that is comparable to the code distance~\cite{rakovszky2023gauge,hong2024quantum}. 

For the perturbed Hamiltonian, we can again use $U$ from Thm.~\ref{thm: locality} to define smeared out excitation operators $\tilde{\mathcal{D}}_\alpha = U \mathcal{D}_\alpha U^\dagger $. Then applied to a ground state $\ket{\tilde{\psi}}$, these create excitations localized near the support of check $C_\alpha$. To see this, consider measuring some local operator $O$, supported on a ball $B_r(x)$, which one can think of as measuring the local energy density near $x$. We take the ball to be far from the location of $C_\alpha$, i.e. $r \ll R$ where $R \equiv \text{dist}(x,\text{supp}(C_\alpha))$. We want to argue that if the distance $R$ is large, then $O$ cannot distinguish $\tilde{\mathcal{D}}_{\alpha}\ket{\tilde{\psi}}$ from $\ket{\tilde{\psi}}$. In particular, 
\begin{equation}\label{eq:LocalExcitation}
\langle \tilde{\psi} | \tilde{\mathcal{D}}_\alpha O \tilde{\mathcal{D}}_\alpha |\tilde{\psi} \rangle = \langle \tilde{\psi} | O | \tilde{\psi} \rangle + h(R),    
\end{equation}
where $\lim_{R\to\infty}h(R) = 0$. 

To show this, we note that $\langle \tilde{\psi} | O | \tilde{\psi} \rangle = \langle \psi | \tilde{O} | \psi \rangle$ and $\langle \tilde{\psi} | \tilde{\mathcal{D}}_\alpha O \tilde{\mathcal{D}}_\alpha |\tilde{\psi} \rangle = \langle \psi | \mathcal{D}_\alpha \tilde{O} \mathcal{D}_\alpha |\psi \rangle$, where $\ket{\psi}$ is a ground state of $H_0$ and $\tilde{O} = U^\dagger O U$. Up to precision $e^{-O(c \tilde{r})}$ we can approximate $\tilde{O}$ by an operator supported on the ball $B_{\tilde{r}}(x)$. It is thus enough to argue that the presence of the excitation at the check $C_\alpha$ cannot be picked up by any such operator, as long as $\tilde{r}$ is not too large. This follows from the condition \textbf{TQO-II}\footnote{The argument is more straightforward for classical codes and it is worth spelling out. In that case, the probe operator $O$, standing in for local energy density, should be chosen to be symmetric under all the classical logicals. All such operators are generated by the checks $C_\alpha$ together with single site pauli $X$ operators. $\mathcal{D}_\alpha$ commutes with all elements of this generating set with the exception of $C_\alpha$. Thus, the only question is whether $O$ contains this check or not. Then, by \textbf{TQO-IIc}, the answer is no if $O$ is supported sufficiently far away from $C_\alpha$.}. The reduced density matrix of $\ket{\psi}$ on $B_{\tilde{r}}(x)$ involves elements of the local stabilizer group $\mathcal{G}(B_{\tilde{r}}(x))$. The same is true of $\mathcal{D}_\alpha \ket{\psi}$, and the two only differ in the sign of terms involving $C_\alpha$. However, by \textbf{TQO-II} and Eq.~\eqref{eq:def_kappa}, $\mathcal{G}(B_{\tilde{r}}(x)) \subseteq \mathcal{G}_{\tilde{r} + \ell e^{\kappa \tilde{r}}}$. As long as this larger ball does not contain $C_\alpha$, the two reduced density matrices are the same. We can thus choose $\tilde{r} \propto \log(R/\ell) / \kappa$, which makes $h(R)$ in Eq.~\eqref{eq:LocalExcitation} decay as $h(R) \propto R^{-c/\kappa}$,
where $c$ is the constant controlling the locality of $U$ in Thm.~\ref{thm: locality}. We note that the slow decay is due to the very weak TQO condition being used; if the radius appearing in \textbf{TQO-II} could be strengthened from $\ell |S|$ to $\ell \log{|S|}$, which we expect in many cases of interest, then we could choose $\tilde{r} = O(R)$, resulting in $h(R) \propto e^{ -\frac{c}{\kappa \ell} R}$.

Since $\tilde{\mathcal{D}}_\alpha\ket{\tilde{\psi}}$ only contains excitations within some finite neighborhood of $\text{supp}(C_\alpha)$, it should be spanned, up to good approximation, by low-lying eigenstates of $H_0 + V$. In BHM, a stronger statement is made, using the fact that in their case, there exists a quasi-local unitary $U$ that relates not only the ground state subspaces, but also low-lying excited states, between perturbed and un-perturbed Hamiltonians. In paricular, let $Q$ be the projector onto eigenstates with energy $1$ in $H_0$ and $\tilde{Q}$ be the projector onto the corresponding to eigenstates with energies in the interal $I_1$ defined in Thm.~\ref{thm: main}, which remains well-separated from all the other intervals for sufficiently weak perturbations. Now assume that there exists $U$ such that both $\tilde{P} = U P U^\dagger$ and $\tilde{Q} = U Q U^\dagger$, with the same locality properties as in Thm.~\ref{thm: locality}. We then have, following BHM, that
\begin{equation}
\tilde{Q} \tilde{\mathcal{D}}_\alpha \ket{\tilde{\psi}} = U Q \mathcal{D}_\alpha \ket{\psi} = U \mathcal{D}_\alpha \ket{\psi} = \tilde{\mathcal{D}}_\alpha \ket{\tilde{\psi}}.    
\end{equation}
This would mean that the excited state created by $\tilde{\mathcal{D}}_\alpha$ lies entirely within the lowest band of $H_0 + V$. We can thus think of this band as being spanned by the localized excitations created by $\tilde{\mathcal{D}}_\alpha$ that are adiabatically connected to the ones of $H_0$. Whether an appropriate choice of $U$ satisfying this condition exists is an interesting question for future work. 

Finally, we note that while here we considered the case when we initially excite a single check, the above considerations should also generalize to other low-lying excitations. For example, instead of $\mathcal{D}_\alpha$, we can consider an operator that excites a $\Theta(1)$ number of checks, all spatially far separated from each other. We can then smear out this operator, to get a new operator that creates a set of far-separated localized excitations in the perturbed Hamiltonian. This is relevant for e.g. the case when the code exhibits some redundancies, preventing one from creating a single excitation. One would then want to find $U$ that also rotates the projector of the appropriate set of excitations between unperturbed and perturbed Hamiltonians.  

\section{Summary and outlook}

In this paper, we generalized the stability proof of Ref.~\cite{Bravyi_2010} to perturbations of a large set of quantum LDPC stabilizer codes, without the restriction of Euclidean locality, including constructions of good qLDPC codes and other examples living on expander graph geometries. We proved the robustness of ground state degeneracy (up to a splitting exponentially suppressed in the code distance), the stability of the spectral gap, and the existence of a unitary relating perturbed and unperturbed ground states that is quasi-local with respect to the interaction distance of the unperturbed Hamiltonian. These results provide a rigorous starting point for defining and studying phases of matter on generic graphs, away from the limit of Euclidean lattices, which is relevant in light of the ongoing experimental efforts that are currently pushing quantum many-body physics into this new territory~\cite{Kollar2019,Periwal2021,Bluvstein2022,LukinLDPC}.

Our results raise a number of questions, some of which we have already flagged above. Some of these involve the physical consequences stemming from the quasi-adiabatic continuation between ground states. In general, some physical consequences might require strengthening the form of the \textbf{TQO-II} condition, which we expect should be possible in many cases of interest. As we also noted, BHM showed a stronger quasi-adiabatic continuation result~\cite{Bravyi_2010}, where the unitary $U$ relates not only ground states, but also low-lying excited states, which was needed to prove statements about the form of low-lying excitations in the perturbed model. Is there a similar adiabatic continuation of excitations in our case? 

Another set of questions involves the set of allowed perturbations. Can the locality conditions on $V$ be relaxed while maintaining stability, e.g. to generic $k$-local perturbations without reference to any graph structure? Our TQO-II condition is easy to satisfy and asks that connected stabilizers can be constructed out of stabilizer generators that are not too far away. If we instead ask for a stricter condition (i.e. that TQO-II must also apply also to disconnected stabilizers), our proof shows stability to $k$-local perturbations. The main change to the proof is in the construction of the initial operator collections from the perturbations. In particular, we can construct the initial operator collections via sandwiching the qubit support (rather than a minimal connected set covering the qubit support) with projectors. These words can be disconnected (as can the resulting words generated over the course of the KAM steps); we had previously avoided this, as we used connectivity of all words to apply TQO-II in the proof of relative form boundedness of $M^{(n)}$ in Section \ref{sec:proof_rel_bounded} and in bounding commutators of ghosts with other operators in Lemma~\ref{lem: bound on ghosts}. However, disconnected words are fine under the stricter TQO-II, and the corresponding word norms of long-range, few-body perturbations are much smaller than if we asked for connected words. As an example, under this looser construction of the intial operator collections, a two-body nonlocal operator separated by a distance $r$ has word norm weighted by $e^2$ rather than $e^r$. This broadens the class of perturbations that have bounded word norm in the thermodynamic limit to include cases like $\frac{1}{L}\sum_{ij} Z_i Z_j$. Finally, we had stated some of our results in terms of a Pauli norm for simplicity in calculation; the Pauli norm can now be chosen to be exponentially weighted by qubit support, rather than exponentially weighted in the minimal connected covering. The main change to the results is that there is no longer graph locality of the unitary $U$ that maps the perturbed spectral projector back to the unperturbed one; it is only locality-preserving in the sense of word norms. In all, the stricter TQO-II condition gives stability to $k$-local perturbations rather than just graph local ones; which models satisfy this stricter condition?

Finally, it would be very intriguing to generalize our results from stabilizer codes to more general families of Hamiltonians, either to all commuting projector models, as was done by BHM, or even to the more general setting of frustration-free Hamiltonians as in Ref.~\cite{Michalakis_2013}. Other avenues of generalization would involve considering the stability of finite temperature phases~\cite{hastings2007quantum,capel2023decay} in the non-Euclidean setting. 


\section*{}
\emph{Note Added---} In work that appeared in the same arXiv posting, other authors have independently investigated the stability of LDPC codes~\cite{yinlucas_stability_2024}. 

\section{Acknowledgements}
We are grateful for helpful discussions with Israel Klich, Christopher Laumann, Ali Lavasani and Sagar Vijay. V.K. and T.R. especially thank Benedikt Placke and Nikolas Breuckmann for many insightful discussions and close collaboration on related work. 


WDR was supported by the FWO-FNRS EOS research project G0H1122N EOS 40007526 CHEQS, the KULeuven Runner-up grant iBOF DOA/20/011, and the internal KULeuven grant C14/21/086.
V.K. acknowledges support from the Packard Foundation through a Packard Fellowship in Science and Engineering and the Office of Naval Research Young Investigator Program (ONR YIP) under Award Number N00014-24-1-2098 (support for N.O.D. and V.K.). 
Y.L. is supported in part by the Gordon and Betty Moore Foundation’s EPiQS Initiative through Grant GBMF8686, and in part by the Stanford Q-FARM Bloch Postdoctoral Fellowship in Quantum Science and Engineering. T.R. was supported in part by Stanford Q-FARM Bloch Postdoctoral Fellowship in Quantum Science and Engineering and by the HUN-REN Welcome Home and Foreign Researcher Recruitment Programme 2023.

\printbibliography

@misc{yinlucas_stability_2024,
      title={Low-density parity-check codes as stable phases of quantum matter}, 
      author={Chao Yin and Andrew Lucas},
      year={2024},
      eprint={2411.01002},
      archivePrefix={arXiv},
      primaryClass={quant-ph},
      url={https://arxiv.org/abs/2411.01002}, 
}

@article{kolmogorov1954,
  author    = {A. N. Kolmogorov},
  title     = {On the Conservation of Conditionally Periodic Motions for a Small Change in Hamilton's Function},
  journal   = {Doklady Akademii Nauk SSSR},
  year      = {1954},
  volume    = {98},
  pages     = {527--530},
  language  = {Russian},
  note      = {English translation available in: Los Alamos Scientific Laboratory Translation LA-TR-71-67}
}

@article{arnold1963,
  author    = {V. I. Arnold},
  title     = {Small Denominators and Problems of Stability of Motion in Classical and Celestial Mechanics},
  journal   = {Uspekhi Matematicheskikh Nauk},
  year      = {1963},
  volume    = {18},
  number    = {6},
  pages     = {91--192},
  language  = {Russian},
  note      = {English translation in: Russian Mathematical Surveys, 1964, 18(6): 85--191}
}

@article{moser1962,
  author    = {J. Moser},
  title     = {On Invariant Curves of Area-Preserving Mappings of an Annulus},
  journal   = {Nachr. Akad. Wiss. Gottingen Math.-Phys. Kl. II},
  year      = {1962},
  volume    = {1962},
  pages     = {1--20},
  note      = {Reprinted in: Moser, J. (1973). Stable and Random Motions in Dynamical Systems. Princeton University Press}
}

@article{kitaev2003fault,
  title={Fault-tolerant quantum computation by anyons},
  author={Kitaev, A Yu},
  journal={Annals of physics},
  volume={303},
  number={1},
  pages={2--30},
  year={2003},
  publisher={Elsevier}
}

@article{bravyi2010topological,
  title={Topological quantum order: stability under local perturbations},
  author={Bravyi, Sergey and Hastings, Matthew B and Michalakis, Spyridon},
  journal={Journal of mathematical physics},
  volume={51},
  number={9},
  year={2010},
  publisher={AIP Publishing}
}

@article{bravyi2011short,
  title={A short proof of stability of topological order under local perturbations},
  author={Bravyi, Sergey and Hastings, Matthew B},
  journal={Communications in mathematical physics},
  volume={307},
  pages={609--627},
  year={2011},
  publisher={Springer}
}

@article{chen2010local,
  title={Local unitary transformation, long-range quantum entanglement, wave function renormalization, and topological order},
  author={Chen, Xie and Gu, Zheng-Cheng and Wen, Xiao-Gang},
  journal={Physical review b},
  volume={82},
  number={15},
  pages={155138},
  year={2010},
  publisher={APS}
}

@article{haah2011local,
  title={Local stabilizer codes in three dimensions without string logical operators},
  author={Haah, Jeongwan},
  journal={Physical Review A},
  volume={83},
  number={4},
  pages={042330},
  year={2011},
  publisher={APS}
}

@article{sipser1996expander,
  title={Expander codes},
  author={Sipser, Michael and Spielman, Daniel A},
  journal={IEEE transactions on Information Theory},
  volume={42},
  number={6},
  pages={1710--1722},
  year={1996},
  publisher={IEEE}
}

@article{panteleev2021degenerate,
  title={Degenerate quantum LDPC codes with good finite length performance},
  author={Panteleev, Pavel and Kalachev, Gleb},
  journal={Quantum},
  volume={5},
  pages={585},
  year={2021},
  publisher={Verein zur F{\"o}rderung des Open Access Publizierens in den Quantenwissenschaften}
}

@article{panteleev2021quantum,
  title={Quantum LDPC codes with almost linear minimum distance},
  author={Panteleev, Pavel and Kalachev, Gleb},
  journal={IEEE Transactions on Information Theory},
  volume={68},
  number={1},
  pages={213--229},
  year={2021},
  publisher={IEEE}
}

@inproceedings{panteleev2022asymptotically,
  title={Asymptotically good quantum and locally testable classical LDPC codes},
  author={Panteleev, Pavel and Kalachev, Gleb},
  booktitle={Proceedings of the 54th Annual ACM SIGACT Symposium on Theory of Computing},
  pages={375--388},
  year={2022}
}

@article{breuckmann2021balanced,
  title={Balanced product quantum codes},
  author={Breuckmann, Nikolas P and Eberhardt, Jens N},
  journal={IEEE Transactions on Information Theory},
  volume={67},
  number={10},
  pages={6653--6674},
  year={2021},
  publisher={IEEE}
}

@inproceedings{leverrier2022quantum,
  title={Quantum tanner codes},
  author={Leverrier, Anthony and Z{\'e}mor, Gilles},
  booktitle={2022 IEEE 63rd Annual Symposium on Foundations of Computer Science (FOCS)},
  pages={872--883},
  year={2022},
  organization={IEEE}
}

@inproceedings{dinur2023good,
  title={Good quantum LDPC codes with linear time decoders},
  author={Dinur, Irit and Hsieh, Min-Hsiu and Lin, Ting-Chun and Vidick, Thomas},
  booktitle={Proceedings of the 55th Annual ACM Symposium on Theory of Computing},
  pages={905--918},
  year={2023}
}

@article{lin2022good,
  title={Good quantum LDPC codes with linear time decoder from lossless expanders},
  author={Lin, Ting-Chun and Hsieh, Min-Hsiu},
  journal={arXiv preprint arXiv:2203.03581},
  year={2022}
}

@inproceedings{hastings2021fiber,
  title={Fiber bundle codes: breaking the n 1/2 polylog (n) barrier for quantum ldpc codes},
  author={Hastings, Matthew B and Haah, Jeongwan and O'Donnell, Ryan},
  booktitle={Proceedings of the 53rd Annual ACM SIGACT Symposium on Theory of Computing},
  pages={1276--1288},
  year={2021}
}

@book{gottesman1997stabilizer,
  title={Stabilizer codes and quantum error correction},
  author={Gottesman, Daniel},
  year={1997},
  publisher={California Institute of Technology}
}

@article{calderbank1996good,
  title={Good quantum error-correcting codes exist},
  author={Calderbank, A Robert and Shor, Peter W},
  journal={Physical Review A},
  volume={54},
  number={2},
  pages={1098},
  year={1996},
  publisher={APS}
}

@article{Bluvstein2022,
  doi = {10.1038/s41586-022-04592-6},
  year = {2022},
  month = apr,
  publisher = {Springer Science and Business Media {LLC}},
  volume = {604},
  number = {7906},
  pages = {451--456},
  author = {Dolev Bluvstein and Harry Levine and Giulia Semeghini and Tout T. Wang and Sepehr Ebadi and Marcin Kalinowski and Alexander Keesling and Nishad Maskara and Hannes Pichler and Markus Greiner and Vladan Vuleti{\'{c}} and Mikhail D. Lukin},
  title = {A quantum processor based on coherent transport of entangled atom arrays},
  journal = {Nature}
}

@article{Periwal2021,
  doi = {10.1038/s41586-021-04156-0},
  year = {2021},
  month = dec,
  publisher = {Springer Science and Business Media {LLC}},
  volume = {600},
  number = {7890},
  pages = {630--635},
  author = {Avikar Periwal and Eric S. Cooper and Philipp Kunkel and Julian F. Wienand and Emily J. Davis and Monika Schleier-Smith},
  title = {Programmable interactions and emergent geometry in an array~of atom clouds},
  journal = {Nature}
}

@article{Kollar2019,
  doi = {10.1038/s41586-019-1348-3},
  year = {2019},
  month = jul,
  publisher = {Springer Science and Business Media {LLC}},
  volume = {571},
  number = {7763},
  pages = {45--50},
  author = {Alicia J. Koll{\'{a}}r and Mattias Fitzpatrick and Andrew A. Houck},
  title = {Hyperbolic lattices in circuit quantum electrodynamics},
  journal = {Nature}
}

@article{pryadko2014,
  title="{Thresholds for Correcting Errors, Erasures, and Faulty Syndrome Measurements in Degenerate Quantum Codes}",
  author = {Dumer, Ilya and Kovalev, Alexey A. and Pryadko, Leonid P.},
  journal = {Phys. Rev. Lett.},
  volume = {115},
  issue = {5},
  pages = {050502},
  numpages = {5},
  year = {2015},
  month = {Jul},
  publisher = {American Physical Society},
  doi = {10.1103/PhysRevLett.115.050502},
}

@misc{LukinLDPC,
      title={Constant-Overhead Fault-Tolerant Quantum Computation with Reconfigurable Atom Arrays}, 
      author={Qian Xu and J. Pablo Bonilla Ataides and Christopher A. Pattison and Nithin Raveendran and Dolev Bluvstein and Jonathan Wurtz and Bane Vasic and Mikhail D. Lukin and Liang Jiang and Hengyun Zhou},
      year={2023},
      eprint={2308.08648},
      archivePrefix={arXiv},
      primaryClass={quant-ph}
}

@article{gaiotto2015generalized,
  title={Generalized global symmetries},
  author={Gaiotto, Davide and Kapustin, Anton and Seiberg, Nathan and Willett, Brian},
  journal={Journal of High Energy Physics},
  volume={2015},
  number={2},
  pages={1--62},
  year={2015},
  publisher={Springer}
}

@article{mcgreevy2023generalized,
  title={Generalized symmetries in condensed matter},
  author={McGreevy, John},
  journal={Annual Review of Condensed Matter Physics},
  volume={14},
  pages={57--82},
  year={2023},
  publisher={Annual Reviews}
}

@article{hastings2006spectral,
  title={Spectral gap and exponential decay of correlations},
  author={Hastings, Matthew B and Koma, Tohru},
  journal={Communications in mathematical physics},
  volume={265},
  pages={781--804},
  year={2006},
  publisher={Springer}
}

@article{hastings2005quasiadiabatic,
  title={Quasiadiabatic continuation of quantum states: The stability of topological ground-state degeneracy and emergent gauge invariance},
  author={Hastings, Matthew B and Wen, Xiao-Gang},
  journal={Physical review b},
  volume={72},
  number={4},
  pages={045141},
  year={2005},
  publisher={APS}
}

@article{tillich2013quantum,
  title={Quantum LDPC codes with positive rate and minimum distance proportional to the square root of the blocklength},
  author={Tillich, Jean-Pierre and Z{\'e}mor, Gilles},
  journal={IEEE Transactions on Information Theory},
  volume={60},
  number={2},
  pages={1193--1202},
  year={2013},
  publisher={IEEE}
}

@article{chamon2005quantum,
  title={Quantum glassiness in strongly correlated clean systems: An example of topological overprotection},
  author={Chamon, Claudio},
  journal={Physical review letters},
  volume={94},
  number={4},
  pages={040402},
  year={2005},
  publisher={APS}
}

@article{knill1997theory,
title = {Theory of quantum error-correcting codes},
  author = {Knill, Emanuel and Laflamme, Raymond},
  journal = {Phys. Rev. A},
  volume = {55},
  issue = {2},
  pages = {900--911},
  numpages = {0},
  year = {1997},
  month = {Feb},
  publisher = {American Physical Society},
  doi = {10.1103/PhysRevA.55.900},
}

@misc{rakovszky2024product,
    title="{The Physics of (good) LDPC Codes II. Product constructions}",
    author={Tibor Rakovszky and Vedika Khemani},
    year={2024},
    eprint={2402.16831},
    archivePrefix={arXiv},
    primaryClass={quant-ph}
}

@misc{rakovszky2023gauge,
    title="{The Physics of (good) LDPC Codes I. Gauging and dualities}",
    author={Tibor Rakovszky and Vedika Khemani},
    year={2023},
    eprint={2310.16032},
    archivePrefix={arXiv},
    primaryClass={quant-ph}
}

@misc{panteleev2021asymptotically,
    title="{Asymptotically Good Quantum and Locally Testable Classical LDPC Codes}",
    author={Pavel Panteleev and Gleb Kalachev},
    year={2021},
    eprint={2111.03654},
    archivePrefix={arXiv},
    primaryClass={cs.IT}
}

@misc{dinur2022good,
    title="{Good Quantum LDPC Codes with Linear Time Decoders}",
    author={Irit Dinur and Min-Hsiu Hsieh and Ting-Chun Lin and Thomas Vidick},
    year={2022},
    eprint={2206.07750},
    archivePrefix={arXiv},
    primaryClass={quant-ph}
}

@misc{dinur2021locally,
      title="{Locally Testable Codes with constant rate, distance, and locality}", 
      author={Irit Dinur and Shai Evra and Ron Livne and Alexander Lubotzky and Shahar Mozes},
      year={2021},
      eprint={2111.04808},
      archivePrefix={arXiv},
      primaryClass={cs.IT}
}

@inproceedings{Anshu_2023, series={STOC ’23},
   title="{NLTS Hamiltonians from Good Quantum Codes}",
   DOI={10.1145/3564246.3585114},
   booktitle="{Proceedings of the 55th Annual ACM Symposium on Theory of Computing}",
   publisher={ACM},
   author={Anshu, Anurag and Breuckmann, Nikolas P. and Nirkhe, Chinmay},
   year={2023},
   month=jun, collection={STOC ’23} }

@misc{freedman2013quantum,
    title="{Quantum Systems on Non-$k$-Hyperfinite Complexes: A Generalization of Classical Statistical Mechanics on Expander Graphs}",
    author={M. H. Freedman and M. B. Hastings},
    year={2013},
    eprint={1301.1363},
    archivePrefix={arXiv},
    primaryClass={quant-ph}
}

@article{Bravyi_2010,
       author = {{Bravyi}, Sergey and {Hastings}, Matthew B. and {Michalakis}, Spyridon},
        title = "{Topological quantum order: Stability under local perturbations}",
      journal = {Journal of Mathematical Physics},
         year = 2010,
        month = sep,
       volume = {51},
       number = {9},
        pages = {093512-093512},
          doi = {10.1063/1.3490195},
}

@article{Bravyi_2011,
   title="{A Short Proof of Stability of Topological Order under Local Perturbations}",
   volume={307},
   ISSN={1432-0916},
   DOI={10.1007/s00220-011-1346-2},
   number={3},
   journal={Communications in Mathematical Physics},
   publisher={Springer Science and Business Media LLC},
   author={Bravyi, Sergey and Hastings, Matthew B.},
   year={2011},
   month=sep, pages={609–627} }

@article{Michalakis_2013,
   title="{Stability of Frustration-Free Hamiltonians}",
   volume={322},
   ISSN={1432-0916},
   DOI={10.1007/s00220-013-1762-6},
   number={2},
   journal={Communications in Mathematical Physics},
   publisher={Springer Science and Business Media LLC},
   author={Michalakis, Spyridon and Zwolak, Justyna P.},
   year={2013},
   month=jul, pages={277–302} }

@misc{lavasani2024klocal,
    title={On stability of k-local quantum phases of matter},
    author={Ali Lavasani and Michael J. Gullans and Victor V. Albert and Maissam Barkeshli},
    year={2024},
    eprint={2405.19412},
    archivePrefix={arXiv},
    primaryClass={quant-ph}
}

@article{Klich_2010,
   title="{On the stability of topological phases on a lattice}",
   volume={325},
   ISSN={0003-4916},
   DOI={10.1016/j.aop.2010.05.002},
   number={10},
   journal={Annals of Physics},
   publisher={Elsevier BV},
   author={Klich, Israel},
   year={2010},
   month=oct, pages={2120–2131} }

@book{zeng2019quantum,
  title     = {Quantum Information Meets Quantum Matter: From Quantum Entanglement to Topological Phases of Many-Body Systems},
  author    = {Bei Zeng and Xie Chen and Duan-Lu Zhou and Xiao-Gang Wen},
  year      = {2019},
  publisher = {Springer New York, NY},
  series    = {Quantum Science and Technology},
  doi       = {10.1007/978-1-4939-9084-9},
  isbn      = {978-1-4939-9082-5},
  eisbn     = {978-1-4939-9084-9},
  edition   = {1}
}

@article{fradkinshenker1979,
  title="{Phase diagrams of lattice gauge theories with Higgs fields}",
  author = {Fradkin, Eduardo and Shenker, Stephen H.},
  journal = {Phys. Rev. D},
  volume = {19},
  issue = {12},
  pages = {3682--3697},
  numpages = {0},
  year = {1979},
  month = {Jun},
  publisher = {American Physical Society},
  doi = {10.1103/PhysRevD.19.3682},
}

@article{Breuckmann_2016,
   title="{Constructions and Noise Threshold of Hyperbolic Surface Codes}",
   volume={62},
   ISSN={1557-9654},
   DOI={10.1109/tit.2016.2555700},
   number={6},
   journal={IEEE Transactions on Information Theory},
   publisher={Institute of Electrical and Electronics Engineers (IEEE)},
   author={Breuckmann, Nikolas P. and Terhal, Barbara M.},
   year={2016},
   month=jun, pages={3731–3744} }

@misc{kubica2018ungauging,
    title="{Ungauging quantum error-correcting codes}",
    author={Aleksander Kubica and Beni Yoshida},
    year={2018},
    eprint={1805.01836},
    archivePrefix={arXiv},
    primaryClass={quant-ph}
}

@article{Aharonov_1999,
   title="{Fault-Tolerant Quantum Computation with Constant Error Rate}",
   volume={38},
   ISSN={1095-7111},
   DOI={10.1137/s0097539799359385},
   number={4},
   journal={SIAM Journal on Computing},
   publisher={Society for Industrial & Applied Mathematics (SIAM)},
   author={Aharonov, Dorit and Ben-Or, Michael},
   year={2008},
   month=jan, pages={1207–1282} }

@article{bpt2010,
  title="{Tradeoffs for Reliable Quantum Information Storage in 2D Systems}",
  author = {Bravyi, Sergey and Poulin, David and Terhal, Barbara},
  journal = {Phys. Rev. Lett.},
  volume = {104},
  issue = {5},
  pages = {050503},
  numpages = {4},
  year = {2010},
  month = {Feb},
  publisher = {American Physical Society},
  doi = {10.1103/PhysRevLett.104.050503},
}

@article{steane1996multiple,
  title={Multiple-particle interference and quantum error correction},
  author={Steane, Andrew},
  journal={Proceedings of the Royal Society of London. Series A: Mathematical, Physical and Engineering Sciences},
  volume={452},
  number={1954},
  pages={2551--2577},
  year={1996},
  publisher={The Royal Society London},
  doi = {10.1098/rspa.1996.0136}
}

@misc{li2024perturbative,  
    author={Li, Yaodong and Nicholas O'Dea and Vedika Khemani},
    title={Perturbative stability and error correction thresholds of quantum codes},
    year={2024},
    eprint={2406.15757},
    archivePrefix={arXiv},
    primaryClass={quant-ph}
}

@article{deroeck2017exponentially,
  title={An exponentially local spectral flow for possibly non-self-adjoint perturbations of non-interacting quantum spins, inspired by KAM theory},
  author={Roeck, Wojciech De and Sch{\"u}tz, Marius},
  journal={Letters in Mathematical Physics},
  volume={107},
  pages={505--532},
  year={2017},
  publisher={Springer},
  doi={10.1007/s11005-016-0913-z}
}

@book{kato2013perturbation,
  title={Perturbation theory for linear operators},
  author={Kato, Tosio},
  volume={132},
  year={2013},
  publisher={Springer Science \& Business Media}
}

@article{nachtergaele2022quasi3,
   title={Quasi-Locality Bounds for Quantum Lattice Systems. Part II. Perturbations of Frustration-Free Spin Models with Gapped Ground States},
   volume={23},
   ISSN={1424-0661},
   DOI={10.1007/s00023-021-01086-5},
   number={2},
   journal={Annales Henri Poincaré},
   publisher={Springer Science and Business Media LLC},
   author={Nachtergaele, Bruno and Sims, Robert and Young, Amanda},
   year={2021},
   month=aug, pages={393–511} }

@article{frohlich2020lie,
  title     = {Lie-Schwinger block-diagonalization and gapped quantum chains},
  author    = {Fröhlich, J. and Pizzo, A.},
  journal   = {Communications in Mathematical Physics},
  volume    = {375},
  pages     = {2039--2069},
  year      = {2020},
  doi       = {10.1007/s00220-019-03613-2}
}

@article{albanese1989spectrum,
  title     = {On the spectrum of the Heisenberg Hamiltonian},
  author    = {Albanese, Claudio},
  journal   = {Journal of Statistical Physics},
  volume    = {55},
  pages     = {297--309},
  year      = {1989},
  doi       = {10.1007/BF01042601}
}

@article{yarotsky2006ground,
  title     = {Ground states in relatively bounded quantum perturbations of classical lattice systems},
  author    = {Yarotsky, D. A.},
  journal   = {Communications in Mathematical Physics},
  volume    = {261},
  pages     = {799--819},
  year      = {2006},
  doi       = {10.1007/s00220-005-1456-9}
}

@article{hastings2007quantum,
  title={Quantum belief propagation: An algorithm for thermal quantum systems},
  author={Hastings, Matthew B},
  journal={Physical Review B—Condensed Matter and Materials Physics},
  volume={76},
  number={20},
  pages={201102},
  year={2007},
  publisher={APS}
}

@article{capel2023decay,
  title={From decay of correlations to locality and stability of the Gibbs state},
  author={Capel, {\'A}ngela and Moscolari, Massimo and Teufel, Stefan and Wessel, Tom},
  journal={arXiv preprint arXiv:2310.09182},
  year={2023}
}

@article{bravyi2009no,
  title={A no-go theorem for a two-dimensional self-correcting quantum memory based on stabilizer codes},
  author={Bravyi, Sergey and Terhal, Barbara},
  journal={New Journal of Physics},
  volume={11},
  number={4},
  pages={043029},
  year={2009},
  publisher={IOP Publishing}
}

@article{hong2024quantum,
  title={Quantum memory at nonzero temperature in a thermodynamically trivial system},
  author={Hong, Yifan and Guo, Jinkang and Lucas, Andrew},
  journal={arXiv preprint arXiv:2403.10599},
  year={2024}
}

\appendix

\section{Arguments for TQO-II conditions for specific examples}\label{app:TQO}

In Sec.~\ref{subsec:TQOPauli} we claimed that the condition \textbf{TQO-II} is satisfied by a number of relevant examples of qLDPC codes, such as known construction of good qLDPC codes and hypergraph products of classical expander codes. We will argue for this now. 

The codes we want to consider are all CSS codes. We can thus separately consider the parts of the stabilizer group generated by ${X}$ and $Z$ checks, which we denote $\mathcal{G}^{X}$ and $\mathcal{G}^{Z}$ and similarly for $\mathcal{G}(S)$ and $\mathcal{G}_S$. It is then enough to show \textbf{TQO-II} for the $X$ and $Z$ parts of the stabilizer group separately in order for it to to be satisfied for the entire stabilizer group.

We will now argue that the claim holds for $\mathcal{G}^X$ (the argument for $\mathcal{G}^Z$ is analogous). The $X$ checks define some classical code with a parity check matrix $\mathbb{H}_X$. It is more intuitive to consider the question in terms of the transpose code, $\mathbb{H}_X^\text{T}$ (which, in the terminology of Refs.~\cite{rakovszky2023gauge,kubica2018ungauging} is \emph{gauge dual} to the quantum code in question), which exchanges the roles of bits and checks. For this classical code, \textbf{TQO-II} becomes a question of \emph{energy barriers}. $\text{supp}(S)$ becomes, in the transpose code, a set of checks. \textbf{TQO-II} then states that that any excitation contained in this set can be created by flipping a number of bits with a ball of radius $\ell |S|$ where, in defining the distance, we are now only using the checks of $\mathbb{H}_X^\text{T}$. 

\subsection{Good qLDPC codes}

The key observation, made in Refs.~\cite{rakovszky2023gauge,rakovszky2024product} is that in all known constructions of good qLDPC codes, the transpose code $\mathbb{H}_X^\text{T}$ (and similarly, the code $\mathbb{H}_Z^\text{T}$) is \emph{locally testable}, meaning that there exists a constant $\gamma = O(1)$ such that 
\begin{equation}\label{eq:LTC}
|\mathbb{H}_X^\text{T} \mathbf{x}| \geq \gamma |\mathbf{x}|    
\end{equation}
for any bit-string $\mathbf{x}$. In our case, $\mathbb{H}_X^\text{T} \mathbf{x}$ is contained within $\text{supp}(S)$ so we have $|\mathbf{x}| \leq \frac{w_c}{\gamma} |S|$. At the same time, we can assume that $\mathbf{x}$ forms a connected set, otherwise we would deal with each connected component separately. Thus, by definition, it fits within a ball of radius $r = |\mathbf{{x}}|/2$. This gives \textbf{TQO-II} with $\ell = w_v / (2\gamma)$.

\subsection{Hypergraph products of expander codes}

In the case of hypergraph product codes, the classical code $\mathbb{H}_X^\text{T}$ in question is the tensor product of two classical input codes. Such tensor product are not locally testable. Nevertheless, if the two input codes are chosen to be classical expander codes (e.g., good classical LDPC codes) than the resulting product will have sufficiently large energy barriers that it still satisfies \textbf{TQO-II}. The key points is that for a classical expander code, while Eq.~\eqref{eq:LTC} is not true in general, it is true when restricted to bit-strings $\mathbf{x}$ such that $|\mathbf{x}| < \nu N$ ($N$ being the number of bits in the classical code) for some constant $\nu = O(1)$. This property is then satisfied by the product of two such expander codes: i.e., let $\mathbb{H}_A$ and $\mathbb{H}_B$ be the parity check matrices defining the two input codes that make up the product, with expansion parameters $\nu_A,\nu_B$ and $\gamma_A,\gamma_B$. In the product, we can restrict ourselves to $|\mathbb{x}| \leq \text{min}(\nu_A N_A, \nu_B N_B)$; for these, we satisfy Eq.~\eqref{eq:LTC} with $\gamma = \text{min}(\gamma_A,\gamma_B)$ (to see this, note that $\mathbf{x}$ can be decomposed into either its rows or its columns; we can count the number of excited checks in each, which amounts to counting checks in $\mathbb{H}_A$ and $\mathbb{H}_B$, respectively). Now, consider bit-strings with $d_A d_B/2 > |\mathbf{x}| > \nu_A N_A$. We want to argue that in this case where exists some constant $C$ such that $|\mathbb{H}_X^\text{T}| \geq C d = \Theta(N^{1/2})$; in this case, we can choose $\tilde{d}$ in the in the \textbf{TQO-II} condition to be $C d$, such that all allowed sets $S$ still satisfy Eq.~\eqref{eq:LTC}. 

To argue this last point, consider all the non-empty rows $R$ of $\mathbf{x}$ and divide it into two sets, $R = R_< \cup R_>$, where $R_<$ contains rows with $\leq \nu_A N_A$ bits flipped and $R_>$ contains those with more. We now try to construct an $\mathbf{x}$ with energy $|\mathbb{H}_X^\text{T} \mathbf{x}| = o(N^{1/2})$ and find that this is impossible when $|\mathbf{x}| > \nu_A N_A = \Theta(N^{1/2})$. First, due to expansion, it must be that $|\mathbf{x}|_{R_{<}}| = o(N^{1/2})$, otherwise the energy would already be too high. Thus, almost all non-empty rows have many bits flipped. The same is true when we consider a decomposition into columns, rather than rows. This means that the total number of non-empty rows and columns has to be $|R|,|C| = O(N^{1/2})$. But, by the assumption that $|\mathbf{x}| \leq d_A d_B / 2$, most of these rows and columns must contain at least one excitation, bringing the energy up to $O(N^{1/2})$ again. 

\section{Proof of relation between Pauli norm and word norm}
\label{sec:proof of norm relationship}

\subsection{Proof of Proposition~\ref{prop:tripleboundsdouble}}
Consider an operator that is a sum of Paulis, $O = \sum_{p} c_p p$, and consider re-expressing it in terms of an interaction $O$ via the procedure in Sec.~\ref{sec:setup}.
We show that the resulting norm $||O||_{\mu}$ can be bounded above by $|||O|||_{\mu'}$ for a $\mu' = a \mu + b$, where $a,b$ are $\Theta(1)$ constants $a=w_q$ and $b=(w_q\log(2) + \kappa + \log(2 w_c))$. We will define $\kappa' = \kappa + \log(2 w_c)$ for ease.

Before directly bounding $||O||_{\mu}$, it is useful to note a few preliminary inequalities which we will use freely in the following. For a given $p$, there are at most  $2^{|\ext(p)|}$ $\mathbf{S}$ such that $p_{\mathbf{S}}$ is nonvanishing, and all of them satisfy $||p_{\mathbf{S}}|| \leq 1$. Since each qubit touches at most $w_q$ stabilizers, $|\ext(p)| \leq w_q |\mqs(p)|$. Because $\mqs(p)$ is connected, every $x,y \in \mqs(p)$ satisfies $\mathrm{dist}(x,y) \leq |\mqs(p)|$. Given $\alpha \in \ext(p)$, then there must be an $x \in \supp(p)$ such that $\mathrm{dist}(x,\supp(C_\alpha)) \leq |\mqs(p)|$. For any $\alpha$, $\sum_{x: \mathrm{dist}(x,\supp(C_\alpha)) < r} 1 \leq w_c e^{\kappa r}$.
\begin{equation}
\begin{split}
 ||O||_{\mu} &= \sup_{\alpha} \sum_{\mathbf{S}: \alpha \in S} || \sum_{p: \ext(p) = S} c_p p_{\mathbf{S}}|| e^{\mu |S|}
 \\&\leq \sup_{\alpha} \sum_{\mathbf{S}: \alpha \in S} \sum_{p: \ext(p) = S} |c_p| e^{\mu |S|}
\\&= \sup_{\alpha} \sum_{p: \alpha \in \ext(p)} \sum_{\mathbf{S}: S = \ext(p)} |c_p| e^{\mu |\ext(p)|}
\\&\leq \sup_{\alpha} \sum_{p: \alpha \in \ext(p)}  |c_p| 2^{|\ext(p)|}e^{\mu |\ext(p)|}
\\&\leq \sup_{\alpha} \sum_{p} \left(\sum_{x: \mathrm{dist}(x,\supp(C_\alpha)) \leq |\mqs(p)|} I_{x \in \supp(p)} \right) |c_p| e^{w_q (\mu + \log(2)) |\mqs(p)|}
\\&= \sup_{\alpha}  \sum_{r=1}^\infty e^{-\kappa' r} \sum_{p: |\ext(p)|=r} \left(\sum_{x: \mathrm{dist}(x,\supp(C_\alpha)) \leq r} I_{x \in \supp(p)} \right) |c_p| e^{(\kappa' + w_q(\mu + \log(2))) |\mqs(p)|}
\\&= \sup_{\alpha}  \sum_{r=1}^\infty e^{-\kappa' r}  \sum_{x: \mathrm{dist}(x,\supp(C_\alpha)) \leq r}  \sum_{p: |\ext(p)|=r, x \in \supp(p)} |c_p| e^{(\kappa' + w_q(\mu + \log(2))) |\mqs(p)|}
\\&\leq \sup_{\alpha}  \sum_{r=1}^\infty e^{-\kappa' r}  \sum_{x: \mathrm{dist}(x,\supp(C_\alpha)) \leq r}  |||O|||_{\kappa' + w_q(\mu + \log(2))}
\\&\leq  \sum_{r=1}^\infty e^{-\kappa' r}  w_c e^{\kappa r} |||O|||_{\kappa' + w_q(\mu + \log(2))}
\\& \leq \frac{w_c e^{-(\kappa' - \kappa)}}{1-e^{-(\kappa' - \kappa)}} |||O|||_{\kappa' + w_q(\mu + \log(2))}
\\& \leq |||O|||_{w_q\mu + (w_q\log(2) + \kappa + \log(2 w_c))}
\end{split}
\end{equation}

\subsection{Proof of Proposition~\ref{prop:doubleboundstriple}}

Similarly, we can bound $|||O|||_{\mu}$ in terms of $||O||_{\mu'}$ for a $\mu' = \widetilde{a} \mu + \widetilde{b}$ where $\widetilde{a},\widetilde{b}$ are $\theta(1)$ constants $\widetilde{a}=w_c$ and $\widetilde{b}=\log(4w_q)$. 

We always have connected $S$, and we will use that here. For notational convenience, define $\supp(S) := \cup_{\alpha \in S} \supp(C_\alpha)$. We will collect a few useful inequalities to start, which we will use freely in the following. If $p \in \supp(S)$, then $|\mqs(p)| \leq w_c|S|$. There are at most $4^{|\mqs(p)|} \leq 4^{w_c|S|}$ Pauli strings $p$ such that $p \in \supp(S)$. If $x \in \supp(p)$ and $p \in \supp(S)$, then there is an $\alpha \in S$ such that $x \in \supp(C_\alpha)$. If $O = \sum_{p} c_p p$, then $\forall p, \, |c_p| \leq ||O||$. 

We can first expand each term $O_S$ in the Pauli basis as $O_\mathbf{S} = \sum_{p: p \in \supp(S)} c_{\mathbf{S}, p} p$, so that $O = \sum_{p} \sum_{\mathbf{S}: p \in \supp(S)} c_{\mathbf{S}, p} p$. Then
\begin{equation}
\begin{split}
|||O|||_{\mu} &= \sup_x \sum_{p: x \in \supp(p)} \left|\sum_{\mathbf{S}: p \in \supp(S)} c_{\mathbf{S}, p}\right| e^{\mu |\mqs(p)|}
\\&\leq \sup_x \sum_{p: x \in \supp(p)} \sum_{\mathbf{S}: p \in \supp(S)} ||O_{\mathbf{S}}|| e^{\mu |\mqs(p)|}
\\&= \sup_x \sum_{\mathbf{S}} \sum_{p: x \in \supp(p), p \in \supp(S)} ||O_{\mathbf{S}}|| e^{\mu |\mqs(p)|}
\\&\leq \sup_x \sum_{\mathbf{S}} \sum_{p: x \in \supp(p), p \in \supp(S)} ||O_{\mathbf{S}}|| e^{\mu w_c |S|}
\\&\leq \sup_x \sum_{\mathbf{S}} \sum_{p: x \in \supp(p), p \in \supp(S)} (\sum_{\alpha \in \supp_c(x)} I_{\alpha \in S})  ||O_{\mathbf{S}}|| e^{\mu w_c |S|}
\\&= \sup_x \sum_{\alpha \in \supp_c(x)} \sum_{\mathbf{S: \alpha \in S}} \sum_{p: x \in \supp(p), p \in \supp(S)}   ||O_{\mathbf{S}}|| e^{\mu w_c |S|}
\\&\leq \sup_x \sum_{\alpha \in \supp_c(x)} \sum_{\mathbf{S: \alpha \in S}} 4^{w_c |S|}   ||O_{\mathbf{S}}|| e^{\mu w_c |S|}
\\&\leq \sup_x \sum_{\alpha \in \supp_c(x)} ||O||_{w_c \mu + \log(4)}
\\&\leq w_q ||O||_{w_c \mu + \log(4)}
\\&\leq ||O||_{w_c \mu + \log(4w_q)}
\end{split}
\end{equation}

\end{document}